\documentclass[11pt, letterpaper]{amsart}
\usepackage{graphicx, amssymb, color}
\usepackage{amsmath}
\usepackage{graphicx}
\usepackage{enumerate}
\usepackage{dsfont}

\addtolength{\hoffset}{-1.95cm} \addtolength{\textwidth}{3.9cm}
\addtolength{\voffset}{-1.7cm}
\addtolength{\textheight}{2.4cm}

\newtheorem{thm}{Theorem}[section]

\newtheorem{lem}[thm]{Lemma}
\newtheorem{prop}[thm]{Proposition}
\theoremstyle{definition}

\newtheorem{ass}[thm]{Assumption}

\theoremstyle{remark}
\newtheorem{rem}[thm]{Remark}
\newtheorem{exa}[thm]{Example}

\newtheorem{stat}[thm]{Statement}
\numberwithin{equation}{section}
\newcommand{\norm}[1]{\left\Vert#1\right\Vert}

\newcommand{\Real}{\mathbb R}
\newcommand{\reals}{\mathbb R}

\newcommand{\F}{\mathcal{F}}

\newcommand{\prob}{\mathbb{P}}

\newcommand{\expec}{\mathbb{E}}

\newcommand{\fF}{\mathfrak{F}}

\newcommand{\cL}{\mathcal{L}}

\newcommand{\cZ}{\mathcal{Z}}

\newcommand{\pare}[1]{\left(#1\right)}
\newcommand{\bra}[1]{\left[#1\right]}

\newcommand{\qprob}{\mathbb{Q}}

\newcommand{\nada}[1]{}
\newcommand{\trace}[1]{\textrm{Tr}\left(#1\right)}
\newcommand{\cbra}[1]{\left\{#1\right\}}

\newcommand{\sdpos}{\mathbb{S}_{++}^d}
\newcommand{\sd}{\mathbb{S}^d}
\newcommand{\md}{\mathbb{M}^d}

\newcommand{\mdim}[2]{\mathbb{M}^{#1\times #2}}
\newcommand{\Llam}{\Lambda\Lambda'}
\newcommand{\dfn}{\, := \,}

\newcommand{\W}{\mathcal{W}}

\newcommand{\uk}{\underline{\kappa}}
\newcommand{\ok}{\overline{\kappa}}
\newcommand{\eps}{\varepsilon}
\newcommand{\idmat}[1]{\mathds{1}_{#1}}
\newcommand{\ito}{It\^{o}}


\begin{document}

\title{Long term Optimal Investment in Matrix Valued Factor Models}
\date{\today}

\author[]{Scott Robertson}
\address[Scott Robertson]{Department of Mathematical Sciences,
Wean Hall 6113,
Carnegie Mellon University,
Pittsburgh, PA 15213,
USA}
\email{scottrob@andrew.cmu.edu}

\author[]{Hao Xing}
\address[Hao Xing]{Department of Statistics,
London School of Economics and Political Science,
10 Houghton st,
London, WC2A 2AE,
UK}
\email{h.xing@lse.ac.uk}

\begin{abstract}
Long term optimal investment problems are studied in a factor model with matrix valued state variables. Explicit parameter restrictions are obtained under which, for an isoelastic investor, the finite horizon value function and optimal strategy converge to their long-run counterparts as the investment horizon approaches infinity. This convergence also yields portfolio turnpikes for general utilities. By using results on large time behavior of semi-linear partial differential equations, our analysis extends affine models, where the Wishart process drives investment opportunities, to a non-affine setting. Furthermore, in the affine setting, an example is constructed where the value function is not exponentially affine, in contrast to models with vector-valued state variables.
\end{abstract}

\keywords{Portfolio choice, Long-run, Risk sensitive control, Portfolio turnpike, Wishart process}

\maketitle

\section{Introduction}

When investment opportunities are stochastic and the market is incomplete, optimal strategies in portfolio choice problems rarely admit explicit forms. The main source of difficulty is that the hedging demand depends implicitly upon the investment horizon. This difficulty motivates  approximating optimal policies, and one useful approximation occurs by considering the long run limit. This approximation enables tractability for optimal strategies and illuminates the relationship between investor preferences, underlying economic factors and dynamic asset demand.  Long run approximations typically take two forms: first, the \emph{long run optimal investment} or \emph{risk sensitive control} problem seeks to identify growth optimal policies for isoelastic utilities; second, the \emph{portfolio turnpike} problem seeks to connect optimal policies for general utilities with those for a corresponding isoelastic utility.

In this article, long run optimal investment and portfolio turnpike problems are studied in a multi-asset factor model where the state variable takes values in the space of positive definite matrices. Such models generalize the Wishart model of \cite{buraschi2010correlation, Hata-Sekine} (amongst many others), which has been successfully employed in a wide-range of problems in Mathematical Finance. In addition to identifying optimal long run policies and proving turnpike theorems, we are particularly concerned with connecting the finite horizon and long run problems. Here, the goal is to provide conditions when optimal policies for finite horizons converge to their long-run counterparts. Positive results in this direction are necessary to validate long-run analysis. Though heuristics indicate convergence, from a technical standpoint it is not a priori clear that the long-run policy arises as the limit of finite horizon policies.

For isoelastic utilities, the risk sensitive control, or long run optimal investment, problem aims to maximize the expected utility growth rate. This problem has been addressed by many authors : see, for example, \cite{MR1675114, MR1790132, MR1882297, MR1802598, MR1910647, MR1932164, MR1995925, Nagai-03, MR2435642, Guasoni-Robertson, MR2477847}. In these studies,  an ergodic Hamilton Jacobi Bellman (HJB) equation is analyzed. This ergodic equation is typically obtained via a heuristic argument, where one first derives the finite horizon HJB equation, and then conjectures that for long horizons the (reduced) value function decomposes into the sum of a spatial component and a temporal growth component.  Thus, if $v(T,\cdot)$ denotes the finite horizon value function, the long-run value function takes the form $\hat{\lambda}T + \hat{v}(\cdot)$. Then ergodic HJB equation follows by substituting the latter function into the finite horizon HJB equation.

The above heuristic derivation indicates that finite and infinite horizon optimal investment problems are parallel in many aspects. Of primary importance is to connect these two class of problems. As the investment horizon $T$ approaches infinity, does the finite horizon value function $v(T, \cdot)$ converge to its long-run analogue $\hat{\lambda} T + \hat{v}(\cdot)$? If so, in what sense? Does the optimal strategy for the finite horizon problem converge to a long-run limit? As previously mentioned, affirmative answers to these questions verify the intuition underpinning the study of the risk sensitive controls, and provide consistency between the finite horizon and long-run problems.

Moving away from the isoelastic case,  portfolio turnpikes provide another approximation for optimal policies of generic utility functions.  Qualitatively, turnpike theorems state that in a growing market (i.e. one where the riskless asset tends to infinity), as the investment horizon becomes large, the optimal trading strategy of a generic utility converges, over any finite time window, to the optimal trading strategy of its isoelastic counterpart (see Assumption \ref{ass: ratio} for a precise formulation of ``counterpart'').  Turnpike theorems were first investigated in \cite{mossin1968optimal} for utilities with affine risk tolerance, and have since been extensively studied: in particular we mention \cite{leland1972turnpike, ross1974portfolio, hakansson1974convergence, MR736053, MR1172445, MR1629559, MR1805320, dybvig1999portfolio, detemple2010dynamic} where turnpike theorems are proved in differing levels of generality.

For the risk-sensitive control and turnpike approximations, we summarize the relationship between the finite and long horizon problems in Statements \ref{stat: long hor} and \ref{stat: turnpike} respectively. Verification of these statements allows investors with a long horizon to replace their optimal, but implicit, strategies with explicit long-run approximations, which lead to minimal loss of their wealth and utility, while providing considerable tractability. Each of Statements \ref{stat: long hor} and \ref{stat: turnpike} have been proved in \cite{guasoni.al.11} in a factor model with univariate state variable and constant correlation of hedgeable and unhedgeable shocks. The present paper extends these results to a multivariate setting, which allows for stochastic interest rates, volatility, and correlation.  Here, in our main results, Proposition \ref{prop: wishart} and Theorems \ref{thm: power}, \ref{thm: turnpike}, we provide explicit parameter assumptions upon the model coefficients under which both Statements \ref{stat: long hor} and \ref{stat: turnpike} hold.

As previously stated, we focus on a factor model where the state variable is matrix valued. This is motivated by consideration of the Wishart process (cf. \cite{Bru} and Example \ref{ex: wishart} below), which has been applied to option pricing (cf. \cite{Gourieroux06, Gourieroux09, DaFonseca08, DaFonseca10}). Its application to portfolio optimization was pioneered by \cite{buraschi2010correlation}, which highlighted the impact of the multivariate state variable on the hedging demand. In particular, using practical relevant parameters, the numerical example in Section B.3 therein showed that the hedging demand converges to a steady-state level when the investment horizon is longer than $5$ years. Our results confirm this observation. In \cite{Hata-Sekine}, the portfolio optimization problem is solved in the Wishart case via a matrix Riccati differential equation. In \cite{Bauerle-Li}, logarithmic utility is studied, and in \cite{Richter} the indifference pricing is discussed.

In contrast to the aforementioned results, which exploit the affine structure of the Wishart process, our results rely upon large time asymptotic analysis of partial differential equations with quadratic nonlinearities in the gradient. Using techniques developed in \cite{Robertson-Xing}, we are able to consider non-affine models, and hence discuss general matrix-valued state variables as in Section \ref{subsec: state var}. Moreover, stochastic correlation between the state variable and risky assets can be treated, whereas a special (constant) correlation structure is needed to ensure the affine structure. Furthermore, our analysis, when applied to affine models, yields new insight: we construct a counter-example (Example \ref{exa: counter-exa}) to the long-held belief that optimal policies are affine in affine models.  Indeed, the model in this example is affine, but the associated value function is not exponentially affine, hence the optimal policy is not affine. This happens when the dimension of state variable is larger than the number of risky assets, and is due to the noncommuntative property of the matrix product.

The paper is organized as follows: after the model and Statements \ref{stat: long hor} and \ref{stat: turnpike} are introduced in Section \ref{sec: setup}, the main results are presented in Section \ref{sec: converge}. For ease of exposition, the general results are first specified to when the state variable follows a Wishart process in Section \ref{subsec: wishart}. Here, the investment model may or may not be affine depending upon the asset drifts and covariances. Proposition \ref{prop: wishart} provides simple, mild (especially in the case where the investor risk aversion exceeds that of a logarithmic investor) parameter restrictions under which the main results follow. Proposition \ref{prop: wishart_good_case} explicitly identifies the long-run limit policy when the model is further specified to the ``classical'' affine Wishart model considered in \cite{buraschi2010correlation, Hata-Sekine} and Example \ref{exa: counter-exa} constructs the non exponentially affine counter example.  After considering the Wishart case, the main results for general matrix valued state variables are given in Section \ref{subsec: general_case} : see Theorem \ref{thm: power} for the long run limit results and Theorem \ref{thm: turnpike} for the turnpike results.  All proofs are deferred to Appendices \ref{app: A}, \ref{app: C} and \ref{app: B}. Finally, we summarize several notations used throughout the paper:

\begin{itemize}
 \item $\mdim{d}{k}$ denotes the space of $d\times k$ matrices with $\md\dfn \mdim{d}{d}$. For $x\in \mdim{d}{k}$, denote by $x'$ the transpose of $x$. For $x\in \md$, denote by $\trace{x}$ the trace of $x$ and $\norm{x} = \sqrt{\trace{x'x}}$. For $x,y\in \md$, the Kronecker product of $x$ and $y$ is denoted by $x\otimes y\in \mathbb{M}^{d^2}$. Denote by $\idmat{d}$ the identity matrix in $\md$ and $1_d$ the $d$-dimensional vector with each component $1$.
 \item $\sd$ denotes the space of $d\times d$ symmetric matrices, and $\sdpos$ the cone of positive definite matrices. For $x\in\sdpos$, denote by $\sqrt{x}$ the unique element $y\in\sdpos$ such that $y^2 = x$. For $x, y\in \sdpos$, $x\geq y$ when $x-y$ is positive semi-definite.
 \item For $E\subset \mdim{d}{k}$, $F\subset \mdim{m}{n}$, and $\gamma\in(0,1]$, denote by $C^{\ell,\gamma}(E; F)$ the space of $\ell$ times continuously differentiable functions from $E$ to $F$ whose derivatives of order up to $\ell$ is locally H\"{o}lder continuous with exponent $\gamma$.
\end{itemize}

\section{Set up}\label{sec: setup}
Let $(\Omega, (\F_t)_{t\geq 0}, \F, \prob)$ be a filtered probability space with $(\F_t)_{t\geq 0}$ a right-continuous filtration. Following the treatment in \cite{guasoni.al.11}, all $N$-negligible sets (cf. \cite[Definition 1.3.23]{Bitchteler} and \cite{N-N}) are included into $\F_0$. Such a completion of $\F_0$ ensures, for all $T\geq 0$, that  $(\Omega, (\F_t)_{0\leq t\leq T}, \F_T, \prob)$ satisfies the usual conditions.

Consider a financial model with one risk-free asset $S^0$ and $n$ risky assets $(S^1,...,S^n)$. Investment opportunities are driven by a $\sdpos$ valued state variable $X$.  Before writing down the dynamics for the assets, it is necessary to introduce the state variable $X$, as the dynamics for $X$ involve matrix notation.

\subsection{A $\sdpos$-valued state variable}\label{subsec: state var}
Let $B= (B^{ij})_{i,j=1,...d}$ be a $\md$-valued Brownian motion on $(\Omega, (\F_t)_{t\geq 0}, \F, \prob)$. The state variable $X$ has dynamics
\begin{equation}\label{eq: state}
dX_t = b(X_t)dt + F(X_t)dB_t G(X_t) + G(X_t)'dB'_t F(X_t)',\qquad X_0\in\sdpos.
\end{equation}
Here, $b\in C^{1,\gamma}(\sdpos; \sd)$ and $F, G \in C^{2,\gamma}(\sdpos; \md)$ are given functions. We require $b,F,G$ to be such that $X$ possesses a unique strong solution which is non-explosive, i.e.,
\begin{equation*}\label{eq: state_no_expl}
\prob^x\bra{X_t \in \sdpos,\ \forall \ t\geq 0} =1, \qquad \text{ for all  } x\in \sdpos,
\end{equation*}
where $\prob^x$ is the probability such that $X_0= x$ a.s..
To enforce this requirement through restrictions upon $b,F$ and $G$, the results as well as notation of \cite{Mayerhofer-Pfaffel-Stelzer} are used. Namely, define
\begin{equation}\label{eq: f_g_def}
f(x):= FF'(x) \quad \text{and} \quad g(x):= G'G(x),\qquad x\in \sdpos.
\end{equation}
Next, given $b,f,g: \sdpos\rightarrow \sd$ and $\delta \in \Real$, define $H_\delta: \sdpos \rightarrow \Real$ via
\begin{equation}\label{eq: Hdelta_def}
 H_\delta(x; b) := \trace{b\, x^{-1}} - (1+
 \delta)\,\trace{fx^{-1}g x^{-1}} - \trace{f\,x^{-1}}\,
 \trace{g\, x^{-1}}, \qquad x\in\sdpos.
\end{equation}
Here, we have omitted the function arguments from $b,f,g$ but have explicitly identified the drift function $b$ in $H_\delta$, since in the sequel $H_\delta$ will be used with various $b$.

To understand $H_\delta$, note that if $X$ from \eqref{eq: state} has a strong solution satisfying \eqref{eq: state_no_expl} then \ito's formula implies the drift in the dynamics for $\log(\det(X_t)))$ is $H_0(X_t;b)$. Thus, the following assumption ensures that $X$ from \eqref{eq: state} neither explodes in norm nor has degenerate determinate and hence possesses a unique global strong solution $(X_t)_{t\in \Real_+}$ on $\sdpos$, cf. \cite[Theorem 3.4]{Mayerhofer-Pfaffel-Stelzer}.

\begin{ass}\label{ass: wp}\text{}
 \begin{enumerate}[i)]
 \item $G'\otimes F$ and $b$ are locally Lipschitz and of linear growth.
 \item $\inf_{x\in \sdpos} H_0(x; b)>-\infty$.
 \end{enumerate}
\end{ass}

\begin{rem}\label{rem: kron}
A direct calculation, using \cite[Section 4.2]{Horn-Johnson}, shows that
\begin{equation*}
\begin{split}
\|G'\otimes F(x) - G'\otimes F(y)\|^2 &\leq 2\left(\|G(x)\|^2\|F(x)-F(y)\|^2 + \|F(y)\|^2\|G(x)-G(y)\|^2\right),\\
\|G'\otimes F(x)\|^2 &= \|F(x)\|^2\|G(x)\|^2 = \trace{f}\trace{g},\qquad \text{ for } x,y\in\sdpos.
\end{split}
\end{equation*}
Thus, $G'\otimes F$ will be locally Lipschitz and of linear growth once $F$ and $G$ are locally Lipschitz and $\|F(x)\| \|G(x)\|\leq C(1+\|x\|)$ or equivalently if $\trace{f}\trace{g}\leq C(1+\|x\|^2)$.

\end{rem}

Assumption \ref{ass: wp} establishes well-posedness of \eqref{eq: state}. The next assumption implies that the volatility of $X$ is non-degenerate in the interior of $\sdpos$.
\begin{ass}\label{ass: loc_ellip}
For each $x\in\sdpos$, $f(x) > 0$ and $g(x) > 0$.
\end{ass}
Indeed, note that \eqref{eq: state} is short-hand for the following system:
\begin{equation*}
dX^{ij}_t = b_{ij}(X_t)dt + \sum_{k,l=1}^{d} F(X_t)_{ik}dB^{kl}_tG(X_t)_{lj}+\sum_{k,l=1}^d F(X_t)_{jk}dB^{kl}_tG(X_t)_{li}, \qquad i,j = 1,...,d.
\end{equation*}
For $i,j=1,...,d$ define the matrix $a^{ij}:\sdpos\rightarrow\md$ by
\begin{equation*}\label{eq: a_def}
a^{ij}_{kl}(x) \dfn \left(F_{ik}G_{lj} + F_{jk}G_{li}\right)(x),\qquad k,l = 1,...,d,\,
x\in \sdpos.
\end{equation*}
Then the above system takes the form
\begin{equation*}\label{eq: state_alt}
dX_t^{ij} = b_{ij}(X_t)dt + \trace{a^{ij}(X_t)dB'_t}.
\end{equation*}
Then \cite[Lemma 5.1]{Robertson-Xing} shows that under Assumption \ref{ass: loc_ellip}, for any $x\in \sdpos$ and $\theta\in \sd$,
\begin{equation}\label{eq: ellip}
\sum_{i,j,k,l=1}^d \theta_{ij}\trace{a^{ij}(a^{kl})'}(x) \theta_{kl} = 4 \trace{f(x) \theta g(x) \theta}\geq c(x)\norm{\theta}^2,
\end{equation}
for some constant $c(x)>0$.

\begin{exa}\label{ex: wishart}
The primary example to keep in mind is when $X$ is the Wishart process, cf. \cite{Bru}:
\begin{equation}\label{eq: wishart}
dX_t = \pare{L L' + K X_t + X_t K'} dt + \sqrt{X_t} dB_t \Lambda' + \Lambda
dB'_t \sqrt{X_t},
\end{equation}
where $K,L,\Lambda \in \md$. Then both Assumptions \ref{ass: wp} and \ref{ass: loc_ellip} are satisfied when
\begin{equation}\label{eq: wishart_cond}
LL' \geq (d+1)\Llam > 0.
\end{equation}
Indeed, here $b(x) = LL' +  Kx + xK'$, $f(x) = x$, and $g(x) = \Llam$. Using Remark \ref{rem: kron} it follows that $b,G'\otimes F$ are locally Lipschitz and of linear growth. Furthermore, calculation shows that $H_0(x; b) = \trace{(LL' - (d+1)\Llam)x^{-1}} + 2\trace{K}$. Thus, the first inequality in \eqref{eq: wishart_cond} implies $H_0(x;b) \geq 2\trace{K}$ on $\sdpos$ and Assumption \ref{ass: wp} holds. Assumption \ref{ass: loc_ellip} readily follows from the second inequality in \eqref{eq: wishart_cond}.
\end{exa}

\subsection{The financial model}\label{subsec: fin market}

Having fixed notation and established well-posedness for the state variable, we may now define the financial model.  As mentioned above, there is one risk-free asset $S^0$ and $n$ risky assets $(S^1,...,S^n)$ whose dynamics are given by
\begin{align}
\frac{dS^0_t}{S^0_t} &= r(X_t)dt,\qquad S^0_0 = 1, \label{eq: sde safe}\\
\frac{dS^i_t}{S^i_t} &= \left(r(X_t) + \mu_i(X_t)\right) dt +
   \sum_{j=1}^m \sigma_{ij}(X_t)dZ^j_t,\qquad S^i_0 > 0,\qquad i=1,...,n.\label{eq: sde S}
\end{align}
Here, $r\in C^\gamma(\sdpos; \Real)$, $\mu\in C^{1,\gamma}(\sdpos; \Real^n)$, $\sigma\in C^{2,\gamma}(\sdpos; \mdim{n}{m})$ and $Z = (Z^1,...,Z^m)$ is a $\Real^m$ valued Brownian motion. That $\sigma$ is of full rank, as well as the existence of \emph{market price of risk}, i.e., $\nu: \sdpos \rightarrow \Real^n$ such that $\mu = \sigma \sigma' \nu$ on $\sdpos$, are ensured by the following assumption:
\begin{ass}\label{ass: sig_ellip}\text{}
\begin{enumerate}[i)]
\item When $m > n$, $\Sigma(x)\dfn \sigma\sigma'(x) > 0$ for $x\in \sdpos$. Then $\nu:= \Sigma^{-1} \mu$.
\item When $m < n$, $\sigma'\sigma(x) > 0$ for $x\in\sdpos$ and there exists $\nu\in C^{1,\gamma}(\sdpos; \Real^n)$ such that $\mu = \Sigma\nu$.
\item When $m=n$, $\Sigma(x)>0$ for $x\in\sdpos$ and $\sigma = \sqrt{\Sigma}$. Here again, $\nu = \Sigma^{-1}\mu$.
\end{enumerate}
\end{ass}

To allow for potentially stochastic instantaneous correlations between asset returns and the state variable, we define $Z$ in terms of the Brownian motion $B$ which drives $X$ and an independent $\Real^m$ valued Brownian motion $W$.  Specifically, let $C\in C^{2,\gamma}(\sdpos;\mdim{m}{d})$ and $\rho\in C^{2,\gamma}(\sdpos; \reals^d)$ be such that
\begin{ass}\label{ass: rho}
$\rho'\rho(x) CC'(x) \leq \idmat{m}$ for each $x\in \sdpos$.
\end{ass}
Set $D:=\sqrt{\idmat{m}- \rho'\rho C C'} \in C^{2,\gamma}(\sdpos; \sd)$. We then may define $Z$ by
\begin{equation}\label{eq: BM Z}
 Z^j_t:= \sum_{k,l=1}^d\int_0^tC_{jk}(X_u)dB^{kl}_u\rho_l(X_u) + \sum_{k=1}^m\int_0^tD_{jk}(X_u)dW^k_u,\qquad t\geq 0, j=1,...,m.
\end{equation}
By construction, $Z$ is a $m$ dimensional Brownian motion. Furthermore, the instantaneous correlation between $Z$ and $B$ is $d\langle Z^j, B^{kl} \rangle_t= C_{jk}(X_t) \rho_l(X_t) dt$, for $1\leq j\leq m, 1\leq k,l\leq d$. In particular, when $m=d$, $C= \idmat{d}$ and $\rho\in\Real^d$ is constant, $d\langle Z^i, B^{jl} \rangle_t= \delta_{ij} \rho_l dt$, where $\delta_{ij}=1$ for $i=j$ and $0$ otherwise. This particular correlation structure is assumed in \cite{buraschi2010correlation, Hata-Sekine, Bauerle-Li, Richter}. Here, the matrix $C$ introduces general correlation structure and allow its dependence upon the state variable $X$.

\subsection{The optimal investment problem}\label{subsec: inv prob}
Consider an investor whose preference is described by a utility function $U: \Real_+ \rightarrow \Real$ which is strictly increasing, strictly concave, continuously differentiable and satisfies the Inada conditions $U'(0) =\infty$ and $U'(\infty)=0$. In particular, we pay special attention to utilities with constant relative risk aversion (henceforth CRRA) $U(x)= x^p/p$ for $0\neq p<1$.

Starting from an initial capital, this investor trades in the market until a time horizon $T\in \Real_+$. She puts a proportion of her wealth $(\pi_t)_{t\leq T}$ into the risky assets and the remaining into the risk free asset. Given her strategy $\pi$, the price dynamics in \eqref{eq: sde safe} and \eqref{eq: sde S} imply that the wealth process $\mathcal{W}^\pi$ has dynamics
\begin{equation}\label{eq: wealth_dyn}
\frac{d\W^\pi_t}{\W^\pi_t} = (r(X_t) + \pi_t'\Sigma(X_t)\nu(X_t))dt + \pi_t'\sigma(X_t) dZ_t.
\end{equation}
The set of \emph{admissible} strategies are those $\pi$ which are $\mathbb{F}$-adapted and such that $\prob^x\bra{\W^\pi_t >0, \forall t\leq T} =1$ for all $x\in\sdpos$. In \eqref{eq: M_eta_def} below, positive super-martingale $M$ are constructed such that $M \W^\pi$ is a super-martingale for any admissible strategy $\pi$. In the presence of such \emph{super-martingale deflators}, arbitrage is excluded from the model (cf. \cite{Karatzas-Kardaras}).
The investor seeks to maximize the expected utility of her terminal wealth at $T$ by choosing admissible strategies, i.e.,
\begin{equation}\label{eq: op}
 \expec\bra{U(\mathcal{W}^\pi_T)}\rightarrow \text{Max}.
\end{equation}

In the remainder of this section, we will focus on the optimal investment problem for CRRA utilities and derive the associated HJB equation via a heuristic argument. To this end, define the (reduced) value function $v$ via
\begin{equation}\label{eq: power op}
\sup_{\pi \text{ admissible}} \expec\bra{\left. \frac{1}{p}\pare{\W^\pi_T}^p \right|\W_t=w, X_t=x} = \frac{1}{p} w^p e^{v(T-t, x)},\qquad 0\leq t\leq T, w>0, x\in\sdpos.
\end{equation}
Set $L$ as the infinitesimal generator of \eqref{eq: state_alt}:
\begin{equation}\label{eq: state_gen}
 L := \frac12 \sum_{i,j,k,l=1}^d \trace{a^{ij}(a^{kl})'} D^2_{(ij),(kl)} + \sum_{i,j=1}^d b_{ij} D_{(ij)},
\end{equation}
where $D_{(ij)}=\partial_{x^{ij}}$ and $D^2_{(ij),(kl)}=\partial^2_{x^{ij} x^{kl}}$. The standard dynamic programming argument yields the following HJB equation for $v$:
\begin{equation}\label{eq: v eqn}
\begin{split}
 \partial_t v = &L v + \frac12 \sum_{i,j,k,l=1}^d D_{(ij)} v \trace{a^{ij}(a^{kl})'} D_{(kl)} v
 + p\,r
 \\
 &+ \sup_{\pi} \left\{p \pi' \pare{\Sigma \nu + \sum_{i,j=1}^d \sigma C a^{ij} \rho D_{(ij)} v} + \frac12 p(p-1) \pi' \Sigma \pi\right\}, \qquad t>0, x\in \sdpos,\\
 0= &v(0, x), \quad x\in \sdpos.
\end{split}
\end{equation}
The optimizer $\pi$ in the previous equation can be obtained pointwise and is given by
\begin{equation}\label{eq: pi_v_map}
\pi(t,x; v) \dfn \begin{cases} \frac{1}{1-p}\Sigma^{-1}\left(\Sigma\nu +
    \sum_{i,j=1}^d\sigma C a^{ij}\rho D_{(ij)}v\right)(t,x), & m > n \\
  \frac{1}{1-p}\sigma(\sigma'\sigma)^{-1}\left(\sigma'\nu + \sum_{i,j=1}^d
    C a^{ij}\rho D_{(ij)}v\right)(t,x), & m \leq n\end{cases},\qquad t>0,x\in\sdpos.
\end{equation}
Define $q:=p/(p-1)$ as the conjugate of $p$ and the function $\Theta: \sdpos \rightarrow \sdpos$ via
\begin{equation}\label{eq: Theta_def}
\Theta(x)\dfn  \begin{cases} \sigma'\Sigma^{-1}\sigma(x) & m> n\\ \idmat{m}
  & m \leq n\end{cases},\qquad x\in\sdpos.
\end{equation}
Plugging in the formula for $\pi$ in \eqref{eq: pi_v_map} into \eqref{eq: v eqn}, a lengthy calculation yields the following semi-linear Cauchy problem for $v$:
\begin{equation}\label{eq: v_pde}
\begin{split}
v_t(t,x) &= \fF[v](t,x),\qquad 0< t, x\in\sdpos,\\
v(0,x)& = 0,\qquad x\in\sdpos.
\end{split}
\end{equation}
Here, the differential operator $\fF$ is defined as
\begin{equation}\label{eq: x_ops}
\fF \dfn
\frac{1}{2}\sum_{i,j,k,l=1}^{d}A_{(ij),(kl)}D^2_{(ij),(kl)}
+ \sum_{i,j=1}^{d} \bar{b}_{ij} D_{(ij)}+
\frac{1}{2}\sum_{i,j,k,l=1}^{d}D_{(ij)}\bar{A}_{(ij),(kl)}D_{(kl)}
+ V,
\end{equation}
with
\begin{equation}\label{eq: functions}
\begin{split}
A_{(ij),(kl)}(x) &:= \trace{a^{ij}(a^{kl})'}(x),\\
\bar{A}_{(ij),(kl)}(x) &:=\trace{a^{ij}(a^{kl})'}(x)-q\rho'(a^{ij})'C'\Theta
    C a^{kl}\rho(x),\\
\bar{b}_{ij}(x)&:= b_{ij}(x) - q\nu'\sigma C a^{ij}\rho(x),\\
V(x)&:= pr(x) - \frac{1}{2}q\nu'\Sigma\nu(x), \qquad i,j,k,l = 1,...,d, x\in \sdpos.
\end{split}
\end{equation}
Note that $\pi$ in \eqref{eq: pi_v_map} and $\fF$ in \eqref{eq: x_ops} take different forms depending on $m>n$ or $m\leq n$ (with the two forms coinciding at $m=n$), and that using the definition of $L$ from \eqref{eq: state_gen} we have
\begin{equation}\label{eq: x_ops_L}
\fF = L - q\sum_{i,j=1}^d \nu'\sigma C a^{ij} \rho D_{(ij)} + \frac{1}{2}\sum_{i,j,k,l=1}^{d}D_{(ij)}\bar{A}_{(ij),(kl)}D_{(kl)}
+ V.
\end{equation}

In Section \ref{sec: converge} well-posedness of \eqref{eq: v_pde} is proved under appropriate parameter assumptions, and it is shown that the solution $v$, with appropriate growth constraint, to \eqref{eq: v_pde} is the reduced value function in \eqref{eq: power op}. Moreover the optimal strategy for \eqref{eq: power op} is given by
\begin{equation}\label{eq: opt_strat_T}
   \pi^T_t \dfn \pi(T-t, X_t; v), \qquad 0\leq t\leq T,
\end{equation}
for $\pi(\cdot, \cdot; v)$ from \eqref{eq: pi_v_map}.

\subsection{Long Horizon Convergence}

As mentioned in the introduction, this article is concerned with the large time behavior of the optimal investment problem. Such behavior for a CRRA investor is closely related to the ergodic analog of \eqref{eq: v_pde}, given by
\begin{equation}\label{eq: v_ergodic}
\begin{split}
\lambda & = \fF[v](x), \qquad x\in \sdpos.
\end{split}
\end{equation}
A solution to \eqref{eq: v_ergodic} is defined as a pair $(\lambda, v)$ where $\lambda\in\Real$ and $v\in C^2(\sdpos; \Real)$ which satisfy \eqref{eq: v_ergodic}. Since $\fF[v]$ only depends on derivatives of $v$, $v$ in a solution is only determined up to an additive constant.  In particular we are interested in the \emph{smallest} $\lambda$ such that \eqref{eq: v_ergodic} admits a solution.

In the study of long horizon optimal investment and risk sensitive control problems,  when the state variable is in $E\subseteq \Real^d$, under appropriate restrictions \cite{Ichihara, Guasoni-Robertson}, there does exist a smallest $\hat{\lambda}$ such that \eqref{eq: v_ergodic} has a solution $\hat{v}$, such that the candidate reduced long run value function, accounting for the growth rate, is $\hat{\lambda} T + \hat{v}(x)$. The candidate long run optimal strategy is
\begin{equation}\label{eq: opt_strat}
  \hat{\pi}_t := \pi(X_t; \hat{v}), \qquad t\geq 0,
 \end{equation}
 where $\pi(\cdot; \hat{v})$ from \eqref{eq: pi_v_map} with $v$ replaced by $\hat{v}$ which does not have a time argument. Now when the state variable is matrix valued, Proposition \ref{prop: ergodic wellposed} below establishes the existence of such $(\hat{\lambda}, \hat{v})$.

Comparing the finite and long horizon problems, we are interested in proving the following claim:

\begin{stat}[Long Horizon Convergence]\label{stat: long hor}$\,$
\begin{enumerate}[i)]
 \item Define $h(T, x) := v(T,x)-\hat{\lambda}T - \hat{v}(x)$, for $T\geq 0$ and $x\in \sdpos$. Then
     \[
      h(T, \cdot) \rightarrow C \quad \text{ and } \quad \nabla h(T,\cdot) \rightarrow 0 \quad \text{ in } C(\sdpos), \quad \text{ as } T\rightarrow \infty.
     \]
     Here $C$ is a constant, $\nabla=(D_{(ij)})_{1\leq i,j\leq d}$ is the gradient operator, and convergence in $C(\sdpos)$ stands for locally uniformly convergence in $\sdpos$.
 \item As functions of $x\in\sdpos$ the finite horizon strategies converge to the long-run counterpart, i.e.
     \[
      \lim_{T\rightarrow \infty} \pi(T, \cdot; v) = \pi(\cdot; \hat{v}) \quad \text{ in } C(\sdpos).
     \]
 \item Let $\pi^T$ and $\hat{\pi}$ be as in \eqref{eq: opt_strat_T} and \eqref{eq: opt_strat}. Let $\mathcal{W}^T$ and $\hat{\mathcal{W}}$ be the wealth processes employing $\pi^T$ and $\hat{\pi}$ respectively starting with initial capital $w$. Then for all $x\in\sdpos$ and all $t\geq 0$:
 \begin{align}
  &\prob^x-\lim_{T\rightarrow \infty} \sup_{0\leq u\leq t} \left|\frac{\mathcal{W}^T_u}{\hat{\mathcal{W}}_u} -1\right| =0,\label{eq: ratio wealth power}\\
  &\prob^x-\lim_{T\rightarrow \infty} \int_0^t (\pi^T_u- \hat{\pi}_u)' \Sigma(X_u) (\pi^T_u - \hat{\pi}_u) \, du =0.\label{eq: dist strategy}
 \end{align}
 Here $\prob^x-\lim$ stands for convergence in probability $\prob^x$.
 \end{enumerate}
\end{stat}

In Statement \ref{stat: long hor}, i) claims that the reduced value function for the finite horizon problem converges to its infinite horizon counterpart; moreover ii) indicates that the finite horizon optimal strategy also converges, in feedback form, to a myopic long run limit. In addition to these analytic results, iii) states convergence in probabilistic terms: that is, the ratio between optimal wealth processes and distance between optimal strategies, when measured in a finite time window $[0,t]$,  converge to zero in probability. Therefore when Statement \ref{stat: long hor} holds, a CRRA investor with long horizon can slightly modify her optimal strategy $\pi^T$ to $\hat{\pi}$, at the beginning of investment period, and incur a minimal loss of wealth and utility. Indeed, under appropriate parameter assumptions, Statement \ref{stat: long hor} is proved in \cite{guasoni.al.11} when the state variable is $\Real$ valued and has constant correlation with risky assets. In Section \ref{sec: converge} below, we will verify Statement \ref{stat: long hor} in the matrix setting.

\subsection{Turnpike Theorems}

To state turnpike results, we consider two investors: the first one has a general utility function $U$ which satisfies conditions at the beginning of Section \ref{subsec: inv prob}; the second investor has a CRRA utility $U(x) = x^p/p$ for $0\neq p < 1$ \footnote{The logarithmic utility case is excluded here, since \cite[Proposition 2.5]{guasoni.al.11} already shows that turnpike theorems hold in a general semimartingale setting including the current case.}. The two investors are connected through the ratio of their marginal utilities $U'(x)/x^{p-1}$ as in the following assumption:

\begin{ass}\label{ass: ratio}
With $\mathfrak{R}(x)\dfn U'(x)/x^{p-1}$ it follows that
\begin{equation}\label{eq: conv}
\lim_{x\uparrow\infty} \mathfrak{R}(x) = 1.
\end{equation}
\end{ass}

Assumption \ref{ass: ratio} ensures that preferences of the two investors are similar for large wealths.  The next assumption ensures that the market described in Section \ref{subsec: fin market} is growing over time.

\begin{ass}\label{ass: grow}
For $r(x)$ as in \eqref{eq: sde safe} there exits constants $0 < \underbar{r} < \bar{r}$ such that $\underbar{r}\leq r(x) \leq \bar{r}$ for all $x\in\sdpos$.
\end{ass}

In order to present the turnpike results, for the investor with general utility $U$, set $\pi^{1,T}$ as the optimal strategy of \eqref{eq: op} and $\W^{1,T}$ as the associated optimal wealth process starting from initial wealth $w$. We are interested in proving the turnpike theorem:

\begin{stat}[Turnpike Theorem]\label{stat: turnpike}
 For all $x\in\sdpos$ and all $t\geq 0$,
\begin{align}
&\prob^x-\lim_{T\rightarrow\infty} \ \sup_{u\leq t}\left|\frac{\W^{1,T}_u}{\hat{\W}_u} - 1\right| = 0,\label{eq: turnpike wealth}\\
&\prob^x-\lim_{T\rightarrow\infty} \int_0^t \left(\pi^{1,T}_u - \hat{\pi}_u\right)'\Sigma(X_u)\left(\pi^{1,T}_u - \hat{\pi}_u\right) du = 0, \label{eq: turnpike strat}
\end{align}
where $\hat{\pi}$ from \eqref{eq: opt_strat} and $\hat{\W}$ is the wealth process starting from $w$ following $\hat{\pi}$.
\end{stat}

The first convergence above states that the ratio, when measured in an finite time window, of the optimal wealth process for the generic investor and the long run wealth process for the CRRA investor is uniformly close to one in probability as the horizon becomes large. The message behind the second convergence is that, as the horizon becomes long, the optimal investment strategy for the generic utility investor approaches the long-run limit strategy of the CRRA investor.
Such a result is called an ``explicit" turnpike using the terminology of \cite{guasoni.al.11}, where Statement \ref{stat: turnpike} is proved in a factor model with $\Real$ valued state variable and constant correlation. In Section \ref{sec: converge} below, we will extend this result to when the state variable is matrix valued.

\begin{rem}\label{rem: general state space}
 Statements \ref{stat: long hor} and \ref{stat: turnpike} are not specific to models with matrix valued state variables. As mentioned in introduction, the main technique to confirm these statements is  the large time asymptotic analysis of \eqref{eq: v eqn} in \cite{Robertson-Xing}. In particular, a general framework is introduced in \cite[Section 2]{Robertson-Xing}, where convergence results (cf. Theorems 2.9 and 2.11 therein) are obtained for a general state space $E$. The main message therein is, when two ``Lyapunov'' functions $\phi$ and $\psi$ exist and satisfy appropriate assumptions, then the desired convergence results hold. When the state space is specified, assumptions on $\phi$ and $\psi$ are translated to explicit parameter restrictions. In particular, when the state space is $\Real^d$, these parameter restrictions are given in \cite[Section 3.1]{Robertson-Xing}. Therefore, proof of Statements \ref{stat: long hor} and \ref{stat: turnpike} in this case follows from essentially the same line of reasoning as in the matrix case and is, in fact, much more straightforward.
\end{rem}

\section{Main results}\label{sec: converge}

\subsection{The (generalized) Wishart factor model}\label{subsec: wishart}
Before presenting results for the general matrix setting in Section \ref{subsec: state var}, let us highlight the case when $X$ is a Wishart process as in Example \ref{ex: wishart}. We specify the financial model in Section \ref{subsec: fin market} to the following:
\begin{align*}
 & m=d, \quad C(x) = \idmat{d}, \quad D(x) = \sqrt{1 - \rho' \rho(x)}\idmat{d},\\
 & r(x) = r_0+ \trace{r_1 x},  \quad \sigma(x) = \zeta(x) \sqrt{x}, \quad \mu(x) = \zeta(x)x\zeta'(x)\nu(x); \quad \text{ for } x\in \sdpos,
\end{align*}
where $r_0\in \Real$ and $r_1 \in \mathbb{M}^d$. We assume that $\nu\in C^{1,\gamma}(\sdpos; \Real^n)$, $\zeta \in C^{2,\gamma}(\sdpos; \mathbb{M}^{n\times d})$, and $\rho \in C^{2,\gamma}(\sdpos; \Real^d)$ are all bounded functions and $\sup_{x\in \sdpos} \rho'\rho(x) < 1$. When these functions are not constant, the previous model is not affine, in contrast to \cite{buraschi2010correlation, Hata-Sekine, Bauerle-Li, Richter}. For the given $\sigma$, Assumption \ref{ass: sig_ellip} takes the form
\begin{ass}\label{ass: sig_ellip_gW}\text{}
\begin{enumerate}[i)]
\item When $d> n$, $\zeta\zeta'(x) > 0$ for $x\in\sdpos$.
\item When $d < n$, $\zeta'\zeta(x) > 0$ for $x\in\sdpos$.
\item When $d=n$, $\zeta(x) = \zeta'(x) > 0$ for $x\in\sdpos$.
\end{enumerate}
\end{ass}

The following proposition verifies Statements \ref{stat: long hor} and \ref{stat: turnpike} in the current model under explicit parameter restrictions. The proof of Proposition \ref{prop: wishart} is in Appendix \ref{app: B}.

\begin{prop}\label{prop: wishart}
 Let  Assumption \ref{ass: sig_ellip_gW} hold. Assume the following parameter restrictions:
 \begin{enumerate}[i)]
  \item $LL' > (d+1)\Lambda \Lambda'>0$.
  \item When $p<0$, $r_1$ satisfies $r_1 + r_1 \geq 0$ and there exists $\epsilon > 0$ such that either
\begin{equation*}
-p(r_1+r_1')+q \zeta' \nu\nu'\zeta(x)\geq \epsilon \,\idmat{d},\qquad x\in\sdpos;
\end{equation*}
or
\begin{equation*}
(K-q\Lambda \rho \nu' \zeta)(x) + (K-q\Lambda \rho \nu' \zeta)'(x) \leq -\epsilon \,\idmat{d},\qquad x\in \sdpos.
\end{equation*}
  \item When $0<p<1$, there exists $\epsilon>0$ such that
\begin{equation*}
(K-q \Lambda \rho \nu' \zeta)(x) + (K- q \Lambda \rho \nu' \zeta)'(x) \leq -\epsilon \idmat{d},\qquad x\in \sdpos;
\end{equation*}
and
\begin{equation}\label{eq: p>0 cond}
\epsilon^2 > 8(1-q) \sqrt{d} \,\trace{\Lambda \Lambda'} \sup_{x\in \sdpos}\norm{p(r_1 + r_1') - q\zeta' \nu \nu' \zeta(x)}.
\end{equation}
\end{enumerate}
Then, the long-horizon convergence results in Statement \ref{stat: long hor} hold. Additionally when $r_1 = 0$, the turnpike theorems in Statement \ref{stat: turnpike} hold for all utility functions $U$ satisfying Assumption \ref{ass: ratio}.
\end{prop}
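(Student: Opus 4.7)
The plan is to reduce Proposition \ref{prop: wishart} to the general long-horizon convergence result Theorem \ref{thm: power} and the general turnpike result Theorem \ref{thm: turnpike} by verifying their hypotheses in this specialised Wishart setting. Condition (i), $LL' > (d+1)\Llam > 0$, is precisely \eqref{eq: wishart_cond}, so Example \ref{ex: wishart} furnishes Assumptions \ref{ass: wp} and \ref{ass: loc_ellip}. Assumption \ref{ass: sig_ellip} reduces to Assumption \ref{ass: sig_ellip_gW} via $\Sigma(x)=\zeta(x)x\zeta'(x)$, which is nondegenerate because $x\in\sdpos$; the standing boundedness hypotheses together with $\sup_{x}\rho'\rho(x)<1$ give Assumption \ref{ass: rho} and the regularity of all coefficients entering $\fF$.

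Next I would specialise the operator $\fF$ from \eqref{eq: x_ops_L} to the Wishart data. With $f(x)=x$, $g(x)=\Llam$, $C\equiv\idmat{d}$, and $\sigma=\zeta\sqrt{x}$, the drift $\bar b$ from \eqref{eq: functions} becomes linear,
\[
\bar b(x) \;=\; LL' \;+\; \tilde K(x)\, x \;+\; x\, \tilde K(x)',\qquad \tilde K(x)\dfn K - q\Lambda\rho\nu'\zeta(x),
\]
and the potential, after symmetrisation in $x\in\sd$, reads
\[
V(x) \;=\; p r_0 \;+\; \tfrac12\trace{\bigl(p(r_1+r_1') - q\,\zeta'\nu\nu'\zeta(x)\bigr)\,x}.
\]
The quadratic-in-gradient piece $\tfrac12 D \bar A D$ of $\fF$, when tested against a trace-linear trial function $\phi(x)=\alpha\trace{\eta x}$ with $\eta\in\sd$, contributes a term of order $\alpha^2(1-q)\trace{\eta\Llam\eta\, x}$ modulo $x$-bounded corrections arising from the correlation $\rho$.

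The core of the argument is then the construction of the Lyapunov pair required by the general framework imported from \cite{Robertson-Xing}. Near the boundary $\partial\sdpos$, the function $-\log\det x$ is boundary-repulsive since its $\fF$-image involves $H_0(x;\bar b)$, bounded below by condition (i) exactly as in Example \ref{ex: wishart}. At infinity in $\sdpos$, one takes $\phi(x)=\alpha\trace{\eta x}$ for a suitable $\eta\in\sdpos$ and small $\alpha>0$; direct computation yields
\[
\fF[\phi](x) \;=\; \alpha\trace{\eta LL'} \;+\; \alpha\trace{(\eta\tilde K + \tilde K'\eta)\,x} \;+\; c\,\alpha^2(1-q)\trace{\eta\Llam\eta\, x} \;+\; V(x) \;+\; \mathrm{O}(1)
\]
for some constant $c>0$. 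Conditions (ii) and (iii) are designed so that, for an appropriate choice of $\eta$, the coefficient of $x$ in this expression is uniformly negative semidefinite. Under (ii), $p<0$ forces $q\in(0,1)$, so either the potential alone dominates (first alternative, with $\eta$ a vanishingly small multiple of $\idmat{d}$), or the drift contraction $\tilde K + \tilde K' \leq -\eps\idmat{d}$ alone dominates with $\eta$ a small multiple of $\idmat{d}$ absorbing the potential (second alternative).

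The main obstacle lies in case (iii): $0<p<1$ forces $q<0$, so both the quadratic-in-gradient term and the (possibly indefinite) potential push the coefficient of $x$ in the wrong direction, and the only source of negativity is the drift contraction. Solving the resulting matrix Lyapunov inequality
\[
\eta\tilde K + \tilde K'\eta \;+\; c(1-q)\,\eta\Llam\eta \;+\; \tfrac12\bigl(p(r_1+r_1') - q\,\zeta'\nu\nu'\zeta\bigr) \;\leq\; 0
\]
by the ansatz $\eta = c_0\,\idmat{d}$ reduces to a scalar quadratic in $c_0$ whose solvability is exactly the discriminant inequality \eqref{eq: p>0 cond}. With the Lyapunov pair in hand, Theorem \ref{thm: power} delivers Statement \ref{stat: long hor}. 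For the turnpike conclusion, $r_1 = 0$ makes $r\equiv r_0$, so Assumption \ref{ass: grow} is trivially satisfied (with $r_0>0$), and Theorem \ref{thm: turnpike} then yields Statement \ref{stat: turnpike} for every $U$ satisfying Assumption \ref{ass: ratio}.
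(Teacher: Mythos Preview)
Your overall strategy---reduce to Theorems \ref{thm: power} and \ref{thm: turnpike} by checking their hypotheses in the Wishart specialisation---coincides with the paper's. Where you diverge is in the key verification step. The paper does not build Lyapunov functions by hand; instead it checks Assumption \ref{ass: coeff_master_list} item by item. Concretely, it computes $\bar b$ and $V$ as you do, then reads off the scalar constants: $\alpha_1=\sqrt{d}\,\trace{\Llam}$ from $\trace{f}\trace{g}=\trace{x}\trace{\Llam}\le\sqrt{d}\,\trace{\Llam}\,\|x\|$; $\beta_1$ from the largest eigenvalue of $\bar K+\bar K'$; $\gamma_1,\gamma_2$ from two-sided bounds on $V$; and verifies parts A)--C) via $H_\eps(x;\bar b)=\trace{(LL'-(1+d+\eps)\Llam)x^{-1}}+2\trace{\bar K(x)}$, where condition (i) supplies the required positivity. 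The case split ii)/iii) then maps directly onto part 5) of Assumption \ref{ass: coeff_master_list}, and \eqref{eq: p>0 cond} is literally the inequality $\beta_1^2+16\ok\alpha_1\gamma_1>0$ of 5)-ii) with those specific constants.

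Your direct Lyapunov route via the trial function $\phi(x)=\alpha\trace{\eta x}$ is reasonable in spirit, but the claim that the resulting scalar quadratic in $c_0$ has discriminant \emph{exactly} \eqref{eq: p>0 cond} is not obvious: the constants $8$, $\sqrt{d}$, and $\trace{\Llam}$ in \eqref{eq: p>0 cond} arise from the $\|x\|$-based bounds in Assumption \ref{ass: coeff_master_list}, whereas your $\trace{\eta x}$-based computation naturally produces $\trace{x}$-type bounds with different constants (operator norms rather than Frobenius norms of $\Llam$ and the potential matrix). You would need either to match constants carefully or to argue that your condition is implied by \eqref{eq: p>0 cond}. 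The paper's route through Assumption \ref{ass: coeff_master_list} sidesteps this: once the scalar constants are identified, the Lyapunov construction is delegated to \cite{Robertson-Xing}, and \eqref{eq: p>0 cond} drops out as a direct transcription of part 5)-ii).
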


In the previous parameter restrictions, part i) is slightly stronger than the well-posedness condition \eqref{eq: wishart_cond}. The restriction in the $p<0$ case is \emph{mild}.
When  $r_1 + r_1'>0$, it follows that $-p(r_1 + r_1') + q \sigma' \nu \nu' \sigma (x) \geq \epsilon \, \idmat{d}$ for some $\epsilon>0$ since $q\zeta'\nu\nu'\zeta \geq 0$. Thus, part $ii)$ holds. When $r_1+r_1'$ is non-negative but may degenerate, consider a (generalized) Wishart process $\overline{X}$ with dynamics
\[
 d\overline{X}_t = \pare{L L' + \overline{K}(\overline{X}_t) \overline{X_t} + X_t \overline{K}(\overline{X}_t)}' dt + \sqrt{\overline{X}_t} dB_t \Lambda' + \Lambda
dB'_t \sqrt{\overline{X}_t},
\]
where $\overline{K}(x):=K- q \Lambda \rho \nu' \zeta(x)$.\footnote{This SDE admits a unique global strong solution $\overline{X}$. This is because $H_0(x; b)\geq 2 \trace{\overline{K}(x)}$ which is uniformly bounded from below due to the boundedness assumption of $\rho, \nu$, and $\zeta$ on $\sdpos$. Hence the existence follows from \cite[Theorem 3.4]{Mayerhofer-Pfaffel-Stelzer}.} Then  we require $\overline{X}$ is mean-reverting to verify part $ii)$. When $0<p<1$, we require the force of mean-reversion to be sufficiently strong. In this case, \eqref{eq: p>0 cond} is necessary because the potential
\begin{equation*}
V(x) = pr_0 + p\trace{r_1 x} - \frac{1}{2}q\nu'\zeta(x) x \zeta(x)'\nu = pr_0 +\frac{1}{2}\trace{x(p(r_1+r_1') - q\zeta'\nu\nu'\zeta(x))},
\end{equation*}
may not be uniformly bounded from above on $\sdpos$.

\subsubsection{An Explicit Long Run Optimal Strategy and a Counter-Example}\label{subsubsec: wishart_examples} We now focus on the ``classical'' Wishart model where $\rho,\nu$ and $\zeta$ in the previous section are constants taking values in $\Real^d$, $\Real^n$ and $\mdim{n}{d}$ respectively.  Here, it is shown that if the dimension $d$ of the Wishart process is \emph{less than or equal to} $n$, the number of risky assets, then the solution $\hat{v}$ to \eqref{eq: v_ergodic} with minimal $\hat{\lambda}$ is an affine function of $x$: i.e. up to an additive constant, $\hat{v}(x) = \textrm{Tr}(\hat{M}x)$ for a symmetric matrix $\hat{M}$ satisfying the Riccati equation given in \eqref{eq: d_leq_n_Ricatti} below. However, surprisingly, if $d > n$ then $\hat{v}$ may \emph{not} be affine, hence $\hat{\pi}$ in \eqref{eq: opt_strat} is not affine either. This is due to the non-commutative property of matrix product.

To streamline the presentation, we assume that $p<0$ and  $r_1 + r_1' > 0$. Hence Proposition \ref{prop: wishart} follows if $LL' > (d+1)\Lambda\Lambda' > 0$ and the constant matrix $\zeta$ satisfies Assumption \ref{ass: sig_ellip_gW}.  We consider candidate solutions to \eqref{eq: v_ergodic} given by
\begin{equation}\label{eq: cand_hatv}
v(x) = \trace{Mx},\qquad M=M'.\footnote{We can assume $M=M'$ without loss of generality since $x\in \sdpos$  implies  $\trace{Mx} = \trace{M'x} = (1/2)\trace{(M+M')x}$.}
\end{equation}
First we  present the result when $d\leq n$:

\begin{prop}\label{prop: wishart_good_case}
Assume $d\leq n$ and $\rho,\nu,\zeta$ are constant. Let $\zeta$ satisfy Assumption \ref{ass: sig_ellip_gW} and assume $p<0$, $r_1+r_1' > 0$, $LL' > (d+1)\Lambda\Lambda' >0$. Consider the following matrix Riccati equation in $M$:
\begin{equation}\label{eq: d_leq_n_Ricatti}
0 = 2M\Lambda(1-q\rho\rho')\Lambda'M + (K-q\Lambda\rho\nu'\zeta)'M + M(K-q\Lambda\rho\nu'\zeta) + \frac{1}{2}\left(p(r_1+r_1') - q\zeta'\nu\nu'\zeta\right).
\end{equation}
There exists a unique $\hat{M}\in\sd$ solving \eqref{eq: d_leq_n_Ricatti} such that $(\hat{\lambda}, \hat{v})$, with $\hat{\lambda}= \textrm{Tr}(LL'\hat{M})+ p r_0$ and $\hat{v}(x)=\textrm{Tr}(\hat{M}x)$, solves \eqref{eq: v_ergodic} and $\hat{\lambda}$ is the smallest $\lambda$ with accompanying $v$.
\end{prop}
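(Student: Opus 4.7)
The plan is to substitute the affine ansatz $v(x)=\trace{Mx}$ with $M=M'$ into the ergodic HJB equation $\lambda=\fF[v](x)$, reduce it to a scalar equation for $\lambda$ coupled with the matrix Riccati equation \eqref{eq: d_leq_n_Ricatti}, solve this Riccati via classical algebraic Riccati theory, and finally invoke the characterization of the minimal ergodic pair supplied by the general theory underlying Proposition \ref{prop: wishart} to identify the resulting $(\hat\lambda,\hat v)$ as minimal.

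First I would perform the algebraic reduction. Because $v$ is affine in $x$, $D^2v\equiv 0$ and the pure second-order piece of $\fF$ drops out. Using the Wishart data $b(x)=LL'+Kx+xK'$, $f(x)=x$, $g(x)=\Lambda\Lambda'$, $\sigma(x)=\zeta\sqrt{x}$, $C\equiv\idmat{d}$, the identity $\sum_{ijkl}\theta_{ij}\trace{a^{ij}(a^{kl})'}\theta_{kl}=4\trace{x\theta\Lambda\Lambda'\theta}$ from \eqref{eq: ellip}, and an analogous contraction for the $\rho$-corrected quadratic coefficient $\bar A$, I would evaluate the drift term $\sum_{ij}\bar b_{ij}M_{ij}$, the quadratic-gradient term $\tfrac12\sum M_{ij}\bar A_{(ij),(kl)}M_{kl}$, and the potential $V$. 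Each piece splits into a constant in $x$ and a term linear in $x$; summing the linear-in-$x$ contributions and taking the symmetric part (legitimate because $x\in\sd$), the identity $\lambda=\fF[v](x)$ for every $x\in\sdpos$ becomes equivalent to
\[
\lambda=pr_0+\trace{LL'M}\qquad\text{together with}\qquad 2MAM+B'M+MB+Q=0,
\]
where $A=\Lambda(\idmat{d}-q\rho\rho')\Lambda'$, $B=K-q\Lambda\rho\nu'\zeta$, $Q=\tfrac12\bigl(p(r_1+r_1')-q\zeta'\nu\nu'\zeta\bigr)$, i.e.\ exactly \eqref{eq: d_leq_n_Ricatti}.

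Next I would solve the Riccati. Under the hypotheses $p<0$ gives $0<q<1$, so $\rho'\rho<1$ forces $\idmat{d}-q\rho\rho'>0$ and hence $A>0$ (since $\Lambda\Lambda'>0$); $Q<0$ because $p(r_1+r_1')<0$ and $q\zeta'\nu\nu'\zeta\geq 0$; and $B+B'\leq-\epsilon\idmat{d}$ by Proposition \ref{prop: wishart}(ii). The substitution $P:=-M$ recasts \eqref{eq: d_leq_n_Ricatti} into a standard continuous-time algebraic Riccati equation with stable drift $B$, positive definite quadratic coefficient $2A$, and positive definite constant term $-Q$, which admits a unique symmetric positive definite stabilizing solution by classical CARE theory. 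Undoing the substitution produces a unique negative definite $\hat M$ solving \eqref{eq: d_leq_n_Ricatti}, and then $\hat\lambda:=pr_0+\trace{LL'\hat M}$ together with $\hat v(x):=\trace{\hat M x}$ yields a solution of \eqref{eq: v_ergodic}.

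Finally I would argue minimality. The general framework leading to Proposition \ref{prop: wishart} produces a minimal ergodic pair and, through its verification component, characterizes it among solutions of \eqref{eq: v_ergodic} with suitably controlled gradient. Since $\hat v(x)=\trace{\hat M x}$ has constant gradient $\hat M$, it trivially meets such growth hypotheses and must therefore coincide (up to the additive constant indeterminacy in $\hat v$) with the minimal pair; in particular $\hat\lambda$ is the smallest admissible $\lambda$, which also pins down uniqueness of $\hat M$ among Riccati solutions compatible with minimality. The main obstacle is the Riccati step: translating the sign conventions forced by the maximization problem with $p<0$ into a form where the classical CARE existence/uniqueness theorem applies, and verifying that the stability hypothesis Proposition \ref{prop: wishart}(ii) on $K-q\Lambda\rho\nu'\zeta$ is precisely the stabilizability/detectability assumption required. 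Steps 1 and 3 are then essentially bookkeeping.
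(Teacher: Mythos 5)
Your outline (affine ansatz, reduction to the Riccati equation, algebraic Riccati theory, then minimality) is the same as the paper's, but both nontrivial steps have genuine gaps. In the Riccati step you assert $B+B'\leq-\epsilon\idmat{d}$ for $B=K-q\Lambda\rho\nu'\zeta$ ``by Proposition \ref{prop: wishart}(ii)''. That is not among the hypotheses of Proposition \ref{prop: wishart_good_case}: for $p<0$, part (ii) of Proposition \ref{prop: wishart} is an \emph{either/or} condition, and under the standing assumptions $p<0$, $r_1+r_1'>0$ it is the \emph{first} alternative, $-p(r_1+r_1')+q\zeta'\nu\nu'\zeta\geq\epsilon\idmat{d}$, that holds automatically (since $q>0$); no negativity of $K-q\Lambda\rho\nu'\zeta$ is assumed and $K$ is completely unconstrained. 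So ``classical CARE theory with stable drift $B$'' does not apply as you state it. The correct observation --- and the paper's route --- is that existence and uniqueness of the stabilizing solution only need stabilizability of $(\textbf{A},\textbf{B})$ and detectability of $(\textbf{C},\textbf{A})$, which are automatic here because $\textbf{B}=\sqrt{2\Lambda(1-q\rho\rho')\Lambda'}$ and $\textbf{C}=(1/\sqrt{2})\sqrt{-p(r_1+r_1')+q\zeta'\nu\nu'\zeta}$ are invertible; the paper exhibits explicit feedbacks $F_1=\textbf{B}^{-1}(\idmat{d}-\textbf{A})$, $F_2=\textbf{C}^{-1}(\idmat{d}-\textbf{A}')$ and invokes \cite[Lemma 2.4.1]{MR1997753}.

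The minimality argument is also insufficient. The Riccati equation \eqref{eq: d_leq_n_Ricatti} generically has several symmetric solutions $M$, each producing an affine pair $\left(\trace{LL'M}+pr_0,\ \trace{Mx}\right)$ solving \eqref{eq: v_ergodic} with constant gradient, so ``constant gradient satisfies the growth hypotheses'' cannot single out the minimal one, and it says nothing about non-affine competitors either. What distinguishes $\hat{M}$ is the stabilizing property $\textbf{A}+\textbf{B}\textbf{B}'\hat{M}<0$: the paper shows that under $\cL^{\hat{v}}$ the state is again a Wishart process whose linear drift matrix is exactly $\textbf{A}+\textbf{B}\textbf{B}'\hat{M}$, hence mean-reverting, verifies ergodicity with the Lyapunov function $-\underline{c}\log(\det{x})+\overline{c}\norm{x}\eta(\norm{x})$ and \cite[Theorem 6.1.3]{Pinsky}, and then uses the equivalence (from \cite[Proposition 2.3]{Robertson-Xing} and \cite[Theorems 2.1, 2.2]{Ichihara}) between ergodicity of the twisted diffusion and $\hat{\lambda}$ being the smallest admissible $\lambda$. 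Without this ergodicity step your argument cannot rule out that another Riccati solution, or a non-affine solution, attains a smaller $\lambda$.
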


We next present a counter-example in the $d > n$ case showing that solutions $(\hat{\lambda},\hat{v})$ to \eqref{eq: v_ergodic} \emph{cannot} be of the affine form in \eqref{eq: cand_hatv}. However the existence of solutions to \eqref{eq: v_ergodic} is still ensured by Proposition \ref{prop: wishart}.

\begin{exa}\label{exa: counter-exa}
Take $n=1, d=2$ and
\begin{equation}\label{eq: example_coeffs}
\begin{split}
\Lambda &= \idmat{2},\quad L = \ell \idmat{2}\textrm{ for }\ell > \sqrt{3}, \quad K = \idmat{2}, \quad C= \idmat{2},\\
\zeta &= \left(\begin{array}{c c} 1 & 0 \end{array}\right),\quad \nu = \nu\in\Real,\quad \rho = \rho\left(\begin{array}{c c} 1 &1\end{array}\right)'\textrm{ for } 0<2\rho^2 < 1,\\
r_0 &> 0, \quad r_1 = r_1\idmat{2}\textrm{ for } r_1 > 0.
\end{split}
\end{equation}
Consider functions $v$ as in \eqref{eq: cand_hatv}. Writing the generic element $X\in \sdpos$ and the matrix $M$ as
\begin{equation}\label{eq: wishart_x_v_ex}
X= \left(\begin{array}{c c} x & y \\ y & z\end{array}\right),\quad x,z > 0, y^2 < xz,\qquad M = \left(\begin{array}{c c} M_1 & M_2 \\ M_2 & M_3\end{array}\right),
\end{equation}
we have that $\Sigma(X) = \zeta X \zeta' = x>0$ so that Assumption \ref{ass: sig_ellip_gW} holds. Furthermore, $LL' - 3\Lambda\Lambda' = (\ell^2 - 3)\idmat{2} > 0$ and for $p<0$, $-p(r_1+r_1') + q\zeta'\nu\nu'\zeta (x) \geq -2pr_1 \idmat{2} > 0$.  Thus, the assumptions of Proposition \ref{prop: wishart} hold for $p<0$. A lengthy calculation shows that (cf. Lemma \ref{lem: example_calc} in Appendix \ref{app: C})
\begin{equation}\label{eq: F_example}
\begin{split}
\mathfrak{F}[v] &= x\left(2(M_1^2 + M_2^2) - 2q\rho^2(M_1+M_2)^2 + 2M_1 - 2q\rho\nu(M_1+M_2) + pr_1 - \frac{1}{2}q\nu^2\right)\\
&\qquad + y\left(4M_2(M_2+M_3) - 4q\rho^2(M_1+M_2)(M_2+M_3) + 4M_2 - 2q\rho\nu(M_2+M_3)\right)\\
&\qquad + z\left(2(M_2^2+M_3^2) +2M_3 + pr_1\right)\\
&\qquad + \frac{y^2}{x}\left(-2q\rho^2(M_2+M_3)^2\right)\\
&\qquad + pr_0 + \ell^2(M_1 + M_3).
\end{split}
\end{equation}
As can be seen from \eqref{eq: wishart_bar_A} in Lemma \ref{lem: wishart_op_affine_v} below, the problem term $y^2/x$ arises when evaluating $\bar{A}$ from \eqref{eq: functions}, since for $d>n$:
\begin{equation}\label{eq: wishart_counter_ex_Theta}
\sqrt{X}\Theta(X)\sqrt{X} =  X\zeta'(\zeta X\zeta')^{-1}\zeta X = \frac{1}{x}X\left(\begin{array}{c} 1\\ 0\end{array}\right)\left(\begin{array}{c c} 1 & 0\end{array}\right) X = \left(\begin{array}{c c} x & y \\ y & \frac{y^2}{x}\end{array}\right);
\end{equation}
whereas, for arbitrary model coefficients, if $d\leq n$ then $\sqrt{X}\Theta(X)\sqrt{X} = X$.

Thus, if $\mathfrak{F}[v] = \lambda$ for some constant $\lambda$ it must be that each coefficient of $x,y,z, y^2/x$ in \eqref{eq: F_example} is equal to zero. By considering $y^2/x$ it follows that $M_2 + M_3 = 0$. Plugging this into the coefficient of $y$
gives $M_2 = 0$ and hence $M_3=0$.  Then the coefficient of $z$ being zero yields $0 = pr_1$ a contradiction since $r_1 > 0$. Thus, the function $\hat{v}$ cannot be affine.
\end{exa}

\subsection{General State Variables}\label{subsec: general_case}

We now consider the general case when $X$ has dynamics as in \eqref{eq: state} where, in addition to the aforementioned regularity restrictions, the model coefficients satisfy Assumptions \ref{ass: wp} and \ref{ass: loc_ellip}.  As in the previous section, the goal is to provide conditions, based entirely upon the model coefficients, under which Statements \ref{stat: long hor} and \ref{stat: turnpike} hold.

To list the coefficient assumptions, let $f,g$ be as in \eqref{eq: f_g_def}, $\bar{b},V$ as in \eqref{eq: x_ops}, and recall $H_\delta(x;b)$ from \eqref{eq: Hdelta_def}. Assumption \ref{ass: coeff_master_list} below gives a number of restrictions under which the main convergence results hold.  Though the list below is lengthy, it can be readily checked for particular models of interest.

\begin{ass}\label{ass: coeff_master_list}
There exists $n_0>0$ such that the following hold for $\norm{x}\geq n_0$:
\begin{enumerate}[1)]
\item $\bar{b}$ has at most linear growth.
\item There exists $\alpha_1 > 0$ so that $\trace{f(x)}\trace{g(x)} \leq
  \alpha_1\norm{x}$.
\item There exits $\beta_1\in\Real$, $C_1 > 0$ so that $\trace{\bar{b}(x)'x} \leq
  -\beta_1\norm{x}^2 + C_1$.
\item There exists $\gamma_1,\gamma_2\in\Real$ and $C_2 > 0$ so that
  $-\gamma_2\norm{x} - C_2 \leq V(x) \leq -\gamma_1\norm{x} + C_2$. $V(x)$ is uniformly bounded from above for $\norm{x}\leq n_0$.
\item $\max\cbra{\gamma_1,\beta_1} > 0$.  Furthermore
\begin{enumerate}[i)]
\item If $\gamma_1 > 0, \beta_1\leq 0$, then there exist $\alpha_2>0$,
  $C_3\in\Real$ so that $\trace{f(x)xg(x)x} \geq \alpha_2\norm{x}^3 - C_3$.
\item If $\gamma_1< 0, \beta_1 > 0$, then
   $\beta_1^2 + 16\ok\alpha_1\gamma_1 > 0$, where  $\alpha_1$ is from part $2)$, $\ok=1$ when $p<0$, and $\ok=1-q$ when $0<p<1$.
\item If $\gamma_1 \geq 0, \beta_1 > 0$ then no additional restrictions are
  necessary.
\end{enumerate}
\end{enumerate}
There exists $\eps, c_0, c_1 > 0$ such that
\begin{enumerate}[A)]
\item $\inf_{x\in\sdpos} H_\eps(x; \bar{b})
  > -\infty$ (note : here we are using $\bar{b}$ instead of $b$ in \eqref{eq: Hdelta_def}).
\item $\liminf_{\det{x}\downarrow 0}\left(H_{\eps}(x; \bar{b}) + c_0
    \log(\det{x})\right) > -\infty$.
\item $\lim_{\det{x}\downarrow
    0}\left(H_0(x;\bar{b}) + c_1 V(x)\right) = \infty$.
\end{enumerate}
\end{ass}

\begin{rem} When $p<0$ and the interest rate function $r(x)$ is bounded from below on $\sdpos$ (e.g. $r(x)\geq 0$), then $\gamma_1 \geq 0$, hence the complicated part $5-ii)$ in Assumption \ref{ass: coeff_master_list} is never required.
\end{rem}

The parameter restrictions in Assumption \ref{ass: coeff_master_list} have a similar interpretation to those in Proposition \ref{prop: wishart}. Indeed, consider a $\sdpos$-valued diffusion $X$ with dynamics:
 \begin{equation}\label{eq: sde barX}
  d\bar{X}^{ij}_t = \bar{b}_{ij}(\bar{X}_t) dt + \trace{a^{ij}(\bar{X}_t) dW'_t}, \qquad i,j=1,\cdots, d.
 \end{equation}
 Comparing to \eqref{eq: state_alt}, the drift is adjusted to $\bar{b}$. The given regularity assumptions and parts 1) and 2) imply that the coefficients of $\bar{X}$ are locally Lipschitz and have at most linear growth. On the other hand, due to the second inequality in \eqref{eq: ellip}, $H_\delta$ is decreasing in $\delta$. Hence part A) implies $H_0(x; \bar{b})$ is bounded from below on $\sdpos$. As a result, Assumption \ref{ass: wp} specified to $\overline{X}$ from \eqref{eq: sde barX} holds and \cite[Theorem 3.4]{Mayerhofer-Pfaffel-Stelzer} ensures that \eqref{eq: sde barX} has a unique global strong solution.

 In Assumption \ref{ass: coeff_master_list} parts 3) and 4), if $\beta_1 >0$ then $\bar{X}$ is mean-reverting and if $\gamma_1>0$, the potential $V$ decays to $-\infty$ uniformly as $\norm{x}\rightarrow\infty$. Thus, part 5) requires either mean reversion or a decaying potential. If both happen, then no additional parameter restrictions is necessary. However, if mean reversion fails we require uniform ellipticity for $A(x)$ in the direction of $x$. If $\gamma_1<0$, then a delicate relationship in $5-ii)$ between the growth and degeneracy of $A$, mean reversion of $\bar{b}$ and the growth of $V$ is needed.

 Finally, Assumption \ref{ass: coeff_master_list} parts B) and C) are restrictions when the determinant of $\bar{X}$ is small. These two assumptions help to bound the value function $v$ from above and below, ensuring $v$ is finite close to the boundary $\{x\in \sdpos\,:\, \det(x) =0\}$ of the state space.

 From a technical point of view, Assumption \ref{ass: coeff_master_list} helps to construct an upper bound for solutions to \eqref{eq: v_pde}. It is shown in \cite[Section 3]{Robertson-Xing} that well-posedness of  \eqref{eq: v_pde} is established among solutions which are bounded from above (up to an additive constant) by
 \[
  \phi_0(x) := -\underline{c} \log(\det(x)) + \overline{c} \norm{x} \eta(\norm{x}) + C,
 \]
 where $\underline{c}, \overline{c} > 0$ and $C>0$ is chosen so that $\phi_0$ is non-negative on $\sdpos$. Here, $\eta\in C^{\infty}(0,\infty)$ is a cutoff function satisfying $0\leq \eta \leq 1$, $\eta(x) =1$ when $x>n_0+2$ and $\eta(x)=0$ for $x<n_0+1$, for the given $n_0$. Assumption \ref{ass: coeff_master_list} helps to verify the heuristic argument in Section \ref{subsec: inv prob}: \cite[Propositions 2.5, 2.7, and Theorem 3.9]{Robertson-Xing} prove that

 \begin{prop}\label{prop: v wellposed}
  Let Assumptions \ref{ass: loc_ellip}, \ref{ass: sig_ellip}, \ref{ass: rho} and \ref{ass: coeff_master_list} hold. Then there exists a unique solution $v\in C^{1,2}((0,\infty)\times \sdpos) \cap C([0,\infty)\times \sdpos)$ to \eqref{eq: v_pde} such that
  \[
   \sup_{(t,x)\in [0,T]\times \sdpos} (v(t,x)-\phi_0(x))<\infty, \quad \text{ for each } T\geq 0.
  \]
 \end{prop}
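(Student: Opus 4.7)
The plan is to invoke the abstract well-posedness machinery for semi-linear parabolic PDEs on $\sdpos$ established in \cite{Robertson-Xing}---specifically Propositions~2.5 and~2.7 for existence and Theorem~3.9 for uniqueness within the growth class determined by $\phi_0$---and translate our explicit Assumption~\ref{ass: coeff_master_list} into their abstract Lyapunov hypotheses. First I verify the structural ingredients: the coefficients $A_{(ij),(kl)}$, $\bar{A}_{(ij),(kl)}$, $\bar{b}_{ij}$ and $V$ inherit local H\"{o}lder regularity from the hypotheses on $b, F, G, r, \mu, \sigma, \nu, C, \rho$ (Assumption~\ref{ass: sig_ellip} gives smoothness of $\Theta$ and $\Sigma^{-1}$ on compact subsets of $\sdpos$, and Assumption~\ref{ass: rho} gives smoothness of $D$), while uniform ellipticity of $A$ on compact subsets is exactly \eqref{eq: ellip}. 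These ingredients put \eqref{eq: v_pde} in the class where interior Schauder estimates and the parabolic comparison principle apply locally.

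\textbf{The Lyapunov super-solution.} The heart of the matter is to show that $\phi_0(x) = -\underline{c}\log(\det(x)) + \overline{c}\norm{x}\eta(\norm{x}) + C$ defines, for suitable $\underline{c}, \overline{c} > 0$, a time-independent super-solution $\fF[\phi_0] \leq 0$ off a compact set. Computing $\fF[\phi_0]$ decomposes into two regimes. At infinity the cutoff is inert, and the linear piece $\overline{c}\norm{x}$ contributes $\overline{c}\,\trace{\bar{b}(x)x}/\norm{x}$ from the drift together with a leading $\overline{c}^{2}\,\ok\,\trace{f(x)}\trace{g(x)}/\norm{x}$ from the quadratic-in-gradient term of $\fF$; combining with $V \leq -\gamma_1\norm{x} + C_2$ from part~4) and the bounds in parts~1)--3) produces a quadratic inequality in $\overline{c}$ whose solvability is precisely the dichotomy in part~5). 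Near the boundary $\det(x)\downarrow 0$, the $-\underline{c}\log(\det(x))$ piece contributes $-\underline{c}H_0(x;\bar{b})$ from the linearized operator, and once the quadratic-in-gradient nonlinearity is incorporated this becomes $-\underline{c}H_{\eps}(x;\bar{b})$ with $\eps$ proportional to $\underline{c}$; parts~A) and~B) bound this term, while part~C) ensures that $V$ does not overwhelm the drift as the boundary is approached.

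\textbf{Existence, uniqueness, and the main obstacle.} With $\phi_0$ serving as a super-solution, local-in-time existence of $v\in C^{1,2}$ follows from Schauder theory applied on a nested exhaustion of $\sdpos$ by compact subdomains; the a~priori bound $v \leq \phi_0 + \text{const}$ prevents finite-time blow-up, and a diagonal compactness argument extracts a global solution in $C^{1,2}((0,\infty)\times\sdpos)\cap C([0,\infty)\times\sdpos)$. Uniqueness within the class $\{v \leq \phi_0 + \text{const}\}$ follows from a parabolic comparison argument for the semi-linear operator, after linearizing the quadratic gradient nonlinearity via $2ab \leq a^2 + b^2$. I expect the principal obstacle to be the super-solution verification in the mixed regime~5-ii): mean reversion is present ($\beta_1>0$) but the potential grows ($\gamma_1<0$), so the leading coefficient of $\norm{x}$ in $\fF[\phi_0]$ is a quadratic in $\overline{c}$, and the discriminant condition $\beta_1^2 + 16\ok\alpha_1\gamma_1 > 0$ is exactly what permits choosing $\overline{c}$ to make this expression negative while simultaneously keeping $\underline{c}$ large enough to control the boundary contributions.
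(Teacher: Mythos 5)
Your proposal follows essentially the same route as the paper: Proposition \ref{prop: v wellposed} is obtained there by directly invoking \cite[Propositions 2.5, 2.7 and Theorem 3.9]{Robertson-Xing}, with Assumption \ref{ass: coeff_master_list} being exactly the Lyapunov-type hypotheses of that reference and Lemma \ref{lem: barA} supplying the comparison $\uk A\leq\bar{A}\leq\ok A$ that you implicitly use via the constant $\ok$ in the quadratic term. Your additional unpacking of the super-solution computation for $\phi_0$ (the quadratic in $\overline{c}$ with discriminant $\beta_1^2+16\ok\alpha_1\gamma_1$ at infinity, and the $H_\eps(\cdot;\bar b)$ terms near $\det(x)=0$) is a faithful account of what the cited machinery does, so the argument is correct and not genuinely different.
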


 Combining with the following verification result whose proof is deferred to Appendix \ref{app: A}, we obtain that the optimization problem in \eqref{eq: power op} is well-posed for any horizon $T>0$.

 \begin{prop}\label{prop: verification}
  Let Assumptions \ref{ass: loc_ellip}, \ref{ass: sig_ellip}, \ref{ass: rho} and \ref{ass: coeff_master_list} hold. Then for $v$ in Proposition \ref{prop: v wellposed} and any $T>0$, \eqref{eq: power op} holds and  $\pi^T$ from \eqref{eq: opt_strat_T} is the optimal strategy for \eqref{eq: power op}.
 \end{prop}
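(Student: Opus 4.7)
The proof is a verification argument for the HJB equation \eqref{eq: v_pde}, adapted to handle the unbounded value function and matrix-valued state. Fix $T>0$, initial data $(w,x)\in(0,\infty)\times\sdpos$, and an admissible $\pi$ with wealth $\W^\pi$. Set
\begin{equation*}
N^\pi_t \dfn \frac{1}{p}(\W^\pi_t)^p e^{v(T-t,X_t)},\qquad 0\leq t\leq T,
\end{equation*}
and apply \ito's formula using \eqref{eq: wealth_dyn}, \eqref{eq: state}, and \eqref{eq: BM Z}. A calculation parallel to the derivation of \eqref{eq: v eqn} shows that the drift of $N^\pi_t$ equals
\begin{equation*}
p\,N^\pi_t\bra{-\partial_t v + \fF[v] + \tfrac{1}{2}(p-1)\pare{\pi_t - \pi(T-t,X_t;v)}'\Sigma(X_t)\pare{\pi_t - \pi(T-t,X_t;v)}}(T-t,X_t),
\end{equation*}
where $\pi(\cdot,\cdot;v)$ is the feedback map from \eqref{eq: pi_v_map}. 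Since $v$ solves \eqref{eq: v_pde}, the first two terms inside the bracket cancel, leaving a sign-definite quadratic form in $\pi_t - \pi(T-t,X_t;v)$. Checking the signs of $p$ and $N^\pi$ separately, $N^\pi$ is a local supermartingale for every admissible $\pi$, and a local martingale when $\pi = \pi^T$.

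Next, localize by stopping times $\tau_n\uparrow T$ for which $N^\pi_{\cdot\wedge\tau_n}$ is a true supermartingale. Taking expectations under $\prob^x$ and using $v(0,\cdot)=0$ yields
\begin{equation*}
\expec^{w,x}\bra{\tfrac{1}{p}(\W^\pi_{\tau_n})^p e^{v(T-\tau_n,X_{\tau_n})}} \leq \tfrac{1}{p}w^p e^{v(T,x)}.
\end{equation*}
For $p<0$ the integrand is nonpositive and Fatou's lemma immediately gives $\expec^{w,x}[U(\W^\pi_T)]\leq (1/p)w^p e^{v(T,x)}$. For $p\in(0,1)$ the integrand is nonnegative, and I combine the upper bound $v(t,x)\leq \phi_0(x)+C$ from Proposition \ref{prop: v wellposed} with a uniform integrability estimate for $(\W^\pi_{\tau_n})^p e^{\phi_0(X_{\tau_n})}$ (using the sub-martingale property of $\W^\pi$ under a suitably tilted measure, together with parts 1)--4) of Assumption \ref{ass: coeff_master_list} to control moments of $X$) to apply Fatou/dominated convergence. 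In either regime one recovers the ``$\leq$'' direction of \eqref{eq: power op}.

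For the reverse inequality, admissibility of $\pi^T$ follows from $v\in C^{1,2}((0,\infty)\times\sdpos)$ (Proposition \ref{prop: v wellposed}), non-explosion of $X$ (Assumption \ref{ass: wp}), and the explicit feedback \eqref{eq: pi_v_map}: these guarantee $\W^{\pi^T}_t>0$ for all $t\in[0,T]$ $\prob^x$-a.s.  The main obstacle is upgrading the local martingale $N^{\pi^T}$ to a true martingale, so that equality is attained in the inequality above. I plan to do this via a Girsanov change of measure that absorbs both the market-price-of-risk drift and the $\nabla v$-induced drift present in $N^{\pi^T}$; under the new measure $X$ becomes a $\sdpos$-valued diffusion whose drift is a perturbation of the $\bar{b}$ in \eqref{eq: functions}/\eqref{eq: sde barX}. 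The non-explosion conditions A)--C) of Assumption \ref{ass: coeff_master_list}, specialized to this tilted drift via $H_\eps$, ensure the tilted diffusion remains in $\sdpos$, while parts 1)--2) supply the linear-growth and diffusion-size bounds needed to verify a Novikov/Kazamaki condition. Coupling this tilted-state analysis with the two-sided growth bound on $v$ is the delicate step; once it is in hand, taking expectations under the new measure gives $\expec^{w,x}[(1/p)(\W^{\pi^T}_T)^p] = (1/p)w^p e^{v(T,x)}$, completing the verification.
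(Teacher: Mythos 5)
Your plan is a classical verification argument: apply \ito's formula to $N^\pi_t=\tfrac1p(\W^\pi_t)^pe^{v(T-t,X_t)}$, obtain a local supermartingale for every admissible $\pi$, and pass to the limit. The paper takes a genuinely different route: it proves a pathwise identity (Lemma \ref{lem: identity}) expressing both $(\W^{\pi^T}_T/\W^{\pi^T}_t)^p e^{v(0,X_T)-v(T-t,X_t)}$ and the corresponding power of the supermartingale deflator $M^{\eta}$ from \eqref{eq: M_eta_def} as the \emph{same} stochastic exponential $Z^{v,T}$, shows $Z^{v,T}$ is a true $\prob^x$-martingale via well-posedness of the martingale problem for the tilted generator $\cL^{v,T-\cdot}$, and then invokes the primal--dual criterion of \cite[Lemma 5]{Guasoni-Robertson} through the identity \eqref{eq: duality}. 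That detour through the deflators is not cosmetic; it is exactly what repairs the two places where your sketch breaks.

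First, your Fatou step has the two cases reversed, and the $p<0$ case cannot be handled by Fatou at all. For $0<p<1$ the process $N^\pi$ is a \emph{nonnegative} local supermartingale, hence automatically a true supermartingale, so $\expec[N^\pi_T]\le N^\pi_0$ comes for free --- no uniform integrability estimate is needed there. For $p<0$ the process $N^\pi$ is \emph{nonpositive}; applying Fatou to $-N^\pi_{\tau_n}\ge 0$ yields $\expec[N^\pi_T]\ge\limsup_n\expec[N^\pi_{\tau_n}]$, which is the wrong direction. A nonpositive local supermartingale need not satisfy $\expec[N_T]\le N_0$ (its negative is a nonnegative local \emph{sub}martingale, which can lose mass), and since admissibility only requires $\W^\pi>0$ there is no a priori integrability of $(\W^\pi_{\tau_n})^p$ for $p<0$ with which to upgrade the convergence. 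This is precisely why the paper builds the family $M^\eta$ and uses that $M^\eta\W^{\tilde\pi}$ is a supermartingale for every admissible $\tilde\pi$: the dual bound $\bigl(\expec[(M^{\eta}_T)^q]\bigr)^{1/(1-q)}$ dominates every strategy and is attained by $\pi^T$. Some version of this deflator argument is unavoidable when $p<0$. Second, for the attainment step you propose a Novikov/Kazamaki condition for the Girsanov density. That will not go through here: the exponent involves $\nabla v$, and $v$ is only controlled by $\phi_0(x)=-\underline{c}\log(\det(x))+\overline{c}\norm{x}\eta(\norm{x})+C$, so the integrand is unbounded both near $\{\det x=0\}$ and as $\norm{x}\to\infty$. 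The mechanism that actually works is the one the paper uses: $Z^{v,T}$ is a true martingale because the tilted diffusion with generator $\cL^{v,T-\cdot}$ does not explode (well-posedness of the martingale problem, \cite[Lemma 4.1]{Robertson-Xing} combined with \cite[Remark 2.6]{Cheridito-Filipovic-Yor}); this is where parts A)--C) of Assumption \ref{ass: coeff_master_list} genuinely enter, via a Lyapunov-function argument rather than an exponential-moment condition. You correctly flag this as the delicate step, but the tool you name is not the one that closes it.
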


The aforementioned parameter assumptions also ensure the well-posedness of \eqref{eq: v_ergodic}: \cite[Proposition 2.3 and Lemma 5.3]{Robertson-Xing} prove that

\begin{prop}\label{prop: ergodic wellposed}
 Let Assumptions \ref{ass: loc_ellip}, \ref{ass: sig_ellip}, \ref{ass: rho} and \ref{ass: coeff_master_list} hold. There exists $(\hat{\lambda}, \hat{v})$ solving \eqref{eq: v_ergodic} such that $\hat{v}$ is unique (up to an additive constant) and  $\hat{\lambda}$ is  the smallest $\lambda$ such that there exists a corresponding  $v$ solving \eqref{eq: v_ergodic}.
\end{prop}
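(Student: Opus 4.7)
The plan is to construct $(\hat\lambda,\hat v)$ by a Dirichlet eigenvalue approximation on an exhausting sequence of bounded smooth subdomains of $\sdpos$, extract a subsequential limit by combining Lyapunov-type upper and lower bounds with interior Schauder estimates, and then recover minimality of $\hat\lambda$ and uniqueness of $\hat v$ from ergodicity of a twisted diffusion. Concretely, take $D_n \uparrow \sdpos$ smooth and bounded with $D_n \subset \{x\in\sdpos \such \det(x) > 1/n,\ \|x\| < n\}$. On each $D_n$, an eigenpair $(\lambda_n, v_n)$ for $\fF[v_n]=\lambda_n$ with $v_n=0$ on $\partial D_n$ exists by standard Amann--Crandall theory for quasilinear elliptic operators with quadratic gradient growth; after normalizing $v_n(x_0)=0$ at a fixed interior point $x_0$, the task reduces to passing $n\to\infty$.

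The heart of the matter is constructing two Lyapunov functions on $\sdpos$ tuned to $\fF$. An upper Lyapunov function is
\[
\phi_0(x) \dfn -\underline c\log\det(x) + \overline c \|x\|\eta(\|x\|) + C,
\]
as introduced before Proposition \ref{prop: v wellposed}. The verification that $\fF[\phi_0]\leq K$ on $\sdpos$ splits into two regimes. For $\|x\|\to\infty$, Assumption \ref{ass: coeff_master_list} parts 1)--5) control the second-order, drift, and potential contributions of the $\overline c\|x\|\eta(\|x\|)$ piece; for $\det(x)\to 0$, parts A)--C) control the contributions of $-\underline c\log\det(x)$. The delicate point is the quadratic-gradient term $\tfrac12 D\phi_0\cdot\bar A\cdot D\phi_0$ near $\{\det x = 0\}$, which produces a contribution of order $\underline c^{\,2}\,\trace{\bar A\,(x^{-1}\otimes x^{-1})}$; combined with the linear piece of $\fF[\phi_0]$ this reassembles into $-\underline c\, H_\eps(x;\bar b)$ up to lower-order terms. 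The strict positivity of $\eps$ in $\inf_{\sdpos} H_\eps(\cdot;\bar b)>-\infty$ is exactly what is needed to absorb it. A companion coercivity function $\psi$ with $\fF[\psi](x)\to -\infty$ as $x$ approaches the boundary-at-infinity of $\sdpos$ is then built from $\log\det(x)$ and $\|x\|^2$, leveraging the mean-reversion / decaying-potential dichotomy in part 5) together with part C).

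Given these two Lyapunov functions, comparison on $D_n$ yields $v_n \leq \phi_0 + o(1)$ and $v_n \geq -\psi - O(1)$ uniformly on compact subsets of $\sdpos$, $\lambda_n$ is uniformly bounded, and interior Schauder plus Krylov--Safonov estimates upgrade these bounds to $C^{2,\alpha}_{\rm loc}$-compactness of $\{v_n\}$; Arzel\`a--Ascoli then delivers a subsequential limit $(\hat\lambda,\hat v)$ solving \eqref{eq: v_ergodic}. For minimality of $\hat\lambda$, given any other classical solution $(\lambda,v)$, apply \ito's formula to $v$ along the diffusion whose generator is the linearization of $\fF$ at $\nabla v$; $\psi$ is again a Lyapunov function for this twisted diffusion, giving nonexplosion and ergodicity, and a supermartingale comparison then yields $\lambda\geq\hat\lambda$. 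Uniqueness of $\hat v$ up to additive constants follows from a Liouville-type argument for the diffusion twisted by $\nabla\hat v$. The hardest step is verifying $\fF[\phi_0]\leq K$ in the regime $\det x\to 0$: the interplay between the $-\log\det$ piece of $\phi_0$ and the quadratic-gradient contribution of $\fF$ is a genuinely matrix-valued phenomenon, and is the precise reason that $H_\eps$ and Assumption \ref{ass: coeff_master_list} parts A)--C) take the form they do.
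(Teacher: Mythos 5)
Your overall architecture --- exhaustion of $\sdpos$ by bounded subdomains, upper and lower Lyapunov functions built from $-\log\det(x)$ and $\norm{x}$, interior Schauder compactness, and control of the constructed solution via ergodicity of a twisted diffusion --- is precisely the strategy of the reference the paper invokes here (the paper does not reprove this proposition; it cites \cite[Proposition 2.3 and Lemma 5.3]{Robertson-Xing}, whose proof in turn rests on the Ichihara/Kaise--Sheu framework). Your reading of why Assumption \ref{ass: coeff_master_list} parts A)--C) take the form they do --- the quadratic-gradient term applied to $-\underline{c}\log\det(x)$ reassembling, via \eqref{eq: ellip} and Lemma \ref{lem: barA}, into $-\underline{c}\,H_{\delta}(x;\bar b)$ with $\delta$ small for $\underline{c}$ small, absorbed by the $\eps$-margin in $H_\eps$ --- is also correct and is the genuinely matrix-valued part of the construction.

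There is, however, a real gap in your minimality step. You propose, for an arbitrary solution $(\lambda,v)$, to run the diffusion with generator $\cL^{v}$ (the linearization of $\fF$ at $\nabla v$) and to argue that $\psi$ is a Lyapunov function for it, yielding ergodicity and hence $\lambda\geq\hat\lambda$. This cannot work: the set of $\lambda$ admitting solutions is a half-line $[\hat\lambda,\infty)$, and the Ichihara-type dichotomy says that $\cL^{v}$ is ergodic \emph{only} at the minimal level --- for $\lambda>\hat\lambda$ the twisted diffusion is transient, so if your claim held it would show every admissible $\lambda$ is minimal. Moreover, nothing controls $\nabla v$ for an arbitrary solution, so you have no way to verify a Lyapunov inequality for $\cL^{v}$. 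The correct argument runs the other way: use $\psi$ (together with the bound $\hat v\leq\phi_0$ inherited from the approximation) to show that $\cL^{\hat v}$, the diffusion twisted at the \emph{constructed} solution, is ergodic; then for any other solution $(\lambda,v)$, completing the square around $\nabla\hat v$ gives $\lambda-\hat\lambda=\cL^{\hat v}[v-\hat v]+\tfrac12\sum D_{(ij)}(v-\hat v)\bar A_{(ij),(kl)}D_{(kl)}(v-\hat v)\geq\cL^{\hat v}[v-\hat v]$, so $(v-\hat v)(X_t)-(\lambda-\hat\lambda)t$ is a local supermartingale under $\prob^{\hat v,x}$, and recurrence of $X$ under $\prob^{\hat v,x}$ forces $\lambda\geq\hat\lambda$. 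This is exactly the equivalence (``$\cL^{\hat v}$ ergodic iff $\hat\lambda$ minimal'') that the paper invokes from Ichihara in the proof of Proposition \ref{prop: wishart_good_case}. Your Liouville argument for uniqueness of $\hat v$ at level $\hat\lambda$ is correctly aimed at $\cL^{\hat v}$ and is fine, modulo the need for a one-sided bound on $v-\hat v$ to conclude that a superharmonic function of a recurrent diffusion is constant.
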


We are now ready to state our first main result, whose proof is presented in Appendix \ref{app: B}.

\begin{thm}\label{thm: power}
 Let Assumptions \ref{ass: loc_ellip}, \ref{ass: sig_ellip}, \ref{ass: rho} and \ref{ass: coeff_master_list} hold. Then the long horizon results in Statement \ref{stat: long hor} hold.
\end{thm}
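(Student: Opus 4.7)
The plan is to reduce Theorem \ref{thm: power} to the abstract convergence results in Section 2 of \cite{Robertson-Xing} (Theorems 2.9 and 2.11 there), and then translate the analytic PDE convergence into the probabilistic statements in Statement \ref{stat: long hor}(iii). The matrix-valued state variable $\sdpos$ replaces the generic state space $E$ in that framework, so the essential task is to verify the required Lyapunov hypotheses in our setting.

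Step 1 (Lyapunov input). The upper Lyapunov function $\phi_0(x) = -\underline c \log\det(x) + \overline c\,\|x\|\eta(\|x\|) + C$ introduced before Proposition \ref{prop: v wellposed} serves as the growth barrier for $v(T,\cdot)$. The assumption list in Assumption \ref{ass: coeff_master_list} is tailor-made to supply a complementary lower Lyapunov function $\psi$ built from the same two ingredients (a determinant-log term controlled by parts B)--C), and a $\|x\|$-growth term controlled by parts 1)--5)). Parts A)--C) furnish the Khas'minskii-type non-explosion bounds for the Girsanov-tilted diffusion $\overline X$ in \eqref{eq: sde barX}, while parts 3)--5) give the mean-reversion/potential-decay dichotomy ensuring that $v(T,x)-\hat\lambda T-\hat v(x)$ is sandwiched between two time-independent functions of $x$.

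Step 2 (PDE convergence and gradient upgrade). Applying the abstract theorems with $(\phi,\psi)=(\phi_0,\psi)$ yields $h(T,\cdot)=v(T,\cdot)-\hat\lambda T-\hat v(\cdot)\to C$ in $C(\sdpos)$. To upgrade this to convergence of $\nabla h(T,\cdot)$ on compacts, I would invoke interior parabolic Schauder estimates applied to $h$, which solves a linear parabolic equation (with coefficients depending on $\nabla v$, itself bounded on compacts uniformly in $T$ via the a priori gradient bounds that come with the upper-barrier construction in \cite{Robertson-Xing}). Local uniform boundedness of $v$ and $\nabla v$ on every compact $K\Subset\sdpos$, uniform in $T$, combined with $h(T,\cdot)\to C$ in $C(K)$, forces $\nabla h(T,\cdot)\to 0$ in $C(K)$ by a standard Arzel\`a--Ascoli compactness argument. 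This proves Statement \ref{stat: long hor}(i), and the formula \eqref{eq: pi_v_map} — which depends on $v$ only through $\nabla v$ and on coefficients which are $C^{2,\gamma}$ — immediately yields (ii).

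Step 3 (probabilistic convergence). For (iii), write the ratio explicitly:
\begin{equation*}
\log\pare{\mathcal W^T_u/\hat{\mathcal W}_u} = \int_0^u (\pi^T_s-\hat\pi_s)'\sigma(X_s)\,dZ_s + \int_0^u (\pi^T_s-\hat\pi_s)'\Sigma(X_s)\nu(X_s)\,ds - \tfrac12\int_0^u\bra{(\pi^T_s)'\Sigma\pi^T_s-\hat\pi_s'\Sigma\hat\pi_s}(X_s)\,ds.
\end{equation*}
Since $X$ under $\prob^x$ is non-explosive on $[0,t]$, the paths $(X_s)_{s\le t}$ are tight: for every $\eps>0$ there is a compact $K\subset\sdpos$ with $\prob^x[X_s\in K\text{ for all }s\le t]>1-\eps$. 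On this event, (ii) gives $\sup_{s\le t}\|\pi^T_s-\hat\pi_s\|\to0$ deterministically along the path, yielding \eqref{eq: dist strategy}. Dominated convergence on the drift integrals and the BDG inequality on the martingale part then deliver \eqref{eq: ratio wealth power}.

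The main obstacle is Step 2, namely the uniform-in-$T$ gradient control. Locally uniform convergence of $h(T,\cdot)$ alone is not enough for (ii) and (iii); one needs equicontinuity of $\nabla v(T,\cdot)$ on compacts. This is where the full strength of the Lyapunov-based a priori estimates from \cite{Robertson-Xing} is used, and where the matrix setting differs from the $\Real^d$-setting only in bookkeeping (the Kronecker-product gymnastics of Remark \ref{rem: kron} and the ellipticity identity \eqref{eq: ellip}). The tightness step in the probabilistic part is routine once the analytic convergence is in hand.
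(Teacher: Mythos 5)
Your proposal is correct, and for parts i)--ii) it follows essentially the paper's route: both rest on the abstract convergence theorems of \cite{Robertson-Xing} (Theorems 2.9, 2.11 and 3.9 there), whose hypotheses are exactly Assumptions \ref{ass: loc_ellip} and \ref{ass: coeff_master_list} together with Lemma \ref{lem: barA}; the only cosmetic difference is that the cited theorem already delivers $\nabla h(T,\cdot)\to 0$, so your Schauder/Arzel\`a--Ascoli upgrade, while valid, re-derives something available by citation. Where you genuinely diverge is part iii). The paper works under the auxiliary measure $\prob^{\hat{v},x}$: it imports the expectation bound \eqref{eq: hatv_quad_var_lim} to get \eqref{eq: dist strategy}, and for \eqref{eq: ratio wealth power} it uses the pathwise identities of Lemma \ref{lem: identity} to write $\W^T_t/\hat{\W}_t$ as $\bigl(Z^{v,T}_t/Z^{\hat{v}}_t\,e^{h(T,x)-h(T-t,X_t)}\bigr)^{1/p}$, proves $Z^{v,T}_t/Z^{\hat{v}}_t\to 1$ via a quadratic-variation estimate, and then upgrades to uniform-in-time convergence using the num\'{e}raire property of $\hat{\W}$ under $\prob^{\hat{v},x}$, Scheff\'{e}'s lemma, and \cite[Lemma 3.9]{guasoni.al.11}. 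Your route stays entirely under $\prob^x$: compactness of the continuous, non-exploding paths $(X_s)_{s\le t}$ in $\sdpos$ plus the locally uniform convergence of the feedback maps gives \eqref{eq: dist strategy} directly, and the explicit expression for $\log(\W^T_u/\hat{\W}_u)$ together with the Lenglart/BDG control of the martingale part gives \eqref{eq: ratio wealth power}. This is simpler and self-contained (the one point to state carefully is that the convergence in part ii) holds uniformly over time-to-maturity $\tau\in[T-t,T]$, which it does since $T-t\to\infty$). What the paper's heavier argument buys, and yours does not, is the auxiliary convergence \eqref{eq: ratio Z} of the density ratio $Z^{v,T}_t/Z^{\hat{v}}_t$ under $\prob^{\hat{v},x}$ and the $L^1(\prob^{\hat{v},x})$ convergence of the wealth ratio; both are reused in the proof of the turnpike Theorem \ref{thm: turnpike}, so if you adopted your route you would still need to establish \eqref{eq: ratio Z} separately there.
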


To state the portfolio turnpike result, we need to make an additional assumption which is a mild strengthening of Assumption \ref{ass: rho}:

\begin{ass}\label{ass: rho_strong}
For $\rho$ and $C$ in Assumption \ref{ass: rho},  $\rho'\rho CC'(x) < \idmat{m}$ for all $x\in\sdpos$.
\end{ass}

Under the previous assumption, it is possible to construct not only super-martingale deflators (cf. \eqref{eq: M_eta_def} below), but also equivalent local martingale measures $\qprob^T$, for all $T>0$; i.e. $\qprob^T$ is equivalent to $\prob$ on $\F_T$ and $e^{-\int_0^\cdot r(X_u)\,du}S$ is a $\qprob^T$ local martingale on $[0,T]$. This is needed to utilize  duality results in \cite{karatzas.zitkovic.03} to establish the existence of an optimal strategy to \eqref{eq: op} for the generic utility $U$.

We are now ready to state the following turnpike result:

\begin{thm}\label{thm: turnpike}
Let Assumptions \ref{ass: wp}, \ref{ass: loc_ellip}, \ref{ass: sig_ellip}, \ref{ass: coeff_master_list} and \ref{ass: rho_strong} hold.  Then the turnpike theorems in Statement \ref{stat: turnpike} hold.
\end{thm}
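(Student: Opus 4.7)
The plan is to combine the long-horizon convergence for CRRA utility (Theorem \ref{thm: power}) with the duality-based turnpike framework of \cite{guasoni.al.11}, which reduces turnpike results to (a) long-horizon CRRA convergence and (b) a ratio condition on marginal utilities. The matrix structure of the state variable enters only through Theorem \ref{thm: power}; once that convergence is in hand, the remainder of the argument is essentially an abstract semimartingale argument.

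First I would establish the existence of an optimal strategy $\pi^{1,T}$ for the generic utility investor on each finite horizon $T$. This is where the strengthened ellipticity in Assumption \ref{ass: rho_strong} is essential: $\rho'\rho CC'(x) < \idmat{m}$ strictly implies $D$ is uniformly invertible, which allows one to construct, for each $T>0$, an explicit local martingale deflator (of the form alluded to in \eqref{eq: M_eta_def}) that is in fact a true $\prob$-martingale on $[0,T]$, producing the equivalent local martingale measure $\qprob^T$. Boundedness of $r$ (Assumption \ref{ass: grow}) ensures the discounted prices are $\qprob^T$-local martingales. With $\qprob^T$ available, existence of $\pi^{1,T}$ and its optimal wealth $\W^{1,T}$ for any utility $U$ satisfying the Inada conditions follows from the general duality results of \cite{karatzas.zitkovic.03}.

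Second, I would apply the abstract turnpike reduction. The first-order conditions in the dual problem give
\begin{equation*}
U'(\W^{1,T}_T) = y^{1,T} M^{1,T}_T, \qquad (\W^T_T)^{p-1} = y^T \hat{M}^T_T,
\end{equation*}
where $M^{1,T}$ and $\hat{M}^T$ are the dual optimizers for the generic and CRRA problems and $\W^T$ is the CRRA-optimal wealth on $[0,T]$. Rearranging yields
\begin{equation*}
\frac{\W^{1,T}_T}{\W^T_T} = \left(\frac{y^T}{y^{1,T}}\right)^{1/(p-1)} \mathfrak{R}(\W^{1,T}_T)^{1/(p-1)} \left(\frac{\hat{M}^T_T}{M^{1,T}_T}\right)^{1/(p-1)}.
\end{equation*}
Assumption \ref{ass: grow} (positive lower bound on $r$) combined with Theorem \ref{thm: power} implies $\W^T_T \to \infty$ in probability as $T\to\infty$, and a standard argument then yields $\W^{1,T}_T \to \infty$ as well, so Assumption \ref{ass: ratio} forces $\mathfrak{R}(\W^{1,T}_T) \to 1$. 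This in turn forces $M^{1,T}/\hat{M}^T$ to converge to a deterministic constant on compact time windows, which is the key ratio estimate.

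Third, I would translate the convergence of dual processes into the pathwise conclusions \eqref{eq: turnpike wealth} and \eqref{eq: turnpike strat}. The relative wealth $\W^{1,T}/\W^T$ is a $\qprob^T$-martingale, so its convergence to $1$ on compacts propagates from $T$ back to any finite window $[0,t]$ by Doob's inequality. For the strategy convergence \eqref{eq: turnpike strat}, I would use the martingale representation of $M^{1,T}$ and $\hat{M}^T$, whose volatility coefficients directly encode $\pi^{1,T}$ and $\pi^T$ respectively; convergence of the quadratic variation of $\log(M^{1,T}/\hat{M}^T)$ to zero is equivalent to \eqref{eq: turnpike strat} with $\pi^T$ in place of $\hat{\pi}$, after which one substitutes Theorem \ref{thm: power} part (iii) to replace $\pi^T$ by $\hat{\pi}$.

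The main obstacle is the second step: verifying that $\W^{1,T}_T \to \infty$ uniformly enough in $\omega$ to activate Assumption \ref{ass: ratio}, and then turning the resulting ratio estimate into $L^1(\qprob^T)$-type convergence of $M^{1,T}/\hat{M}^T$. In \cite{guasoni.al.11} this is carried out using careful dual comparison estimates and uniform integrability arguments on the wealth processes, which rely only on the abstract duality structure and the growth of the CRRA value function $\hat{\lambda}T + \hat{v}(x)$; since Theorem \ref{thm: power} provides exactly the analogue of the scalar long-horizon convergence needed in that argument, adapting the estimates to our matrix-valued setting is essentially mechanical.
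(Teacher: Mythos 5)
Your overall strategy coincides with the paper's: use Assumption \ref{ass: rho_strong} to construct, for each $T$, an equivalent local martingale measure $\qprob^T$ (the paper loads the market price of risk entirely on the $W$-component via $\theta = D'(DD')^{-1}\sigma'\nu$, which is where the strict inequality $\rho'\rho CC'<\idmat{m}$ is used), combine this with finiteness of the CRRA value function from Propositions \ref{prop: v wellposed} and \ref{prop: verification} to verify the standing assumptions of \cite{guasoni.al.11}, invoke the abstract turnpike there to compare $(\W^{1,T},\pi^{1,T})$ with $(\W^T,\pi^T)$, and finish by substituting $(\hat{\W},\hat{\pi})$ via Statement \ref{stat: long hor} iii). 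Up to that point the proposal is sound and matches the paper.

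The gap is in your third step. The abstract turnpike of \cite[Proposition 2.5]{guasoni.al.11} delivers \eqref{eq: PT_turnpikes} under the \emph{myopic} probability $\prob^{v,T,x}$, the measure with density $Z^{v,T}_T$ under which $\W^T$ has the num\'{e}raire property, and this measure changes with $T$. Convergence in probability is not automatically preserved along a $T$-dependent family of equivalent measures, so a concrete mechanism is needed to return to the fixed measure $\prob^x$. The paper supplies it: the relevant events are $\F_t$-measurable, $d\prob^{v,T,x}/d\prob^{\hat{v},x}|_{\F_t} = Z^{v,T}_t/Z^{\hat{v}}_t$, and \eqref{eq: ratio Z} shows this density ratio tends to $1$ in probability; that step in turn rests on the quadratic-variation estimate \eqref{eq: hatv_quad_var_lim}, which is exactly where the PDE convergence $\nabla h(T,\cdot)\to 0$ from Theorem \ref{thm: power} actually enters the turnpike argument. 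Your proposal instead argues via Doob's inequality for $\W^{1,T}/\W^T$ ``as a $\qprob^T$-martingale'' and via convergence of the dual ratio $M^{1,T}/\hat{M}^T$; besides the slip that the num\'{e}raire property makes $\W^{1,T}/\W^T$ a supermartingale under the myopic measure rather than a martingale under $\qprob^T$, neither device brings the convergence back to $\prob^x$. Without the density-ratio step (or an equivalent uniform-equivalence argument across the family $\{\prob^{v,T,x}\}_T$), the conclusions \eqref{eq: turnpike wealth} and \eqref{eq: turnpike strat} under $\prob^x$ do not follow.
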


\appendix

\section{Proof of Proposition \ref{prop: verification}}\label{app: A}

We first define a class of supermartingale deflators on $[0,T]$ for any $T>0$. Given a $\mathbb{M}^{d}$-valued process $\eta$ with $\int_0^T \norm{\eta_u}^2 du<\infty$ a.s., define $M^\eta$ via (note: for a function $g$ of $\sdpos$ we will write $g_u$ for $g(X_u)$):
\begin{equation}\label{eq: M_eta_def}
\begin{split}
M^{\eta}_t &\dfn
e^{-\int_0^t r_udu}\mathcal{E}\left(\int\left(-\nu_u'\sigma_u C_u dB_u \rho_u  + \trace{\eta_u dB_u'} - \rho_u'\eta_u'C_u'\Theta_uC_u dB_u\rho_u\right)\right)_t\\
&\qquad\times\mathcal{E}\left(-\int\left(\nu_u'\sigma_u D_u +\rho_u'\eta_u'C_u'\Theta_uD_u\right)dW_u\right)_t,\\
&= e^{-\int_0^t r_u du}\mathcal{E}\left(\int \sum_{k,l=1}^d dB^{kl}_u\left(-(C'\sigma'\nu)_k\rho_l + \eta_{kl} - (C'\Theta C\eta \rho)_k\rho_l\right)_u\right)_t\\
&\qquad\times\mathcal{E}\left(-\int \sum_{k=1}^d dW^k_u\left((D'\sigma'\nu)_k + (D'\Theta C\eta\rho)_k\right)_u\right)_t,\qquad t\leq T.
\end{split}
\end{equation}
When $\eta=0$, $e^{\int_0^\cdot r_u du} M^\eta$ defines the \emph{minimal martingale measure}, provided the stochastic exponentials are indeed martingales, see \cite{MR1108430}. Hence we call $\eta$ a \emph{risk premia}. For any admissible strategy $\pi$, $M^\eta \W^\pi$ is a positive super-martingale. Indeed, using \eqref{eq: BM Z}, \eqref{eq: wealth_dyn}, and \eqref{eq: M_eta_def}, the stochastic integration by parts formula shows that the drift of $M^\eta \W^\pi$ has the following integrand (omitting function arguments and time subscripts):
\begin{equation*}
\begin{split}
M^{\eta}\W^{\pi}&\pi' \bra{\Sigma\nu + \sigma C\left(-C'\sigma'\nu\rho' + \eta - C'\Theta C\eta\rho\rho'\right)\rho - \sigma D\left(D'\sigma'\nu + D'\Theta C\eta\rho\right)}\\
= &M^{\eta}\W^{\pi}\pi'\bra{\Sigma\nu - \sigma\left(CC'\rho'\rho + DD'\right)\sigma'\nu + \sigma C \eta\rho - \sigma\left(CC'\rho'\rho + DD'\right)\Theta C\eta\rho},\\
= &M^{\eta}\W^{\pi}\pi'\bra{\sigma C\eta\rho - \sigma\Theta C\eta\rho},\\
= &0,
\end{split}
\end{equation*}
where the second identity follows from $(CC'\rho'\rho + DD')(x)= 1_m$ and the third identity holds due to $\sigma\Theta = \sigma$. Therefore $M^\eta \W^\pi$ is a positive local martingale hence a super-martingale.

Before proving Proposition \ref{prop: verification}, we must introduce some notation. For a fixed $\phi\in C^{(1,2),\gamma}((0,\infty)\times \sdpos, \Real)$, the regularity assumptions on the coefficients and ellipticity assumption in \eqref{eq: ellip} ensure that the \emph{generalized} martingale problem on $\sdpos$ for
\begin{equation}\label{eq: cL_phi_time}
 \cL^{\phi,T-t} := \frac12 \sum_{i,j,k,l=1}^d A_{(ij),(kl)} D_{(ij),(kl)} + \sum_{i,j=1}^d \pare{\bar{b}_{ij} + \sum_{k,l=1}^d \bar{A}_{(ij),(kl)} D_{(kl)} \phi(T-t,\cdot)} D_{(ij)}, \quad t\leq T,
\end{equation}
has a unique solution $\left(\prob^{\phi,T,x}\right)_{x\in\sdpos}$ cf. \cite{Pinsky}.  When $\phi$ does not depend upon $t$ we will write $\cL^{\phi}$ and denote the solution as $\left(\prob^{\phi,x}\right)_{x\in\sdpos}$. The martingale problem for $\cL^{\phi,T-\cdot}$ is \emph{well-posed} if the coordinate process $X$ does not hit the boundary $\sdpos$, $\prob^{\phi,T,x}$-a.s., before $T$ for any $x\in\sdpos$. Similarly, if $\phi$ does not depend upon time, then well-posedness follows if the coordinate process does not hit the boundary in finite time $\prob^{\phi, x}$-a.s. for any $x\in\sdpos$.

For the given $\phi$, define the stochastic exponential
\begin{equation}\label{eq: Z_phi}
\begin{split}
Z^{\phi,T}_t \dfn &\mathcal{E}\left(\int_0^\cdot \sum_{k,l=1}^d dB^{kl}_u \left(-q(C'\sigma'\nu)_k\rho_l + \sum_{i,j=1}^d \left(a^{ij}_{kl}  - q(C'\Theta Ca^{ij}\rho)_k\rho_l\right)D_{(ij)}\phi\right)(T-u,X_u)\right)_t\\
&\times \mathcal{E}\left(\int_0^\cdot \sum_{k=1}^m dW^k_u\left(-q(D'\sigma'\nu)_k - q\sum_{i,j=1}^d(D'\Theta Ca^{ij}\rho)_kD_{(ij)}\phi\right)(T-u,X_u)\right)_t,\qquad t\leq T.
\end{split}
\end{equation}
For $\phi$ not depending upon time, write $Z^\phi$ for $Z^{\phi,T}$ and note that $Z^{\phi}$ is defined for all $t\geq 0$.
Recall from Section \ref{subsec: state var} that Assumption \ref{ass: wp} ensures the well-posedness of \eqref{eq: state}. Hence the martingale problem for $L$ in \eqref{eq: state_gen} is well-posed. Now if the martingale problem for $\cL^{\phi,T-\cdot}$ is also well-posed, it follows from (\cite[Remark 2.6]{Cheridito-Filipovic-Yor}) that the first stochastic exponential on the right hand side of \eqref{eq: Z_phi} is a $\prob^x$-martingale on $[0,T]$. On the other hand, since $X$ and $W$ are $\prob^x$-independent, it follows from \cite[Lemma 4.8]{Karatzas-Kardaras} that $Z^{\phi, T}$ is also a $\prob^x$-martingale on $[0,T]$. Therefore, we may define a new measure $\prob^{\phi, T, x}$ on $\F_T$ via
$ d\prob^{\phi,T,x}/d\prob^x |_{\F_T} = Z^{\phi,T}_T$.
Moreover, Girsanov's theorem yields that $X$ has generator $\cL^{\phi,T-\cdot}$ under $\prob^{\phi,T,x}$. When $\phi$ does not have time argument and the martingale problem for $\cL^\phi$ is well-posed, the same argument as above yields that $Z^\phi$ is a $\prob^x$-martingale on $[0,\infty)$. Hence a new measure $\prob^{\phi, x}$ is defined via
$ d\prob^{\phi,x}/d\prob^x |_{\F_T} = Z^{\phi}_T$, $T\geq 0$.
Note that $\prob^{\phi,x}$ is consistently defined on $\vee_{T\geq 0} \F_T$. Lastly we recall that $\prob^{\phi}$ is \emph{ergodic} if $X$  is recurrent under $\prob^\phi$ and there exists an invariant probability measure.

\begin{rem}\label{rem: numeraire}
 Set $\phi=\hat{v}$ from Proposition \ref{prop: ergodic wellposed}, if $\prob^{\hat{v},x}$ is well defined, then Girsanov's theorem together with \eqref{eq: sde S} and \eqref{eq: Z_phi} yield the following dynamics of $S$ under $\prob^{\hat{v},x}$:
 \[
  \frac{dS^i_t}{S^i_t} = \pare{r(X_t) + \frac{1}{1-p} \pare{\Sigma \nu + \sum_{k,l=1}^d \sigma C a^{kl} \rho D_{(kl)} \hat{v}}(T-t, X_t)} dt + \sum_{j=1}^m \sigma_{ij}(X_t) d\hat{Z}^j_t, \quad i=1, \dots, n,
 \]
 where $\hat{Z}$ is a $\prob^{\hat{v},x}$ Brownian motion. Comparing the previous dynamics with $\hat{\pi}$ in \eqref{eq: opt_strat}, it follows that $\hat{\pi}$ is the optimal strategy for a logarithmic investor under $\prob^{\hat{v},x}$. Hence its associated wealth process $\hat{\W}$ has the \emph{num\'{e}raire} property, i.e., $\W/\hat{\W}$ is a $\prob^{\hat{v},x}$-supermartingale for any admissible wealth process $\W$.
\end{rem}

For the proof of Proposition \ref{prop: verification}, we prepare following two lemmas, whose proofs are postponed until after the proof of Proposition \ref{prop: verification}.

\begin{lem}\label{lem: barA}
 Let Assumptions \ref{ass: loc_ellip}, \ref{ass: sig_ellip} and \ref{ass: rho} hold.  Let $A$ and $\bar{A}$ be as in \eqref{eq: functions}. Set
 \begin{equation}\label{eq: kappa}
  \uk = \left\{\begin{array}{ll}1, & 0<p<1\\ 1-q, & p<0\end{array}\right. \quad \text{ and } \quad \ok = \left\{\begin{array}{ll}1-q, & 0<p<1\\ 1,& p<0\end{array}\right..
 \end{equation}
  Then, for all $x\in \sdpos$ and $\theta\in\sd$:
\begin{equation}\label{eq: A_barA_compare}
   \uk \sum_{i,j,k,l=1}^d \theta_{ij} A_{(ij),(kl)}(x) \theta_{kl} \leq \sum_{i,j,k,l=1}^d \theta_{ij}\overline{A}_{(ij),(kl)}(x)\theta_{kl} \leq \ok \sum_{i,j,k,l=1}^d \theta_{ij} A_{(ij),(kl)}(x)\theta_{kl}.
\end{equation}
\end{lem}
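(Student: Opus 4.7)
The plan is to reduce both quadratic forms to expressions in a single auxiliary matrix. Introduce $V(x) \dfn \sum_{i,j=1}^d \theta_{ij}\, a^{ij}(x) \in \md$, which depends linearly on $\theta \in \sd$. A direct expansion of $A_{(ij),(kl)} = \trace{a^{ij}(a^{kl})'}$ collapses the left-hand quadratic form to a Frobenius norm,
\[
\sum_{i,j,k,l=1}^d \theta_{ij}A_{(ij),(kl)}(x) \theta_{kl} = \trace{V(x) V(x)'} = \norm{V(x)}^2,
\]
and applying the same manipulation to the correction term in $\bar{A}$ from \eqref{eq: functions} yields
\[
\sum_{i,j,k,l=1}^d \theta_{ij}\bar{A}_{(ij),(kl)}(x)\theta_{kl} = \norm{V(x)}^2 - q\,\rho(x)'V(x)'C(x)'\Theta(x)C(x)V(x)\rho(x).
\]
Thus \eqref{eq: A_barA_compare} reduces to proving the scalar estimate $0 \leq \rho'V'C'\Theta CV\rho \leq \norm{V}^2$ at each $x \in \sdpos$, which then combines with the sign of $-q$.

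The lower bound is immediate because $\Theta(x) \geq 0$ in all three regimes of Assumption \ref{ass: sig_ellip}. The upper bound is the only substantive step, and I plan to split it into two operator-norm estimates. First, I would check that $\Theta(x) \leq \idmat{m}$: when $m \leq n$ this is trivial since $\Theta = \idmat{m}$, while for $m > n$ the identity $\sigma'\Sigma^{-1}\sigma \cdot \sigma'\Sigma^{-1}\sigma = \sigma'\Sigma^{-1}\sigma$ shows that $\Theta$ is an orthogonal projection and therefore bounded above by $\idmat{m}$. This yields $\rho'V'C'\Theta CV\rho \leq \norm{CV\rho}^2$. Second, I would estimate $\norm{CV\rho}^2 \leq \norm{CC'}_{\textrm{op}}\,\norm{V\rho}^2 \leq (\rho'\rho)\norm{CC'}_{\textrm{op}}\,\norm{V}^2$, and then invoke Assumption \ref{ass: rho}, which gives $(\rho'\rho)(x)\,\norm{CC'(x)}_{\textrm{op}} \leq 1$ (the $\rho(x) = 0$ case is trivial). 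Chaining these two bounds delivers the required $\norm{V}^2$ ceiling.

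The final step is a short case split on the sign of $q$. For $p \in (0,1)$ we have $q<0$, so the contribution $-q\,\rho'V'C'\Theta CV\rho$ lies in $[0,\,-q\norm{V}^2]$, which produces \eqref{eq: A_barA_compare} with $\uk=1$ and $\ok = 1-q$. For $p<0$ we have $q \in (0,1)$, so the same term lies in $[-q\norm{V}^2,\,0]$, giving $\uk=1-q$ and $\ok=1$. The main obstacle, to the extent there is one, is recognising the projection identity $\Theta^2 = \Theta$ in the $m>n$ regime and correctly repackaging Assumption \ref{ass: rho} as the scalar operator-norm bound $(\rho'\rho)\norm{CC'}_{\textrm{op}} \leq 1$; once those two linear-algebra facts are in place, the rest is bookkeeping.
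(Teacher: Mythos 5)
Your proposal is correct and follows essentially the same route as the paper's proof: both collapse the quadratic forms via the auxiliary matrix $V=\sum_{i,j}\theta_{ij}a^{ij}$ (the paper's $Y$), reduce the claim to $0\leq\rho'V'C'\Theta CV\rho\leq\norm{V}^2$ using $0\leq\Theta\leq\idmat{m}$ together with Assumption \ref{ass: rho}, and finish with the sign split on $q$. The only cosmetic difference is that you package the final estimate as $(\rho'\rho)\norm{CC'}_{\mathrm{op}}\norm{V}^2\leq\norm{V}^2$ whereas the paper writes the equivalent chain $\rho'V'C'CV\rho\leq(\rho'\rho)^{-1}\trace{V\rho\rho'V'}\leq\trace{VV'}$.
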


For $\eta\in C^{(1,2), \gamma}((0,\infty)\times \sdpos, \Real)$, define function $\eta: \sdpos \rightarrow \mathbb{M}^d$ via
\begin{equation}\label{eq: eta_v_map}
\eta_{kl}(t,x; \phi) \dfn \left(\sum_{i,j=1}^d a^{ij}_{kl}D_{(ij)}\phi\right)(t,x),\qquad k,l = 1,...,d,\, t\geq 0,x\in\sdpos.
\end{equation}
Define $\eta^T_t:= \eta(T-t, X_t; \phi)$, $t\in[0,T]$. When $\phi$ is $v$ from Proposition \ref{prop: v wellposed} (resp. $\hat{v}$ from Proposition \ref{prop: ergodic wellposed}), then $\eta(T-\cdot, X_\cdot; v)$ (resp. $\eta(X_\cdot; \hat{v})$) is expected to be the optimal risk premium for the dual problem of \eqref{eq: power op} (resp. its long run analogue).
The following result is the key to prove Proposition \ref{prop: verification}.

\begin{lem}\label{lem: identity}
 Let $\phi \in C^{(1,2), \gamma}((0,\infty) \times \sdpos, \Real)$ satisfy $\phi_t = \fF[\phi]$ on $(0,\infty)\times \sdpos$ where $\fF$ is defined in \eqref{eq: x_ops}. For any $T\geq 0$, let $\pi_t= \pi(T-t, X_t; \phi)$, $\eta_t = \eta(T-t, X_t; \phi)$, for $t\in[0,T]$, and let $\W^\pi$ and $M^\eta$ be the associated wealth process and super-martingale deflator respectively. Then, the following identities hold:
 \begin{equation}\label{eq: prop_1_1}
\begin{split}
p\log\left(\W^\pi_T\right) - p\log\left(\W^\pi_t\right) + \phi(0,X_T) - \phi(T-t,X_t) &=
\log\pare{Z^{\phi,T}_T} - \log\pare{Z^{\phi,T}_t},\\
q\log\left(M^\eta_T\right) - q\log\left(M^\eta_t\right) + (1-q)(\phi(0,X_T) -
\phi(T-t,X_t)) &=\log\pare{Z^{\phi,T}_T} - \log\pare{Z^{\phi,T}_t},
\end{split}
\end{equation}
where $Z^{\phi, T}$ is given in \eqref{eq: Z_phi}.
\end{lem}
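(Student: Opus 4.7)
The plan is to differentiate both sides of each identity in \eqref{eq: prop_1_1} and verify equality at the level of stochastic differentials; integration from $t$ to $T$ then yields the claim, using $\log Z^{\phi,T}_0 = 0$ if $t=0$, or just matching endpoints in general.

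For the first identity, combine \eqref{eq: wealth_dyn} with $dZ_t = C(X_t)\,dB_t\,\rho(X_t) + D(X_t)\,dW_t$ (from \eqref{eq: BM Z}) to obtain
\[
 d(p\log \W^\pi_t) = p\bigl(r + \pi'\Sigma\nu - \tfrac{1}{2}\pi'\Sigma\pi\bigr)(X_t)\,dt + p\pi_t'\sigma(X_t)\bigl(C(X_t)\,dB_t\,\rho(X_t) + D(X_t)\,dW_t\bigr).
\]
It\^{o}'s formula applied to $\phi(T-t, X_t)$ via \eqref{eq: state_alt} gives
\[
 d\phi(T-t, X_t) = (-\phi_t + L\phi)(T-t, X_t)\,dt + \sum_{i,j} D_{(ij)}\phi(T-t,X_t)\,\trace{a^{ij}(X_t)\,dB'_t},
\]
and the PDE $\phi_t = \fF[\phi]$ in the form \eqref{eq: x_ops_L} rewrites $-\phi_t + L\phi$ as $q\sum \nu'\sigma Ca^{ij}\rho D_{(ij)}\phi - \tfrac{1}{2}\sum D_{(ij)}\phi\,\bar{A}_{(ij),(kl)}D_{(kl)}\phi - V$. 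Substituting the explicit formula \eqref{eq: pi_v_map} for $\pi$ and using $p/(1-p) = -q$ together with the identity $\sigma\Theta = \sigma$ (which holds in both cases $m>n$ and $m\leq n$, since $\Theta$ is a projection onto the range of $\sigma'$), one obtains
\[
 p\pi'\sigma = -q\Bigl(\nu'\sigma + \sum_{i,j}D_{(ij)}\phi\,\rho'(a^{ij})'C'\Theta\Bigr).
\]
A direct coefficient-by-coefficient comparison then shows that the $dB^{kl}$ and $dW^k$ components of $d(p\log \W^\pi_t + \phi(T-t, X_t))$ are precisely the integrands that appear inside the stochastic exponentials of \eqref{eq: Z_phi}, so its martingale part matches that of $d\log Z^{\phi,T}_t$. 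For the drift, the drift of $d\log Z^{\phi,T}_t$ equals $-\tfrac{1}{2}$ times the quadratic variation of this same martingale; expanding using $\rho'\rho CC' + DD' = \idmat{m}$, $\sigma\Theta = \sigma$, and $\Theta^2 = \Theta$, and combining with $V = pr - \tfrac{1}{2}q\nu'\Sigma\nu$ and the definition of $\bar{A}$ in \eqref{eq: functions}, reduces the remaining drift on the left to the same expression.

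For the second identity, the same strategy applies to $q\log M^\eta_t + (1-q)\phi(T-t, X_t)$. With $\eta_{kl} = \sum_{i,j}a^{ij}_{kl}D_{(ij)}\phi$ from \eqref{eq: eta_v_map}, the $dB^{kl}$ coefficient of $d\log M^\eta_t$ becomes $-(C'\sigma'\nu)_k\rho_l + \sum_{i,j}\bigl(a^{ij}_{kl} - (C'\Theta C a^{ij}\rho)_k\rho_l\bigr)D_{(ij)}\phi$. Multiplying by $q$ and adding the contribution $(1-q)\sum_{i,j}a^{ij}_{kl}D_{(ij)}\phi$ from the $B$-martingale part of $(1-q)\,d\phi(T-t, X_t)$ promotes $q\,a^{ij}_{kl}$ to $a^{ij}_{kl}$, reproducing the $dB^{kl}$ coefficient inside $d\log Z^{\phi,T}_t$. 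The $dW^k$ coefficient matches identically (noting $\phi$ contributes nothing to $W$-martingales), and the drift matches by the same algebra used above, again with the PDE providing the crucial simplification. The main obstacle throughout is purely notational bookkeeping: two case splits via $\pi$ and $\Theta$, plus many indices, must be tracked simultaneously. The identity $\sigma\Theta = \sigma$ renders the case splits invisible at the algebraic level, and the structural relation $\rho'\rho CC' + DD' = \idmat{m}$ is the single nontrivial identity driving the drift cancellations.
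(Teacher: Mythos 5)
Your proposal is correct and follows essentially the same route as the paper: the paper likewise decomposes each side into drift, $dB$, and $dW$ integrands, matches the martingale parts against the integrands of $Z^{\phi,T}$ in \eqref{eq: Z_phi}, and uses the PDE $\phi_t=\fF[\phi]$ together with $\sigma\Theta=\sigma$, $\Theta\Theta=\Theta$, $\rho'\rho CC'+DD'=\idmat{m}$ and the formula $p\sigma'\pi=-q\sigma'\nu-q\sum_{i,j}\Theta Ca^{ij}\rho D_{(ij)}\phi$ to show the residual drift equals $-\phi_t+\fF[\phi]=0$. All the identities you single out as the crucial ones are precisely those the paper's lengthy calculation relies on, so your sketch would expand into the paper's argument verbatim.
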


Using Lemmas \ref{lem: barA} and \ref{lem: identity}, the proof of Proposition \ref{prop: verification} is now given.

\begin{proof}[Proof of Proposition \ref{prop: verification}]
Note that in \eqref{eq: kappa}, $0<\uk <\ok$ holds for both $0<p<1$ and $p<0$. Thus, \cite[Assumption 3.4]{Robertson-Xing} is ensured by Assumption \ref{ass: loc_ellip} and Lemma \ref{lem: barA}. Additionally,  \cite[Assumptions 3.5 and 3.6]{Robertson-Xing} are exactly Assumption \ref{ass: coeff_master_list} here. As the assumptions of \cite[Lemma 4.1]{Robertson-Xing} are verified, the well-posedness of the martingale problem for $\cL^{v, T-\cdot}$ follows from \cite[Lemma 4.1]{Robertson-Xing}. Since the martingale problem for $L$ is also well-posed, it then follows from the discussion after \eqref{eq: Z_phi} that $Z^{v,T}$ is a $\prob^x$-martingale. Applying Lemma \ref{lem: identity} to $v$, it then follows from \eqref{eq: prop_1_1} and $v(0,x)=0$ that
\begin{equation}\label{eq: duality}
\expec\bra{\left.\left(\frac{\W^\pi_T}{\W^{\pi}_t}\right)^p\right| \F_t} =e^{v(T-t, X_t)}=\left(\expec\bra{\left.\left(\frac{M^{\eta}_T}{M^{\eta}_t}\right)^q\right| \F_t}\right)^{1/(1-q)}, \quad \text{ for all } t\leq T.
\end{equation}
Therefore the optimality of $\pi$ follows from \cite[Lemma 5]{Guasoni-Robertson} and \eqref{eq: power op} is verified in the previous identity.
\end{proof}

\begin{proof}[Proof of Lemma \ref{lem: barA}]
From \eqref{eq: x_ops}:
\begin{equation*}\label{eq: A_barA_compare_2}
\sum_{i,j,k,l=1}^d \theta_{ij}\overline{A}_{(ij),(kl)}(x)\theta_{kl} = \sum_{i,j,k,l=1}^d \theta_{ij} \trace{a^{ij}(a^{kl})'}(x)\theta_{kl} - q\sum_{i,j,k,l=1}^d \theta_{ij}\rho'(a^{ij})'C'\Theta C a^{kl}\rho \theta_{kl}.
\end{equation*}
Define the matrix $Y$ via $Y_{kl} \dfn \sum_{i,j=1}^d a^{ij}_{kl}\theta_{ij}$, for $k,l=1,...,d$. It then follows that
\begin{equation*}\label{eq: bar_A_to_A_compare_1}
\sum_{i,j,k,l=1}^d \theta_{ij}\rho'(a^{ij})'C'\Theta C a^{kl}\rho \theta_{kl} = \rho'Y'C'\Theta C Y \rho.
\end{equation*}
We claim that
\begin{equation}\label{eq: Y_mat_bounds}
0 \leq \rho'Y'C'\Theta C Y\rho \leq \trace{YY'}.
\end{equation}
Admitting this fact, and plugging back in for $Y$ yields
\begin{equation}\label{eq: bar_A_to_A_compare_5}
0 \leq \sum_{i,j,k,l=1}^d \theta_{ij}\rho'(a^{ij})'C'\Theta C a^{kl}\rho \theta_{kl} \leq \sum_{i,j,k,l=1}^d \theta_{ij} \trace{a^{ij}(a^{kl})'}(x)\theta_{kl}.
\end{equation}
If $p<0$ then $q>0$ and \eqref{eq: A_barA_compare} holds for $\uk = 1-q$ and $\ok = 1$. If $0<p<1$ then $q<0$ and hence \eqref{eq: A_barA_compare} holds for $\uk = 1$ and $\ok = 1-q$.

It remains to show \eqref{eq: Y_mat_bounds}. When $\rho(x) = 0_d$, the $d$-dimensional vector with all components $0$, it is clear that $\rho'Y'C'\Theta C Y \rho=0$ and \eqref{eq: Y_mat_bounds} holds. When $\rho(x)\neq 0_d$, it follows from $\Theta\geq 0$ that $\rho'Y'C'\Theta C Y \rho \geq 0$. On the other hand, since by construction $\Theta\leq 1$ (see \eqref{eq: Theta_def}),  we have
\begin{equation*}
\rho'Y'C'\Theta C Y \rho \leq \rho'Y'C'C Y\rho \leq \frac{1}{\rho'\rho} \rho'Y'Y\rho = \frac{1}{\rho'\rho}\trace{Y\rho \rho'Y'},
\end{equation*}
where the second inequality holds by Assumption \ref{ass: rho} and the fact
that $C'C$ and $CC'$ have the same eigenvalues. Note that the eigenvalues of
$(1/\rho'\rho)\rho\rho'$ are $1$ and $0$,
and that $\trace{N M N'} \leq \lambda^{+,M}\trace{NN'}$ for any $n\in \mathbb{M}^d$ and $M\in \sd$, where $\lambda^{+,M}$ is the maximal eigenvalue of $M$. Therefore, $(1/\rho'\rho)\trace{Y\rho\rho'Y} \leq \trace{YY'}$ and \eqref{eq: Y_mat_bounds} is confirmed, finishing the proof.

\end{proof}

\begin{proof}[Proof of Lemma \ref{lem: identity}]
The proof  is similar that of \cite[Lemma B.3]{Guasoni-Robertson-fundsep}. However, since herein we work with a semi-linear equation and a matrix valued state variable, the notational differences in the calculations are such that, for clarity, we will present a detailed proof.

First of all,  set
\begin{equation}\label{eq: A_B_def}
\begin{split}
\textbf{A} &:= p\log\left(\W^\pi_T\right) - p\log\left(\W^\pi_t\right) + \phi(0,X_T)
- \phi(T-t,X_t),\\
\textbf{B} &:= q\log\left(M^\eta_T\right) - q\log\left(M^\eta_t\right) + (1-q)(\phi(0,X_T) -
\phi(T-t,X_t)).
\end{split}
\end{equation}
The identities in \eqref{eq: prop_1_1} are verified in the following four steps.
\begin{enumerate}[1)]
\item Use the dynamics for $\W^\pi$ in \eqref{eq: wealth_dyn}, the definition of $M^{\eta}$ in \eqref{eq: M_eta_def}, and the
  definitions of $\pi$, $\eta$ in \eqref{eq: pi_v_map} and \eqref{eq: eta_v_map}  to write
\begin{equation}\label{eq: AB iden}
\begin{split}
\textbf{A} = \int_t^T \textbf{A1}_udu + \sum_{k,l=1}^d
\int_t^T\textbf{A2}^{kl}_udB^{kl}_u +
\sum_{k=1}^{m}\int_t^T\textbf{A3}^k_u dW^k_u,\\
\textbf{B} = \int_t^T \textbf{B1}_udu + \sum_{k,l=1}^d
\int_t^T\textbf{B2}^{kl}_udB^{kl}_u + \sum_{k=1}^{m}\int_t^T\textbf{B3}^k_u dW^k_u,\\
\end{split}
\end{equation}
where $\textbf{A1},\textbf{B1}:[0,T]\times\sdpos\rightarrow\Real$, $\textbf{A2},\textbf{B2}:[0,T]\times\sdpos\rightarrow\mathbb{M}^d$, and $\textbf{A3},\textbf{B3}:[0,T]\times\sdpos\rightarrow\Real^{m}$. These functions with time subscripts represent, for example, $\textbf{A1}_u = \textbf{A1}(T-u, X_u)$.
\item Add and subtract
\begin{equation}\label{eq: addsub}
\begin{split}
&\frac{1}{2}\sum_{k,l=1}^d \int_t^T
  \left(\textbf{A2}^{kl}_u\right)^2 du + \frac{1}{2}\sum_{k=1}^{m}\int_t^T
  \left(\textbf{A3}^k_u\right)^2du,\\
&\frac{1}{2}\sum_{k,l=1}^d \int_t^T
  \left(\textbf{B2}^{kl}_u\right)^2 du+ \frac{1}{2}
  \sum_{k=1}^{m}\int_t^T\left(\textbf{B3}^k_u\right)^2du,
\end{split}
\end{equation}
to the right-hand-side of \textbf{A} and \textbf{B}, respectively, to obtain
\begin{equation*}
\begin{split}
\textbf{A} &= \int_t^T \left(\textbf{A1}_u+
  \frac{1}{2}\sum_{k,l=1}^d\left(\textbf{A2}_u^{kl}\right)^2 +
  \frac{1}{2}\sum_{k=1}^{m}\left(\textbf{A3}_u^k\right)^2\right)du
+ \log(\cZ_T) - \log(\cZ_t),\\
\textbf{B} &= \int_t^T \left(\textbf{B1}_u+
  \frac{1}{2}\sum_{k,l=1}^d\left(\textbf{B2}_u^{kl}\right)^2 +
  \frac{1}{2}\sum_{k=1}^{m}\left(\textbf{B3}^k_u\right)^2\right)du
+ \log(\tilde{\cZ}_T) - \log(\tilde{\cZ}_t),\\
\end{split}
\end{equation*}
where
\begin{equation}\label{eq: cZ def}
\begin{split}
\cZ = \mathcal{E}\left(\int \sum_{k,l=1}^d
  \textbf{A2}_u^{kl}dB^{kl}_u + \int
  \sum_{k=1}^{m}\textbf{A3}^{k}_udW^l_u\right), \quad
\tilde{\cZ} = \mathcal{E}\left(\int \sum_{k,l=1}^d
  \textbf{B2}^{kl}_udB^{kl}_u + \int
  \sum_{k=1}^{m}\textbf{B3}^k_udW^k_u\right).
\end{split}
\end{equation}
\item Show that for $u\leq T$ and $x\in\sdpos$:
\begin{equation*}
\begin{split}
\left(\textbf{A1}+
  \frac{1}{2}\sum_{k,l=1}^d\left(\textbf{A2}^{kl}\right)^2 +
  \frac{1}{2}\sum_{k=1}^{m}\left(\textbf{A3}^k\right)^2\right)(T-u,x) =
  \left(-\phi_t+ \fF[\phi]\right)(T-u,x) = 0,\\
\left(\textbf{B1}+
  \frac{1}{2}\sum_{k,l=1}^d\left(\textbf{B2}^{kl}\right)^2 +
  \frac{1}{2}\sum_{k=1}^{m}\left(\textbf{B3}^k\right)^2\right)(T-u,x) =
  \left(-\phi_t+ \fF[\phi]\right)(T-u,x) = 0.\\
\end{split}
\end{equation*}
\item Show that $\cZ = \tilde{\cZ} = Z^{\phi,T}$.
\end{enumerate}
Combining the above four steps, \eqref{eq: prop_1_1} is then verified.

\begin{rem}
For notational ease the following conventions are used: 1) we will omit  $\int_t^T$ and the integrator $du$ from all integrals; 2) we will suppress the argument $(T-u,X_u)$ from
all functions; 3) we will also drop all time subscripts.  Thus, for
example, we will write
\begin{equation*}
f + g'dB\rho + h'dW =\int_t^T f(T-u,X_u)du + \int_t^T g(T-u,X_u)'dB_u\rho(X_u) + \int_t^T h(T-u,X_u)'dW_u.
\end{equation*}
\end{rem}

The first identity in \eqref{eq: prop_1_1} is now shown. Using $\rho'\rho CC' + DD' = \idmat{m}$ and the dynamics of $\W^\pi$ in \eqref{eq: wealth_dyn}, \ito's formula gives \eqref{eq: AB iden} where
\begin{equation}\label{eq: A def}
\begin{split}
\textbf{A1} &= pr + p\pi'\Sigma\nu - \frac{1}{2}p\pi'\Sigma\pi - \phi_t + L\phi,\\
\textbf{A2}^{kl} &= p(C'\sigma'\pi)_k\rho_l + \sum_{i,j=1}^d
a^{ij}_{kl}D_{(ij)}\phi,\\
\textbf{A3}^k &= p(D'\sigma'\pi)_k.
\end{split}
\end{equation}

While the second step follows from definitions of $Z$ and $\tilde{Z}$, we move onto the third step. For $u\leq T$ and $ x\in\sdpos$, it follows that
\begin{equation}\label{eq: prop_1_2}
\begin{split}
\textbf{A1} +& \frac{1}{2}\sum_{k,l=1}^d (\textbf{A2}^{kl})^2 +
\sum_{k=1}^{m}(\textbf{A3}^k)^2\\
=& pr + p\pi'\Sigma\nu - \frac{1}{2}p\pi'\Sigma\pi - \phi_t +L\phi + \frac{1}{2}p^2\pi'\sigma CC'\sigma'\pi \rho'\rho + p\pi'\left(\sum_{i,j1}^d \sigma C a^{ij}\rho D_{(ij)}\phi \right)\\
&  + \frac{1}{2}\sum_{i,j,k,l=1}^d D_{(ij)}\phi \trace{a^{ij}(a^{kl})'} D_{(kl)}\phi + \frac{1}{2}p^2\pi'\sigma DD'\sigma'\pi,\\
=&\frac{1}{2}p(p-1)\pi'\Sigma\pi
+ p\pi'\Sigma\nu + p\pi'\left(\sum_{ij=1}^d \sigma C a^{ij}\rho D_{(ij)}\phi \right)\\
&  +pr - \phi_t + L\phi + \frac{1}{2}\sum_{i,j,k,l=1}^d D_{(ij)}\phi\
\trace{a^{ij}(a^{kl})'} D_{(kl)}\phi.
\end{split}
\end{equation}
The terms above containing $\pi$ are
\begin{equation*}
\frac{1}{2}p(p-1)\pi'\Sigma\pi + p\pi'\left(\Sigma\nu + \sum_{i,j=1}^d
  \sigma C a^{ij}\rho D_{(ij)}\phi\right).
\end{equation*}
Using  \eqref{eq: pi_v_map}, we obtain the following expression for the quadratic function in the previous line:
\[
 -\frac{1}{2}q\nu'\Sigma\nu - q\sum_{i,j=1}^d\ \nu'\sigma C
a^{ij}\rho D_{(ij)}\phi - \frac{1}{2}q \sum_{i,j,k,l=1}^d D_{(ij)}\phi\ \rho'(a^{ij})'C'\Theta
C a^{kl}\rho D_{(kl)}\phi,
\]
for both cases $m\geq n$ or $m<n$. Thus, substituting the previous expression into \eqref{eq: prop_1_2}, using the expressions for $\bar{A}, V$ in \eqref{eq: functions} and $\fF$ in \eqref{eq: x_ops_L} gives
\begin{equation}\label{eq: prop_1_3}
\begin{split}
\textbf{A1} +& \frac{1}{2}\sum_{k,l=1}^d (\textbf{A2}^{kl})^2 +
\sum_{k=1}^{m}(\textbf{A3}^k)^2\\
=& pr-\frac{1}{2}q\nu'\Sigma\nu - q\sum_{i,j=1}^d \nu'\sigma C a^{ij}\rho D_{(ij)}\phi - \frac{1}{2}\sum_{i,j,k,l=1}^d D_{(ij)}\phi \rho'(a^{ij})'C'\Theta C a^{kl}\rho D_{(kl)}\phi \\
&- \phi_t + L\phi + \frac{1}{2}\sum_{i,j,k,l=1}^d D_{(ij)}\phi \trace{a^{ij}(a^{kl})'} D_{(kl)}\phi\\
=& -\phi_t + L\phi -
q\sum_{i,j=1}^d\nu'\sigma C a^{ij}\rho D_{(ij)}\phi + \frac{1}{2}\sum_{i,j,k,l=1}^d D_{(ij)}\phi \bar{A}_{(ij),(kl)} D_{(kl)}\phi + V\\
=& -\phi_t + \fF[\phi] \\
=&0,
\end{split}
\end{equation}
finishing the third step. For the last step, recall the definition of $Z^{\phi,T}$ from \eqref{eq: Z_phi}.
Comparing with the definition of $\cZ$ in \eqref{eq: cZ def}, it suffices to show that
\begin{equation}\label{eq: A2_A3}
\begin{split}
\textbf{A2}^{kl} &= -q(C'\sigma'\nu)_k\rho_l + \sum_{i,j=1}^d \left(
a^{ij}_{kl} - q(C'\Theta C a^{ij}\rho)_k\rho_l\right)D_{(ij)}\phi, \\
\textbf{A3}^k & = -q(D'\sigma'\nu)_k - q\sum_{i,j=1}^d
\left(D'\Theta C a^{ij}\rho\right)_kD_{(ij)}\phi.
\end{split}
\end{equation}
Using \eqref{eq: pi_v_map} for $m\geq n$ it follows that (recall $\Theta = \sigma'\Sigma^{-1}\sigma$ when $m\geq n$)
\begin{equation*}
\begin{split}
p(\sigma'\pi)&= -q\sigma'\Sigma^{-1}\left(\Sigma\nu + \sum_{i,j=1}^d \sigma C a^{ij}\rho D_{(ij)}\phi\right) =-q\sigma'\nu - q\sum_{i,j=1}^d \Theta C a^{ij}\rho D_{(ij)}\phi.
\end{split}
\end{equation*}
Similarly, using \eqref{eq: pi_v_map} for $m< n$ gives (recall $\Theta = 1_m$ for $m<n$):
\begin{equation*}
\begin{split}
p(\sigma'\pi)&= -q\sigma'\sigma(\sigma'\sigma)^{-1}\left(\sigma'\nu + \sum_{i,j=1}^d C a^{ij}\rho D_{(ij)}\phi\right)=-q\sigma'\nu - q\sum_{i,j=1}^d \Theta C a^{ij}\rho D_{(ij)}\phi.
\end{split}
\end{equation*}
Therefore, in both cases $m\geq n$, $m< n$ we have, using the definition of $\textbf{A2},\textbf{A3}$ in \eqref{eq: A def} that
\begin{equation*}
\begin{split}
\textbf{A2}^{kl} &= p(C'\sigma'\pi)_k \rho_l + \sum_{i,j=1}^d a^{ij}_{kl} D_{(ij)}\phi =-q(C'\sigma'\nu)_k\rho_l + \sum_{i,j=1}^d \left(a^{ij}_{kl} - q(C'\Theta C a^{ij}\rho)_k\rho_l\right) D_{(ij)}\phi,\\
\textbf{A3}^k &= p(D'\sigma'\pi)_k = -q(D'\sigma'\nu)_k - q\sum_{i,j=1}^d (D'\Theta C a^{ij}\rho)_k D_{(ij)}\phi,
\end{split}
\end{equation*}
which verifies \eqref{eq: A2_A3}.

The proof for the second identity in \eqref{eq: prop_1_1} is similar. First, using the definition of $M^\eta$ in \eqref{eq: M_eta_def}, \ito's formula yields the second identity in \eqref{eq: AB iden}, where
\begin{equation}\label{eq: B def}
\begin{split}
\textbf{B1} = & -qr + (1-q)(-\phi_t + L\phi)\\ &-\frac{1}{2}q\left(\sum_{k,l=1}^{d}\left(-(C'\sigma'\nu)_k\rho_l + \eta_{kl}
  - (C'\Theta C\eta\rho)_k\rho_l\right)^2 +
\sum_{k=1}^{m}\left((D'\sigma'\nu)_k + (D'\Theta
  C\eta\rho)_k\right)^2\right),\\
\textbf{B2}^{kl} &= q\left(-(C'\sigma'\nu)_k\rho_l + \eta_{kl} - (C'\Theta
C\eta\rho)_k\rho_l\right) + (1-q)\sum_{i,j=1}^d a^{ij}_{kl}D_{(ij)}\phi,\\
\textbf{B3}^{k} &= -q\left((D'\sigma'\nu)_k + (D'\Theta C\eta\rho)_k\right).
\end{split}
\end{equation}

Using $(1-q)p=-q$ we obtain
\begin{equation}\label{eq: prop_1_45}
\begin{split}
&\textbf{B1} + \frac{1}{2}\sum_{k,l=1}^d (\textbf{B2}^{kl})^2 + \frac{1}{2}\sum_{k=1}^{m} (\textbf{B3}^k)^2\\
= & (1-q)pr  + (1-q)(-\phi_t + L\phi)\\
& -\frac{1}{2}q(1-q)\left(\sum_{k,l=1}^{d}\left(-(C'\sigma'\nu)_k\rho_l + \eta_{kl}
  - (C'\Theta C\eta\rho)_k\rho_l\right)^2 +
\sum_{k=1}^{m}\left((D'\sigma'\nu)_k + (D'\Theta
  C\eta\rho)_k\right)^2\right)\\
&+ q(1-q)\sum_{i,j,k,l=1}^d \left(-(C'\sigma'\nu)_k\rho_l + \eta_{kl}
  - (C'\Theta C\eta\rho)_k\rho_l\right) a^{ij}_{kl}D_{(ij)}\phi\\
&+ \frac{1}{2}(1-q)^2\sum_{k,l=1}^d\left(\sum_{i,j=1}^da^{ij}_{kl}D_{(ij)}\phi\right)^2.
\end{split}
\end{equation}
Now, using $\rho'\rho CC' + DD' = \idmat{m}$ gives
\begin{equation*}
\begin{split}
 &\sum_{k,l=1}^{d}\left(-(C'\sigma'\nu)_k\rho_l + \eta_{kl}
  - (C'\Theta C\eta\rho)_k\rho_l\right)^2 +
\sum_{k=1}^{m}\left((D'\sigma'\nu)_k + (D'\Theta
  C\eta\rho)_k\right)^2\\
&=\nu'\sigma CC'\sigma' \nu \rho'\rho + \trace{\eta'\eta} + \rho'\eta'C'\Theta CC'\Theta C\eta\rho\rho'\rho - 2\nu'\sigma C\eta\rho + 2\nu'\sigma CC'\Theta C\eta\rho \rho'\rho -2\rho'\eta'C'\Theta C\eta \rho\\
&\qquad + \nu'\sigma DD'\sigma'\nu + \rho'\eta'C'\Theta DD'\Theta C\eta\rho + 2\nu'\sigma DD'\Theta C\eta\rho\\
&=\nu'\sigma(CC'\rho'\rho + DD')\sigma'\nu  + \rho'\eta'C'\Theta(CC'\rho'\rho + DD')\Theta C\eta\rho + 2\nu'\sigma(CC'\rho'\rho + DD')\Theta C\eta\rho\\
&\qquad +\trace{\eta'\eta} - 2\nu'\sigma C\eta\rho - 2\rho'\eta'C'\Theta C\eta\rho\\
&=\nu'\Sigma\nu  + \rho'\eta'C'\Theta\Theta C\eta\rho + 2\nu'\sigma\Theta C\eta\rho +\trace{\eta'\eta} - 2\nu'\sigma C\eta\rho - 2\rho'\eta'C'\Theta C\eta\rho\\
&=\nu'\Sigma\nu  + \trace{\eta'\eta} - \rho'\eta'C'\Theta C\eta\rho,
\end{split}
\end{equation*}
where the last equality follows since the definition of $\Theta$ in \eqref{eq: Theta_def} implies both $\Theta\Theta = \Theta$ and $\sigma\Theta = \sigma$.  We also have
\begin{equation*}
\begin{split}
&\sum_{i,j,k,l=1}^d \left(-(C'\sigma'\nu)_k\rho_l + \eta_{kl}
  - (C'\Theta C\eta\rho)_k\rho_l\right) a^{ij}_{kl}D_{(ij)}\phi \\
  &\hspace{2cm} = \sum_{i,j=1}^d \left(-\nu'\sigma C a^{ij}\rho + \trace{\eta' a^{ij}} - \rho'\eta'C'\Theta C a^{ij}\rho\right)D_{(ij)}\phi,\\
&\sum_{k,l=1}^d\left(\sum_{i,j=1}^da^{ij}_{kl}D_{(ij)}\phi\right)^2 = \sum_{i,j,k,l=1}^d D_{(ij)}\phi \trace{a^{ij}(a^{kl})'} D_{(kl)}\phi.
\end{split}
\end{equation*}
Plugging all of this into \eqref{eq: prop_1_45} yields
\begin{equation}\label{eq: prop_1_55}
\begin{split}
&\frac{1}{1-q}\pare{\textbf{B1} + \frac{1}{2}\sum_{k,l=1}^d (\textbf{B2}^{kl})^2 + \frac{1}{2}\sum_{k=1}^{m} (\textbf{B3}^k)^2}\\
=& pr - \phi_t + L\phi - \frac{1}{2}q\left(\nu'\Sigma\nu + \trace{\eta'\eta} - \rho'\eta'C'\Theta C\eta\rho\right) \\
&+ q\sum_{i,j=1}^d\left(-\nu'\sigma C a^{ij}\rho + \trace{\eta'a^{ij}} - \rho'\eta'C\Theta C a^{ij}\rho\right)D_{(ij)}\phi\\
&+ \frac{1}{2}(1-q)\sum_{i,j,k,l=1}^d D_{(ij)}\phi\trace{a^{ij}(a^{kl})'}D_{(kl)}\phi.
\end{split}
\end{equation}
On the right-hand-side, terms involving $\eta$ are
\begin{equation}\label{eq: prop_1_8}
\begin{split}
-\frac{1}{2}q\trace{\eta'\eta}& + \frac{1}{2}q\rho'\eta'C'\Theta C\eta\rho +
q\sum_{i,j=1}^d \trace{\eta'a^{ij}}D_{(ij)}\phi - q\sum_{i,j=1}^d
  \rho'\eta' C'\Theta C a^{ij}\rho D_{(ij)}\phi.
\end{split}
\end{equation}
For $\eta$ in \eqref{eq: eta_v_map}, the following identities hold
\begin{equation*}
\begin{split}
&\trace{\eta'\eta}  = \sum_{i,j,k,l=1}^d D_{(ij)}\phi\ \trace{a^{ij}(a^{kl})'} D_{(kl)}\phi,\\
&\rho'\eta'C'\Theta C\eta\rho = \sum_{i,j,k,l=1}^d D_{(ij)}\phi\ \rho'(a^{ij})'C'\Theta C a^{kl}\rho D_{(kl)} \phi,\\
&\sum_{i,j=1}^d \trace{\eta'a^{ij}}D_{(ij)}\phi = \sum_{i,j,k,l=1}^d D_{(ij)}\phi\ \trace{a^{ij}(a^{kl})'}
D_{(kl)}\phi,\\
&\sum_{i,j=1}^d \rho'\eta' C'\Theta C a^{ij}\rho D_{(ij)}\phi = \sum_{i,j,k,l=1}^d D_{(ij)}\phi\ \rho'(a^{ij})'C'\Theta C a^{kl}\rho D_{(kl)} \phi.
\end{split}
\end{equation*}
Using above identities in \eqref{eq: prop_1_8}, we obtain the following expression for \eqref{eq: prop_1_8}:
\[
 \frac{1}{2}q\sum_{i,j,k,l=1}^dD_{(ij)}\phi\ \left(\trace{a^{ij}(a^{kl})'} -
  \rho'(a^{ij})'C'\Theta C a^{kl}\rho\right)D_{(kl)}\phi.
\]
Inserting this into \eqref{eq: prop_1_55} gives
\begin{equation*}\label{eq: prop_1_95}
\begin{split}
&\frac{1}{1-q}\pare{\textbf{B1} + \frac{1}{2}\sum_{i,j=1}^d (\textbf{B2}^{ij})^2 + \frac{1}{2}\sum_{l=1}^{m} (\textbf{B3}^l)^2}\\
=& pr - \phi_t + L\phi - \frac{1}{2}q\nu'\Sigma\nu - q\sum_{i,j=1}^d \nu'\sigma C a^{ij}\rho D_{(ij)}\phi + \frac{1}{2}(1-q)\sum_{i,j,k,l=1}^d D_{(ij)}\phi \trace{a^{ij}(a^{kl})'} D_{(kl)}\phi\\
&+ \frac{1}{2}q\sum_{i,j,k,l=1}^d D_{(ij)}\phi\left(\trace{a^{ij}(a^{kl})'} - \rho'(a^{ij})'C'\Theta C a^{kl}\rho\right)D_{(kl)}\phi\\
=&-\phi_t + L\phi - q\sum_{i,j=1}^d \nu'\sigma C a^{ij}\rho D_{(ij)}\phi + \frac{1}{2}\sum_{i,j,k,l=1}^d D_{(ij)}\phi\left(\trace{a^{ij}(a^{kl})'} - q\rho'(a^{ij})'C'\Theta C a^{kl}\rho\right)D_{(kl)}\phi\\
& + pr - \frac{1}{2}q\nu'\Sigma\nu\\
=& -\phi_t + \fF[\phi]\\
=&0,
\end{split}
\end{equation*}
where the second to last equality uses \eqref{eq: functions} and \eqref{eq: x_ops_L}. Thus, the third step is complete.

Turning to the last step, comparing $Z^{\phi,T}$ in \eqref{eq: Z_phi} with $\tilde{\cZ}$ in \eqref{eq: cZ def}, it suffices to show
\begin{equation*}
\begin{split}
\textbf{B2}^{kl} &= -q(C'\sigma'\nu)_k\rho_l + \sum_{i,j=1}^d\left(
a^{ij}_{kl} - q(C'\Theta C a^{ij}\rho)_k\rho_l\right)D_{(ij)}\phi,\\
\textbf{B3}^k & = -q(D'\sigma'\nu)_k - q\sum_{i,j=1}^d\left(D'\Theta C a^{ij}\rho\right)_k D_{(ij)}\phi.
\end{split}
\end{equation*}
Using the definitions of $\textbf{B2}$ and $\textbf{B3}$ in \eqref{eq: B def} it suffices to show that
\begin{equation*}
\begin{split}
q\eta_{kl} - q(C'\Theta C\eta\rho)_k\rho_l + (1-q)\sum_{i,j=1}^d a^{ij}_{kl} D_{(ij)}\phi &=\sum_{i,j=1}^d \left(a^{ij}_{kl} - q(C'\Theta C a^{ij}\rho)_k\rho_l\right)D_{(ij)}\phi,\\
(D'\Theta C \eta\rho)_k &=\sum_{i,j=1}^d (D'\Theta C a^{ij}\rho)_k D_{(ij)}\phi.
\end{split}
\end{equation*}
Since $\eta_{kl} = \sum_{i,j=1}^d a^{ij}_{kl}D_{(ij)}\phi$ from \eqref{eq: eta_v_map} the last two identities readily follow, finishing the proof.

\end{proof}

\section{Proofs for Subsection \ref{subsubsec: wishart_examples}}\label{app: C}

Throughout this section, the model is from Section \ref{subsec: wishart} with $\rho,\nu$ and $\zeta$ constant.  Furthermore, $\zeta$ is assumed to satisfy Assumption \ref{ass: sig_ellip_gW}. We begin with the following lemma, which identifies $\mathfrak{F}[v]$ for $v$ as in \eqref{eq: cand_hatv}.

\begin{lem}\label{lem: wishart_op_affine_v} For $v = \trace{Mx}$ as in \eqref{eq: cand_hatv} it follows for $d\leq n$ that
\begin{equation}\label{eq: wishart_fv_affine_d_leq_n}
\begin{split}
&\mathfrak{F}[v](x) = \trace{x\left(2M\Lambda(1-q\rho\rho')\Lambda'M + K'M + MK - q\zeta'\nu\rho'\Lambda'M - qM\Lambda\rho\nu'\zeta + \frac{1}{2}\left(p(r_1+r_1') - q\zeta'\nu\nu'\zeta\right)\right)}\\
&\qquad\qquad + \trace{LL'M} + p r_0.
\end{split}
\end{equation}
For $d > n$
\begin{equation}\label{eq: wishart_fv_affine_d_ge_n}
\begin{split}
&\mathfrak{F}[v](x) = \trace{x\left(2M\Lambda\Lambda'M + K'M + MK - q\zeta'\nu\rho'\Lambda'M - qM\Lambda\rho\nu'\zeta + \frac{1}{2}\left(p(r_1+r_1') - q\zeta'\nu\nu'\zeta\right)\right)}\\
&\qquad - 2q\trace{x\zeta'\left(\zeta x\zeta'\right)^{-1}\zeta x M\Lambda\rho\rho'\Lambda'M} + \trace{LL'M} + p r_0.
\end{split}
\end{equation}

\end{lem}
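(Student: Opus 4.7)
\medskip

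\noindent\textbf{Proof plan for Lemma \ref{lem: wishart_op_affine_v}.} The plan is to compute $\fF[v]$ term by term by plugging $v(x) = \trace{Mx}$ (with $M=M'$) into the formula \eqref{eq: x_ops}. Since $D_{(ij)}v = M_{ij}$ and $D^2_{(ij),(kl)}v = 0$, the pure second-order piece $\tfrac{1}{2}\sum A_{(ij),(kl)} D^2_{(ij),(kl)} v$ vanishes, so only the drift piece $\trace{\bar b\,M}$, the quadratic gradient piece $\tfrac{1}{2}\sum M_{ij}\bar A_{(ij),(kl)} M_{kl}$, and the potential $V$ survive. All the Wishart-specific data feeds in through $F(x)=\sqrt x$, $G(x)=\Lambda'$, $b(x)=LL'+Kx+xK'$, $C=\idmat{d}$, $\sigma(x)=\zeta\sqrt x$, and $\Sigma(x)=\zeta x\zeta'$.

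\medskip

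\noindent The computational heart of the proof is the single identity
\[
\sum_{i,j=1}^d M_{ij}\,a^{ij}(x) \;=\; 2\sqrt{x}\,M\,\Lambda ,
\]
obtained directly from $a^{ij}_{kl} = (\sqrt x)_{ik}\Lambda_{jl}+(\sqrt x)_{jk}\Lambda_{il}$, cyclic/transpose manipulation, and $M=M'$. From this I will derive the three further identities I need. First, taking $\theta=M$ in \eqref{eq: ellip} (which for the Wishart case gives $4\trace{f\theta g\theta}=4\trace{xM\Lambda\Lambda'M}$) yields $\sum M_{ij}A_{(ij),(kl)}M_{kl}=4\trace{xM\Lambda\Lambda'M}$. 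Second, $\sum_{i,j}M_{ij}\rho'(a^{ij})' = 2\rho'\Lambda'M\sqrt{x}$, so the $\Theta$-weighted quadratic form collapses to
\[
\sum_{i,j,k,l} M_{ij}\,\rho'(a^{ij})'\,\Theta\,a^{kl}\rho\,M_{kl}
\;=\; 4\,\rho'\Lambda'M\sqrt{x}\,\Theta(x)\,\sqrt{x}M\Lambda\rho .
\]
Here the split into the two cases appears: when $d\leq n$, $\Theta=\idmat{d}$ and $\sqrt x\,\Theta\,\sqrt x = x$, producing the clean $2\trace{xM\Lambda(\idmat{d}-q\rho\rho')\Lambda'M}$ in \eqref{eq: wishart_fv_affine_d_leq_n}; when $d>n$, $\Theta(x)=\sqrt x\zeta'(\zeta x\zeta')^{-1}\zeta\sqrt x$ so $\sqrt x\,\Theta\,\sqrt x = x\zeta'(\zeta x\zeta')^{-1}\zeta x$, giving the non-affine term $-2q\trace{x\zeta'(\zeta x\zeta')^{-1}\zeta x M\Lambda\rho\rho'\Lambda'M}$ in \eqref{eq: wishart_fv_affine_d_ge_n}. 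Third, contracting the drift correction $q\nu'\sigma C a^{ij}\rho = q\nu'\zeta\sqrt{x}\,a^{ij}\rho$ against $M_{ji}$ and using the identity above gives $2q\trace{xM\Lambda\rho\nu'\zeta}$, which by $\trace{A}=\trace{A'}$ and symmetry of $x,M$ splits as $q\trace{xM\Lambda\rho\nu'\zeta}+q\trace{x\zeta'\nu\rho'\Lambda'M}$, symmetrizing the answer.

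\medskip

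\noindent With these pieces in hand, the drift contribution reads
\[
\trace{\bar b(x)M}=\trace{LL'M}+\trace{x(MK+K'M)}-q\trace{xM\Lambda\rho\nu'\zeta}-q\trace{x\zeta'\nu\rho'\Lambda'M},
\]
and the potential becomes $V(x)=pr_0+\tfrac{1}{2}\trace{x(p(r_1+r_1')-q\zeta'\nu\nu'\zeta)}$ after writing $\trace{r_1x}=\tfrac{1}{2}\trace{(r_1+r_1')x}$ and $\nu'\Sigma\nu=\trace{x\zeta'\nu\nu'\zeta}$. Adding these to the quadratic gradient piece (divided by $2$) groups all terms under a single $\trace{x(\cdot)}$ plus the horizon-free remainder $pr_0+\trace{LL'M}$, which is precisely \eqref{eq: wishart_fv_affine_d_leq_n} in the first case and \eqref{eq: wishart_fv_affine_d_ge_n} in the second.

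\medskip

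\noindent The only step requiring real care is the bookkeeping for $\sum_{i,j}M_{ij}a^{ij}$: one must correctly exploit $F=F'=\sqrt x$, $G=\Lambda'$, and $M=M'$ so that both summands $(\sqrt x)_{ik}\Lambda_{jl}$ and $(\sqrt x)_{jk}\Lambda_{il}$ collapse to the same matrix $\sqrt x M\Lambda$. Everything else is linear-algebraic substitution, and the case distinction is localized entirely in the factor $\sqrt x\,\Theta(x)\,\sqrt x$, explaining why the non-commutative obstruction $x\zeta'(\zeta x\zeta')^{-1}\zeta x$ of Example \ref{exa: counter-exa} arises exactly when $d>n$.
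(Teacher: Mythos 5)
Your proposal is correct and follows essentially the same route as the paper: a direct substitution of the affine ansatz $v=\trace{Mx}$ into $\fF$, with the case split localized in $\sqrt{x}\,\Theta(x)\sqrt{x}$ (equal to $x$ for $d\leq n$ and to $x\zeta'(\zeta x\zeta')^{-1}\zeta x$ for $d>n$). The only difference is organizational — you contract $\sum_{i,j}M_{ij}a^{ij}=2\sqrt{x}M\Lambda$ first and build all quadratic forms from that single matrix, whereas the paper writes out $A_{(ij),(kl)}$ and $\bar A_{(ij),(kl)}$ componentwise before summing; both yield the same identities \eqref{eq: wishart_temp_op_calc}--\eqref{eq: wishart_temp_op_calc_3}.
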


\begin{proof}
Plugging in the model coefficients gives
\begin{equation*}
\begin{split}
b(x) &= LL' + Kx + xK',\qquad a^{ij}_{kl}(x) = \sqrt{x}_{ik}\Lambda_{jl} + \sqrt{x}_{jk}\Lambda_{il},\\
r(x) &=r_0 + \trace{r_1x},\qquad \sigma(x) = \zeta\sqrt{x},\qquad \nu(x) = \nu,\\
C(x) &=\idmat{d},\qquad \rho(x) = \rho.
\end{split}
\end{equation*}
Therefore, using the definitions in \eqref{eq: functions}, calculation shows that
\begin{equation}\label{eq: wishart_b_A_V}
\begin{split}
\bar{b}_{ij}(x) =& (LL'+Kx+xK')_{ij}-q(x\zeta'\nu\rho'\Lambda')_{ij} - q(x\zeta'\nu\rho'\Lambda')_{ji},\\
A_{(ij),(kl)}(x) =& x_{ik}(\Lambda\Lambda')_{jl} + x_{il}(\Lambda\Lambda')_{jk} +x_{jk}(\Lambda\Lambda')_{il} + x_{jl}(\Lambda\Lambda')_{ik},\\
V(x) =& pr_0 + \frac{1}{2}p\trace{x(r_1+r_1')} - \frac{1}{2}q\trace{x\zeta'\nu\nu'\zeta},
\end{split}
\end{equation}
and
\begin{equation}\label{eq: wishart_bar_A}
\begin{split}
\bar{A}_{(ij),(kl)}(x) =& x_{ik}(\Lambda\Lambda')_{jl} - q(\sqrt{x}\Theta(x)\sqrt{x})_{ik}(\Lambda\rho\rho'\Lambda')_{jl} + x_{il}(\Lambda\Lambda')_{jk} - q(\sqrt{x}\Theta(x)\sqrt{x})_{il}(\Lambda\rho\rho'\Lambda')_{jk}\\
&+ x_{jk}(\Lambda\Lambda')_{il} - q(\sqrt{x}\Theta(x)\sqrt{x})_{jk}(\Lambda\rho\rho'\Lambda')_{il} + x_{jl}(\Lambda\Lambda')_{ik} - q(\sqrt{x}\Theta(x)\sqrt{x})_{jl}(\Lambda\rho\rho'\Lambda')_{ik}.
\end{split}
\end{equation}
For the given $v$, $D_{(ij)}v = D_{(ji)}v = M_{ij}$ and $D_{(ij),(kl)}v = 0$.  Therefore
\begin{equation}\label{eq: wishart_temp_op_calc}
\begin{split}
&\sum_{i,j,k,l=1}^d A_{(ij),(kl)}D_{(ij),(kl)}v = 0,\\
&\sum_{i,j=1}^d \bar{b}_{ij}D_{(ij)}v = \trace{x\left(K'M + MK - q\zeta'\nu\rho'\Lambda'M - qM\Lambda\rho\nu'\zeta\right)} + \trace{LL'M},
\end{split}
\end{equation}
where we have used repeatedly that $M,X$ are symmetric and that $\trace{ABC} = \trace{BCA} = \trace{CAB}$ for matrices $A,B,C$.  When $d\leq n$, it follows that $\Theta(x) =\idmat{d}$ and  $\bar{A}$ from \eqref{eq: wishart_bar_A} simplifies to
\begin{equation*}
\begin{split}
\bar{A}_{(ij),(kl)}(x) =& x_{ik}\left(\Lambda\Lambda'-q\Lambda\rho\rho'\Lambda'\right)_{jl}+ x_{il}\left(\Lambda\Lambda'-q\Lambda\rho\rho'\Lambda'\right)_{jk}\\
& + x_{jk}\left(\Lambda\Lambda'-q\Lambda\rho\rho'\Lambda'\right)_{il}+ x_{jl}\left(\Lambda\Lambda'-q\Lambda\rho\rho'\Lambda'\right)_{ik},\\
\end{split}
\end{equation*}
and hence using the symmetry for $\Lambda\Lambda'-q\Lambda\rho\rho'\Lambda'$:
\begin{equation}\label{eq: wishart_temp_op_calc_2}
\frac{1}{2}\sum_{i,j,k,l=1}^d \bar{A}_{(ij),(kl)}D_{(ij)}v D_{(kl)}v = 2\trace{x\left(M\Lambda(1-q\rho\rho'\Lambda'M)\right)}.
\end{equation}
Therefore, \eqref{eq: wishart_fv_affine_d_leq_n} follows using \eqref{eq: wishart_b_A_V}, \eqref{eq: wishart_temp_op_calc}, \eqref{eq: wishart_temp_op_calc_2} and the definition of $\mathfrak{F}$ in \eqref{eq: x_ops}. When $d>n$:
\begin{equation*}
\sqrt{x}\Theta(x)\sqrt{x} =\sqrt{x}\left(\sigma'\Sigma^{-1}\sigma\right)(x)\sqrt{x} = x\zeta'\left(\zeta x \zeta'\right)^{-1}\zeta x,
\end{equation*}
thus, using \eqref{eq: wishart_bar_A} it follows that
\begin{equation}\label{eq: wishart_temp_op_calc_3}
\begin{split}
\frac{1}{2}\sum_{i,j,k,l=1}^d \bar{A}_{(ij),(kl)}D_{(ij)}v D_{(kl)}v & = 2\trace{xM\Lambda\Lambda'M} - 2q\trace{x\zeta'\left(\zeta x\zeta'\right)^{-1}\zeta x M \Lambda\rho\rho'\Lambda'M}.
\end{split}
\end{equation}
\eqref{eq: wishart_fv_affine_d_ge_n} now follows from \eqref{eq: wishart_b_A_V}, \eqref{eq: wishart_temp_op_calc} and \eqref{eq: wishart_temp_op_calc_3}.
\end{proof}

\begin{proof}[Proof of Proposition \ref{prop: wishart_good_case}]

Using Lemma \ref{lem: wishart_op_affine_v} it follows for $d\leq n$ that if $M$ solves \eqref{eq: d_leq_n_Ricatti} then $\mathfrak{F}[v] = \lambda$ with $\lambda = \trace{LL'M} + pr_0$. Now, with $D=-M$, \eqref{eq: d_leq_n_Ricatti} takes the form
\begin{equation*}
D\left(2\Lambda(1-q\rho\rho')\Lambda'\right)D  - D(K-q\Lambda\rho\nu'\zeta) - (K-q\Lambda\rho\nu'\zeta)'D - \frac{1}{2}\left(-p(r_1+r_1') + q\zeta'\nu\nu'\zeta\right) = 0.
\end{equation*}
Since the eigenvalues of $\rho\rho'$ are $\rho'\rho$ and $0$, then
\begin{equation*}
2\Lambda(1-q\rho\rho')\Lambda' \geq 2(1-q\rho'\rho)\Lambda\Lambda' > 0.
\end{equation*}
Furthermore, by assumption $-p(r_1+r_1') + q\zeta'\nu\nu'\zeta > 0$.  Thus, the Riccati equation takes the form
\begin{equation}\label{eq: matrix riccati}
D\textbf{B}\textbf{B}'D - D\textbf{A} - \textbf{A}'D - \textbf{C}\textbf{C}' = 0,
\end{equation}
where $\textbf{B} = \sqrt{2\Lambda(1-q\rho\rho')\Lambda'}$, $\bold{A} = K-q\Lambda\rho\nu'\zeta$ and $\textbf{C} = (1/\sqrt{2})\sqrt{-p(r_1+r_1') + q\zeta'\nu\nu'\zeta}$. By \cite[Lemma 2.4.1]{MR1997753}, if there exists matrices $F_1$ and $F_2$ such that $\textbf{A}-\textbf{B}F_1 < 0$ \footnote{Here and in what follows, we write $M<0$ for a given matrix $M\in\md$ with $M+M'<0$.} and $\textbf{A}'-\textbf{C}F_2 < 0$ then there is a unique solution $\hat{M} = -\hat{D}$ to the above such that
\begin{equation}\label{eq: A_B_rel}
\begin{split}
\textbf{A} - \textbf{B}\textbf{B}'\hat{D} &= \textbf{A} + \textbf{B}\textbf{B}'\hat{M}= (K-q\Lambda\rho\nu'\zeta) + 2\Lambda(1-q\rho\rho')\Lambda' \hat{M} < 0.
\end{split}
\end{equation}
Note that $F_1 = \textbf{B}^{-1}\left(\idmat{d} - \textbf{A}\right)$ and $F_2 = \textbf{C}^{-1}\left(\idmat{d} - \textbf{A}'\right)$ are two such matrices. Hence \eqref{eq: matrix riccati} admits a unique solution $\hat{M}$ such that \eqref{eq: A_B_rel} holds.

For $\phi = \hat{v} = \textrm{Tr}(\hat{M}x)$, consider the generator $\cL^{\hat{v}}$ from \eqref{eq: cL_phi_time}, which takes the form
\begin{equation*}
\cL^{\hat{v}} = \frac{1}{2}\sum_{i,j,k,l=1}^d A_{(ij),(kl)}D_{(ij),(kl)} + \sum_{i,j=1}^d\left(\bar{b}_{ij} + \sum_{k,l=1}^d \bar{A}_{(ij),(kl)}\hat{M}_{kl}\right)D_{(ij)}.
\end{equation*}
The drift (i.e. the first order term) above takes the form
\begin{equation*}
\begin{split}
\bar{b}^{ij} &+ \sum_{k,l=1}^d \bar{A}_{(ij),(kl)}\hat{M}_{kl}\\
&= \left(LL' + \left(K-q\Lambda\rho\nu'\zeta + 2\Lambda(1-q\rho\rho')\Lambda'\hat{M}\right)x + x\left(K-q\Lambda\rho\nu'\zeta + 2\Lambda(1-q\rho\rho')\Lambda'\hat{M}\right)'\right)_{ij}\\
&=\left(LL' + (\bold{A}+\bold{B}\bold{B}'\hat{M})x + x(\bold{A}+\bold{B}\bold{B}\hat{M})'\right)_{ij}.
\end{split}
\end{equation*}
Thus, we see that the process $X$ with generator given by $\cL^{\hat{v}}$ is a Wishart process of the form in \eqref{eq: wishart}. Moreover, \eqref{eq: A_B_rel} implies that $K:=\bold{A} + \bold{B}\bold{B}'\hat{M}<0$, hence $X$ is ergodic.  Indeed, $LL' > (d+1)\Lambda\Lambda' > 0$ ensures $X$ does not explode to the boundary of $\sdpos$. Furthermore, consider
\begin{equation*}
u(x) = -\underline{c}\log\left(\det{x}\right) + \overline{c} \norm{x}\eta(\norm{x}),
\end{equation*}
where $\underline{c}, \overline{c}$ are two constants to be determined later, and $\eta(y)$ is a smooth function satisfying $0\leq \eta(y)\leq 1$, $\eta(y) = 1$ for $y>1$ and $0$ for $y<1/2$. Observe that $\lim_{\norm{x}\rightarrow \infty}u(x) =\infty$ and $\lim_{\det(x)\rightarrow 0} u(x) =\infty$, where both limits are uniform as $x$ approaches the boundaries. On the other hand, a calculation similar to that in \cite[Lemmas 5.2 and 5.3]{Robertson-Xing} (with $\bar{\kappa}$ therein equal to $0$) shows the existence of $\underline{c}, \overline{c}, \epsilon>0$ and a sufficiently large sub-domain $E\subset \sdpos$ such that $\cL^{\hat{v}}u(x)\leq -\epsilon$ for all $x\in \sdpos \setminus E$. Therefore \cite[Theorem 6.1.3]{Pinsky} shows that $\prob^{\hat{v}}$ is ergodic. Hence $\hat{v}$ is equal to $\textrm{Tr}(\hat{M}x)$ and $\hat{\lambda} = \textrm{Tr}(LL'\hat{M}) + pr_0$.  This fact follows from \cite[Proposition 2.3]{Robertson-Xing} and \cite[Theorems 2.1,2.2]{Ichihara} which shows the equivalency between $\cL^{\hat{v}}$ being ergodic and $\hat{\lambda}$ being the smallest $\lambda$ with accompanying solution $v$ to $\mathfrak{F}[v]=\lambda$.
\end{proof}

\begin{lem}\label{lem: example_calc}
In the setting of Example \ref{exa: counter-exa}, for $v$ as in \eqref{eq: cand_hatv}, $\mathfrak{F}[v]$ takes the form in \eqref{eq: F_example}.
\end{lem}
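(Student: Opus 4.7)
The plan is to apply the formula \eqref{eq: wishart_fv_affine_d_ge_n} from Lemma \ref{lem: wishart_op_affine_v} directly. Since Example \ref{exa: counter-exa} has $n=1 < 2 = d$, this is the relevant branch, and all of the matrix products appearing on the right hand side can be worked out from the explicit coefficients in \eqref{eq: example_coeffs}. The computation is entirely mechanical; the goal is simply to organize it so the terms group themselves according to the coefficients of $x$, $y$, $z$, and $y^2/x$ in \eqref{eq: F_example}.

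First I would substitute the constants: $\Lambda\Lambda' = \idmat{2}$, $LL' = \ell^2 \idmat{2}$, $K = \idmat{2}$, $\zeta = (1\;0)$, $\nu\in\Real$, and $\Lambda\rho = \rho(1\;1)'$, so that
\[
\Lambda\rho\rho'\Lambda' = \rho^2\begin{pmatrix}1 & 1 \\ 1 & 1\end{pmatrix},
\qquad
\zeta'\nu\rho'\Lambda' = \nu\rho\begin{pmatrix} 1 & 1 \\ 0 & 0\end{pmatrix},
\qquad
\zeta'\nu\nu'\zeta = \nu^2\begin{pmatrix}1 & 0 \\ 0 & 0\end{pmatrix},
\]
and $r_1 + r_1' = 2r_1\idmat{2}$. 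With $M$ as in \eqref{eq: wishart_x_v_ex}, I would then evaluate the trace against $X$ of the bracketed matrix in \eqref{eq: wishart_fv_affine_d_ge_n} one summand at a time. The linear-in-$M$ pieces $K'M + MK$ and $-q\zeta'\nu\rho'\Lambda'M - qM\Lambda\rho\nu'\zeta$ together with the potential $(1/2)(p(r_1+r_1') - q\zeta'\nu\nu'\zeta)$ are symmetric $2\times 2$ matrices whose entries are immediately read off and paired against the $2\times 2$ matrix $X$. The quadratic piece $2M\Lambda\Lambda'M = 2M^2$ contributes the matrix with entries $2(M_1^2 + M_2^2)$, $2M_2(M_1+M_3)$, $2(M_2^2+M_3^2)$, whose trace against $X$ supplies the $M_1^2+M_2^2$, $M_2(M_2+M_3)$ and $M_2^2+M_3^2$ contributions visible in \eqref{eq: F_example}.

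The one nonroutine summand is the last one, $-2q\,\trace{x\zeta'(\zeta x \zeta')^{-1}\zeta x M\Lambda\rho\rho'\Lambda'M}$. Here I would use the identity \eqref{eq: wishart_counter_ex_Theta}, which was already verified in the text, to replace $\sqrt{X}\Theta(X)\sqrt{X} = X\zeta'(\zeta X\zeta')^{-1}\zeta X$ by the explicit matrix $\begin{pmatrix} x & y \\ y & y^2/x\end{pmatrix}$. Combined with $M\Lambda\rho = (M_1+M_2,\,M_2+M_3)'\rho$, the trace in question evaluates to $2q\rho^2\bigl(x(M_1+M_2)^2 + 2y(M_1+M_2)(M_2+M_3) + (y^2/x)(M_2+M_3)^2\bigr)$, which accounts for \emph{all} of the $-q\rho^2$ corrections in \eqref{eq: F_example} and, crucially, for the $-2q\rho^2(M_2+M_3)^2\cdot y^2/x$ term that prevents $v$ from being affine. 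Finally $\trace{LL'M} = \ell^2(M_1+M_3)$ gives the constant piece.

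The main ``obstacle'' is thus nothing more than bookkeeping: one must be careful to keep track of the symmetry of $M$ and $X$ when expanding $\trace{X\cdot(\ldots)}$ (so that off-diagonal entries are counted twice), and to handle the $\Theta$-term via \eqref{eq: wishart_counter_ex_Theta} rather than trying to use the $d\le n$ simplification. Collecting like terms in $x$, $y$, $z$, $y^2/x$ then produces \eqref{eq: F_example} exactly.
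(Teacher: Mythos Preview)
Your proposal is correct and follows essentially the same route as the paper's proof: both apply \eqref{eq: wishart_fv_affine_d_ge_n} from Lemma \ref{lem: wishart_op_affine_v}, expand the bracketed matrix against $X$ entry by entry, handle the non-affine summand via the explicit form \eqref{eq: wishart_counter_ex_Theta} of $\sqrt{X}\Theta(X)\sqrt{X}$, and add the constant $\trace{LL'M}+pr_0$. One tiny slip: the off-diagonal entry of $2M^2$ is $2M_2(M_1+M_3)$, not $2M_2(M_2+M_3)$, so that is the contribution you should record for the $y$-coefficient before collecting terms.
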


\begin{proof}

$\mathfrak{F}[v]$ is given in \eqref{eq: wishart_fv_affine_d_ge_n} of Lemma \ref{lem: wishart_op_affine_v}.  Specifying to the example coefficients and using the representation for $X$,$M$ from \eqref{eq: wishart_x_v_ex}:
\begin{equation*}
\begin{split}
&2M\Lambda\Lambda'M + K'M + MK -q\zeta'\nu\rho'\Lambda'M - qM\Lambda\rho\nu'\zeta + \frac{1}{2}\left(p(r_1+r_1') - q\zeta'\nu\nu'\zeta\right)\\
&\qquad = 2M^2 + 2M - q\rho\nu\left(\begin{array}{c c} 1 & 1\\ 0 & 0\end{array}\right)M - q\rho\nu M\left(\begin{array}{c c} 1 & 0\\ 1& 0\end{array}\right) + pr_1\left(\begin{array}{c c} 1&0\\0&1\end{array}\right) - \frac{1}{2}q\nu^2\left(\begin{array}{c c} 1 & 0 \\ 0 & 0\end{array}\right),\\
&\qquad = 2\left(\begin{array}{c c} M_1^2+M_2^2 & M_2(M_1+M_3)\\ M_2(M_1+M_3) & M_2^2 + M_3^2\end{array}\right) + 2\left(\begin{array}{c c} M_1 & M_2 \\ M_2 &  M_3\end{array}\right) - q\rho\nu\left(\begin{array}{c c} M_1+M_2 & M_2 + M_3\\ 0 & 0 \end{array}\right),\\
&\qquad\qquad - q\rho\nu\left(\begin{array}{c c} M_1+M_2 & 0 \\ M_2 + M_3 & 0\end{array}\right) + pr_1\left(\begin{array}{c c} 1 & 0 \\ 0 & 1\end{array}\right) - \frac{1}{2}q\nu^2\left(\begin{array}{c c} 1 & 0 \\ 0 & 0\end{array}\right),\\
&\qquad = \left(\begin{array}{c c} 2(M_1^2+M_2^2) + 2M_1 - 2q\rho\nu(M_1+M_2) + pr_1 - \tfrac{1}{2}q\nu^2 & 2M_2(M_1+M_3) + 2M_2 - q\rho\nu(M_2+M_3)\\ 2M_2(M_1+M_3) + 2M_2 -q\rho\nu(M_2+M_3) & 2(M_2^2+M_3^2)+2M_3 + pr_1\end{array}\right).
\end{split}
\end{equation*}
Thus,
\begin{equation}\label{eq: wishart_counterex_good}
\begin{split}
&\trace{X\left(2M\Lambda'\Lambda'M + K'M + MK -q\zeta'\nu\rho'\Lambda'M - qM\Lambda\rho\nu'\zeta + \frac{1}{2}\left(p(r_1+r_1') - q\zeta'\nu\nu'\zeta\right)\right)}\\
&\qquad = x\left(2(M_1^2+M_2^2) + 2M_1 - 2q\rho\nu(M_1+M_2) + pr_1 - (1/2)q\nu^2\right)\\
&\qquad\qquad + y\left(4M_2(M_1+M_3) + 4M_2 -2q\rho\nu(M_2+M_3)\right)\\
&\qquad\qquad + z\left(2(M_2^2+M_3^2)+2M_3 + pr_1\right).
\end{split}
\end{equation}
Now, as for the non-constant term on the second line of \eqref{eq: wishart_fv_affine_d_ge_n}, from \eqref{eq: wishart_counter_ex_Theta} we have
\begin{equation}\label{eq: wishart_counterex_bad}
\begin{split}
&-2q\trace{X\zeta'(\zeta X \zeta')^{-1}\zeta X M \Lambda\rho\rho'\Lambda'M}\\
&\qquad = -2q\rho^2\trace{\left(\begin{array}{c c} x &y \\ y &y^2/x\end{array}\right)M\left(\begin{array}{c c} 1&1\\1&1\end{array}\right)M},\\
&\qquad = -2q\rho^2\trace{\left(\begin{array}{c c} x& y\\ y & y^2/x\end{array}\right)\left(\begin{array}{c c} (M_1+M_2)^2 & (M_1+M_2)(M_2+M_3)\\ (M_1+M_2)(M_2+M_3) & (M_2+M_3)^2\end{array}\right)},\\
&\qquad = x\left(-2q\rho^2(M_1+M_2)^2\right)+y\left(-4q\rho^2 (M_1+M_2)(M_2+M_3)\right)+\frac{y^2}{x}\left(-2q\rho^2(M_2+M_3)^2\right).
\end{split}
\end{equation}
Since $\trace{LL'M} + pr_0 = \ell^2(M_1+M_3) + pr_0$, \eqref{eq: F_example} follows from \eqref{eq: wishart_counterex_good} and \eqref{eq: wishart_counterex_bad}.
\end{proof}

\section{Remaining Proofs from Section \ref{sec: converge}}\label{app: B}

\begin{proof}[Proof of Theorem \ref{thm: power}]
 Under Assumptions of Theorem \ref{thm: power}, Statement \ref{stat: long hor} part i) is proved in \cite[Theorems 2.11 and 3.9]{Robertson-Xing}. Note that $\nabla h = \nabla v- \nabla \hat{v}$, part ii) follows from $\nabla h(T, \cdot)\rightarrow 0$ in part i) and the form of $\pi$ in \eqref{eq: pi_v_map}.

 To prove part iii), let us collect two facts from \cite{Robertson-Xing}. First \cite[Proposition 2.3 i)]{Robertson-Xing} implies that $\prob^{\hat{v},x}$, as the solution to the martingale problem for $\cL^{\hat{v}}$, is a well defined probability measure. Therefore discussion after \eqref{eq: Z_phi} proves that $\prob^{\hat{v}, x}$ is equivalent to $\prob^x$ on $\F_t$ for any $t\geq 0$. Second,
\begin{equation}\label{eq: hatv_quad_var_lim}
 \lim_{T\rightarrow \infty} \expec^{\prob^{\hat{v},x}}\bra{\int_0^t
    \sum_{i,j,k,l=1}^d D_{(ij)}h\bar{A}_{(ij),(kl)}D_{(kl)} h (T-u,X_u)\,du} =  0.
\end{equation}
Indeed,
since the integrand in \eqref{eq: hatv_quad_var_lim} is independent of the Brownian motion $W$, \eqref{eq: hatv_quad_var_lim} is proved in \cite[Theorems 2.9 and 3.9]{Robertson-Xing}.

Let us use the previous two facts to prove \eqref{eq: dist strategy} first. To this end, using \eqref{eq: pi_v_map}, we obtain in either cases $m\geq n$ or $m<n$,
 \begin{equation*}
\begin{split}
&\left(\pi(T-t,x;v) - \pi(x;\hat{v})\right)'\Sigma(x)\left(\pi(T-t,x;v) -
  \pi(x;\hat{v})\right),\\
&\qquad  = \frac{1}{(1-p)^2}\left(\sum_{i,j,k,l=1}^d D_{(ij)}h
\rho'(a^{ij})'C'\Theta C a^{kl}\rho D_{(kl)}h\right)(T-t,x),\\
&\qquad \leq \frac{1}{(1-p)^2}\left(\sum_{i,j,k,l=1}^d D_{(ij)}h
\trace{a^{ij}(a^{kl})'} D_{(kl)}h\right)(T-t,x),\\
&\qquad \leq \frac{1}{\uk(1-p)^2}\left(\sum_{i,j,k,l=1}^d
  D_{(ij)}h\bar{A}_{(ij),(kl)} D_{(kl)}h\right)(T-t,x),
\end{split}
\end{equation*}
where the first inequality follows from \eqref{eq: bar_A_to_A_compare_5} and the second inequality follows from the first inequality in \eqref{eq: A_barA_compare}. Then \eqref{eq: hatv_quad_var_lim} yields
\[
 \lim_{T\rightarrow \infty} \expec^{\prob^{\hat{v},x}}\bra{\int_0^t\left(\pi^T_u - \hat{\pi}_u\right)'\Sigma(X_u)\left(\pi^T_u -
  \hat{\pi}_u\right)\, du}=0.
\]
This implies the convergence in probability $\prob^{\hat{v},x}$, hence in $\prob^x$, since $\prob^{\hat{v},x}$  is equivalent to $\prob^x$ on $\F_t$.

To prove \eqref{eq: ratio wealth power},
 apply the first identity of \eqref{eq: prop_1_1}, where we choose $\phi=v$ from Proposition \ref{prop: v wellposed} and $\pi= \pi^T$ from \eqref{eq: opt_strat_T}. Taking difference of this identity when $t=t$ and $t=0$ respectively yields
 \[
  \pare{\frac{\W^T_t}{w}}^p = Z^{v, T}_t e^{v(T, x) - v(T-t, X_t)}.
 \]
 On the other hand, apply the first identity of \eqref{eq: prop_1_1} again, but choose $\pi= \hat{\pi}$ from \eqref{eq: opt_strat} and $\phi(t,x)= \hat{\lambda} t + \hat{v}(x)$, where $(\hat{\lambda}, \hat{v})$ comes from Proposition \ref{prop: ergodic wellposed} and the current choice of $\phi$ satisfies $\phi_t = \fF[\phi]$ due to \eqref{eq: v_ergodic}. Taking difference of this identity when $t=t$ and $t=0$ respectively, we obtain
 \[
  \pare{\frac{\hat{\W}_t}{w}}^p = Z^{\hat{v}}_t e^{\hat{\lambda} T+ \hat{v}(x) - \hat{\lambda}(T-t) - \hat{v}(X_t)}.
 \]
 Therefore, the ratio between the previous two identities reads
 \begin{equation}\label{eq: ratio wealth prowr resp}
  \frac{\W^T_t}{\hat{\W}_t} = \pare{\frac{Z^{v,T}_t}{Z^{\hat{v}}_t} e^{h(T, x) - h(T-t, X_t)}}^{\frac1p},
 \end{equation}
 where $h$ is defined in Statement \ref{stat: long hor} part i).
 It has been proved in part i) that $h(T, \cdot)\rightarrow C$ for some constant $C$. Therefore $e^{h(T,x) - h(T-t,X_t)} \rightarrow 1$ a.s. as $T\rightarrow \infty$.
 In the next paragraph, we will show
 \begin{equation}\label{eq: ratio Z}
  \prob^{\hat{v}, x}-\lim_{T\rightarrow \infty} \frac{Z^{v,T}_t}{Z^{\hat{v}}_t} =1.
 \end{equation}
 Plugging the previous two convergence back into \eqref{eq: ratio wealth prowr resp}, it follows \[\prob^{\hat{v},x}-\lim_{T\rightarrow \infty} \frac{\W^T_t}{\hat{\W}_t} =1.\]
 Recall from Remark \ref{rem: numeraire} that $\W^T/\hat{\W}$ is a $\prob^{\hat{v},x}$-supermartingale. Combining the previous convergence with Scheff\'{e}'s  lemma, we obtain
 \[
  \lim_{T\rightarrow \infty} \expec^{\prob^{\hat{v},x}}\bra{\left|\frac{\W^T_t}{\hat{\W}_t}-1\right|}=0,
 \]
 Applying \cite[Lemma 3.9]{guasoni.al.11} under $\prob^{\hat{v},x}$, the previous convergence then yields
 \[
  \prob^{\hat{v},x}-\lim_{T\rightarrow \infty} \sup_{0\leq u\leq t}\left|\frac{\W^T_u}{\hat{\W}_u}-1\right|=0.
 \]
 Hence \eqref{eq: ratio wealth power} is confirmed after utilizing the equivalence between $\prob^{\hat{v},x}$ and $\prob^x$.

 It remains to prove \eqref{eq: ratio Z}. To this end, using \eqref{eq: Z_phi} for $v$ and $\hat{v}$, and the definition of $h$, it follows that $Z^{v,T}_t/Z^{\hat{v}}_t = \mathcal{E}(L^T)_t$, where the $\prob^{\hat{v}, x}$-local martingale $L^T$ takes the form
 \begin{equation*}
\begin{split}
L^T_t = &\int_0^t \sum_{k,l=1}^d d\hat{B}^{kl}_u\left(\sum_{i,j=1}^d \left(a^{ij}_{kl} - q(C'\Theta C a^{ij}\rho)_k\rho_l\right)D_{(ij)}h \right)(T-u,X_u)\\
& + \int_0^t \sum_{k=1}^m d\hat{W}^k_u\left(-q\sum_{i,j=1}^d (D'\Theta C a^{ij}\rho)_k D_{(ij)}h\right) (T-u,X_u),\qquad t\leq T,
\end{split}
\end{equation*}
where $\hat{B}$ and $\hat{W}$ are $\prob^{\hat{v},x}$ independent $\md$ and $\reals^m$ dimensional Brownian motions. Calculation using $\rho'\rho CC' + DD' = 1_m$ and $\Theta\Theta = \Theta$ shows that
\begin{equation*}
[L^T,L^T]_t = \int_0^t \left(\sum_{i,j,k,l=1}^d D_{(ij)}h\left(\bar{A}_{(ij),(kl)} - q(1-q)\rho'(a^{ij})'C'\Theta C a^{kl}\rho\right)D_{(kl)}h\right)(T-u,X_u)du.
\end{equation*}
Using \eqref{eq: bar_A_to_A_compare_5} at $\theta = D h\in\sd$ it follows for $p < 0$ ($0<q<1$) that
\begin{equation*}
[L^T,L^T]_t \leq  \int_0^t \left(\sum_{i,j,k,l=1}^d D_{(ij)}h \bar{A}_{(ij),(kl)}D_{(kl)}h\right)(T-u,X_u)du,
\end{equation*}
and for $0 < p < 1$ ($q<0$) that
\begin{equation*}
\begin{split}
[L^T,L^T]_t &\leq  \int_0^t \left(\sum_{i,j,k,l=1}^d D_{(ij)}h\left(\bar{A}_{(ij),(kl)} - q(1-q)\trace{a^{ij}(a^{kl})'}\right)D_{(kl)}h\right)(T-u,X_u)du\\
&\leq  \pare{1-\frac{q(1-q)}{\uk}}\int_0^t \left(\sum_{i,j,k,l=1}^d D_{(ij)}h\bar{A}_{(ij),(kl)}D_{(kl)}h\right)(T-u,X_u)du
\end{split}
\end{equation*}
where the last inequality uses Lemma \ref{lem: barA}. From \eqref{eq: hatv_quad_var_lim} it thus follows that
\begin{equation*}
\lim_{T\uparrow\infty} \expec^{\prob^{\hat{v},x}}\bra{[L^T,L^T]_t} = 0,
\end{equation*}
which implies $\prob^{\hat{v},x}-\lim_{T\rightarrow \infty}[L^T,L^T]_t=0$. Combining the previous convergence and the fact that $L^T$ is continuous local martingales, it follows $\prob^{\hat{v},x}-\lim_{T\rightarrow \infty} \mathcal{E}(L^T)_t=1$, hence \eqref{eq: ratio Z} holds.
\end{proof}

\begin{proof}[Proof of Theorem \ref{thm: turnpike}]
Given results in \cite[Theorems 2.9 and 3.9]{Robertson-Xing}, the statement follows from the same argument in \cite[Theorem 2.9]{guasoni.al.11}. We now check that the assumptions in \cite{guasoni.al.11} are satisfied in the current setting. First, for each $T>0$, there exists a probability measure $\qprob^{T,x}$ such that $\qprob^{T,x}$ is equivalent to $\prob^x$ on $\F_T$ and such that $e^{-\int_0^\cdot r(X_u)du}S$ is a $\qprob^{T,x}$-local martingale on $[0,T]$. Indeed, let $\theta : \sdpos\mapsto \reals^k$ be a continuous function and set
\begin{equation*}
Z_t = \mathcal{E}\left(-\int_0^\cdot \sum_{k=1}^d \theta_k (X_u)dW^k_u\right)_t,
\end{equation*}
The continuity of $\theta$ and the $\prob$ independence of $X$ and $W$ ensure that $Z$ is also a $\prob^x$-martingale, cf. \cite[Lemma 4.8]{Karatzas-Kardaras}. Under Assumption \ref{ass: rho_strong} we may choose
$\theta = D'(DD')^{-1}\sigma'\nu$,
and it follows that $\theta$ is continuous.  Since $Z$ is a $\prob^x$-martingale, for each $T$ we may define a probability $\qprob^{T,x}$, which is equivalent to $\prob^x$ on $\F_T$, via $d\qprob^{T,x}/d\prob^x |_{\F_T} = Z_T$. Using Girsanov's theorem, a direct calculation shows that $e^{-\int_0^\cdot r(X_u) du}S$ is $\qprob^{T,x}$-local martingale. Therefore \cite[Assumption 2.3]{guasoni.al.11} is satisfied. On the other hand, Propositions \ref{prop: v wellposed} and \ref{prop: verification} combined implies that the value of the optimization problem in \eqref{eq: power op} is finite for all $T\geq 0$. Therefore \cite[Assumption 2.4]{guasoni.al.11} is satisfied as well. On the other hand, Assumptions \ref{ass: ratio} and \ref{ass: grow} are exactly \cite[Assumptions 2.1 and 2.2]{guasoni.al.11} respectively.

Therefore \cite[Proposition 2.5]{guasoni.al.11} proves that, for  all $\eps > 0$,
\begin{equation}\label{eq: PT_turnpikes}
\begin{split}
\lim_{T\uparrow\infty}& \prob^{v,T,x}\bra{ \sup_{u\leq t}\left|\frac{\W^{1,T}_u}{\W^T_u} - 1\right| \geq \eps} = 0,\\
\lim_{T\uparrow\infty}& \prob^{v,T,x}\bra{ \int_0^t \left(\pi^{1,T}_u - \pi^{T}_u\right)'\Sigma(X_u)\left(\pi^{1,T}_u - \pi^{T}_u\right) du \geq \eps} = 0.
\end{split}
\end{equation}
Here since the martingale problem for $\cL^{v,T-\cdot}$ is well-posed, cf. \cite[Lemma 4.1]{Robertson-Xing}, $\prob^{T,v,x}$ is defined via \eqref{eq: Z_phi} with $\phi=v$. From the definitions of $\prob^{v,T,x}$ and $\prob^{\hat{v},x}$, it follows
\[
 \left.\frac{d\prob^{v,T,x}}{d\prob^{\hat{v},x}}\right|_{\F_t} = \frac{Z^{v,T}_t}{Z^{\hat{v}}_t}.
\]
Note that both events on the left-hand-side of \eqref{eq: PT_turnpikes} are $\F_t$-measurable. Therefore, \eqref{eq: ratio Z} implies \eqref{eq: PT_turnpikes} holds when $\prob^{v,T,x}$ is replaced by $\prob^{\hat{v},x}$, hence also by $\prob^x$, since $\prob^{\hat{v},x}$ and $\prob^x$ are equivalent on $\F_t$. Lastly, the extension to Statement \ref{stat: turnpike} is immediate after utilizing Statement \ref{stat: long hor} part iii).
\end{proof}

\begin{proof}[Proof of Proposition \ref{prop: wishart}]
 Let us verify Assumption \ref{ass: coeff_master_list} is satisfied under the parameter restrictions of this proposition. Then the statements readily follow from Theorems \ref{thm: power} and \ref{thm: turnpike}. First, for the Wishart factor model described in Section \ref{subsec: wishart}:
 \begin{align*}
  & V(x) = pr_0 + \frac12 \trace{\pare{x(p(r_1+r_1') -q \zeta' \nu \nu' \zeta(x))}},\\
  & \overline{b}(x) = LL' + \overline{K}(x) x+ x \overline{K}(x)',
 \end{align*}
 where $\overline{K} = K-q \Lambda \rho \nu' \zeta(x)$. Since $\rho, \nu, \zeta$ are bounded, it is clear that $\overline{b}$ has at most linear growth. We have seen from Example \ref{ex: wishart} that $f(x)=x$ and $g(x)= \Lambda \Lambda'$. Then $\trace{f(x)}\trace{g(x)} =\trace{x} \trace{\Lambda\Lambda'}\leq \sqrt{d} \trace{\Lambda \Lambda'} \norm{x}$. In particular, $\alpha_1$ in Assumption \ref{ass: coeff_master_list} part 2) can be chosen as $\sqrt{d} \trace{\Lambda \Lambda'}$. To see the previous inequality, let $(\lambda_i)_{i=1, \dots, d}$ be eigenvalues of $x$, then Cauchy-Schwarz inequality yields $\trace{x} = \sum_{i=1}^d \lambda_i \leq \sqrt{d} (\sum_{i=1}^d \lambda_i^2)^\frac12 = \sqrt{d} \norm{x}$. To verify Assumption \ref{ass: coeff_master_list} part 3), we choose $-\beta_1$ to be larger than any largest eigenvalue of $(\overline{K}+\overline{K}')(x)$ for $x\in \sdpos$. Since $\overline{K}(x)$ is bounded on $\sdpos$, its largest eigenvalue is uniformly bounded on $\sdpos$. Move on to Assumption \ref{ass: coeff_master_list} part 4). When $0<p<1$, $q<0$, then
 \[
  -|p r_0| - \frac12 \norm{p(r_1+ r_1') - q \zeta' \nu \nu' \zeta(x)}\norm{x} \leq V(x) \leq |p r_0| + \frac12 \norm{p(r_1+ r_1') - q \zeta' \nu \nu' \zeta(x)}\norm{x}, \quad x\in \sdpos.
 \]
 Hence we can choose $-\gamma_1 = \gamma_2 =  (1/2)\sup_{x\in \sdpos} \norm{p(r_1+ r_1') - q \zeta' \nu \nu' \zeta(x)}$. When $p<0$, $q>0$, then
 \[
  -|p r_0| - \frac12 \norm{p(r_1+ r_1') - q \zeta' \nu \nu' \zeta(x)}\norm{x} \leq V(x) \leq |p r_0| - \lambda_{min}(x) \norm{x},
 \]
 where $\lambda_{min}(x)$ is the smallest eigenvalue of $(1/2)(-p(r_1+ r_1') + q \zeta' \nu \nu' \zeta(x))$. Hence we can choose the same $\gamma_2$ as above, but $\inf_{x\in \sdpos} \lambda_{min}(x)$ as $\gamma_1$. Therefore Assumption \ref{ass: coeff_master_list} part 4) is verified.

 Let us now check part 5). When $p<0$, because $r_1+r_1'\geq 0$ and $\zeta' \nu \nu' \zeta\geq 0$, $\lambda_{min}(x)\geq 0$ for any $x\in \sdpos$, then $\gamma_1 \geq 0$. When $(\overline{K} + \overline{K}')(x)\leq -\epsilon \idmat{d}$ for any $x\in \sdpos$, $\beta_1 >0$, hence part 5)-iii) is satisfied. When $-p(r_1 + r_1') + q(\zeta' \nu \nu' \zeta(x)) \geq \epsilon \idmat{d}$ for any $x\in \sdpos$, $\gamma_1>0$, hence we are in part 5)-i). In such a case, $\trace{f(x) x g(x) x} = \trace{x^3 \Lambda \Lambda'} \geq \alpha_2 \norm{x}^3$ for some $\alpha_2>0$, where the inequality holds due to $\Lambda \Lambda' >0$. When $0<p<1$, then $\gamma_1 = -(1/2)\sup_{x\in \sdpos} \norm{p(r_1+ r_1') - q \zeta' \nu \nu' \zeta(x)}<0$. Recall $\ok=1-q$ from Lemma \ref{lem: barA} and $\alpha_1= \sqrt{d} \trace{\Lambda \Lambda'}$ from part 2), then \eqref{eq: p>0 cond} is equivalent to $\beta_1^2 + 16 \ok \alpha_1 \gamma_1>0$ from part 5)-ii). Therefore, Assumption \ref{ass: coeff_master_list} part 5) is satisfied as well.

Finally, let us verify part A)-C). For A), calculation shows that
\begin{equation*}
H_\epsilon(x; \overline{b}) = \trace{(LL' - (1+d + \epsilon) \Lambda \Lambda')x^{-1}} + 2 \trace{\overline{K}(x)}.
\end{equation*}
Then $LL' > (d+1)\Lambda \Lambda'$ ensures the existence of $\epsilon>0$ such that $LL'-(1+d+\epsilon)\Lambda \Lambda' > 0$. Hence the previous inequality and the assumption that $\overline{K}$ is bounded on $\sdpos$ implies that $\inf_{x\in \sdpos} H_\epsilon(x; \overline{b})>-\infty$. As for B), part A) implies the existence of $\delta>0$ such that $H_\epsilon(x; \overline{b}) \geq \delta \trace{x^{-1}} + 2\trace{\overline{K}(x)}$. Observe that, for any $c_0>0$, $\delta \trace{x^{-1}} + c_0 \log(\det{x})\rightarrow \infty$ as $\det{x}\downarrow 0$. Then part B) is confirmed. Lastly, for part C), there exist $\delta, C>0$ such that $H_0(x, \overline{b}) + c_1 V(x) \geq \delta \trace{x^{-1}} - \gamma_2 \norm{x} + C$, which goes to $\infty$ as $\det{x} \downarrow 0$. This concludes verification of all parameter restrictions in Assumption \ref{ass: coeff_master_list}.
\end{proof}

\bibliographystyle{siam}
\bibliography{biblio}

\end{document}